\newcommand{\R}{\mathbb{R}}
\definecolor{rev1}{rgb}{0,0,1}
\newcommand{\revv}[1]{#1} 
\newcommand{\rev}[1]{#1}
\newcommand{\snote}[1]{} 
\newtheorem{thm}{Theorem}
\newtheorem{prop}{Proposition}
\newtheorem{deff}{Definition}
\newtheorem{expl}{Example}
\newcommand{\poly}{\mbox{{\rm poly}}}
\newcommand{\PP}{\mbox{\rmfamily\textsc{P}}} 
\newcommand{\FP}{\mbox{\rmfamily\textsc{FP}}} 
\newcommand{\NP}{\mbox{\rmfamily\textsc{NP}}} 
\newcommand{\coNP}{\mbox{\rmfamily\textsc{coNP}}} 
\newcommand{\FNP}{\mbox{\rmfamily\textsc{FNP}}} 
\newcommand{\TFNP}{\mbox{\rmfamily\textsc{TFNP}}}
\newcommand{\BQP}{\mbox{\rmfamily\textsc{BQP}}} 
\newcommand{\coBQP}{\mbox{\rmfamily\textsc{coBQP}}} 
\newcommand{\FBQP}{\mbox{\rmfamily\textsc{FBQP}}} 
\newcommand{\QMA}{\mbox{\rmfamily\textsc{QMA}}} 
\newcommand{\coQMA}{\mbox{\rmfamily\textsc{coQMA}}} 
\newcommand{\FQMA}{\mbox{\rmfamily\textsc{FQMA}}} 
\newcommand{\TFQMA}{\mbox{\rmfamily\textsc{TFQMA}}} 
\newcommand{\LTHREE}{\mbox{\rmfamily\textsc{L3}}} 
\newcommand{\LTWO}{\mbox{\rmfamily\textsc{L2}}}
\newcommand{\MA}{\rm MA}
\newcommand{\coMA}{\rm coMA}
\newcommand{\FMA}{\rm FMA}
\newcommand{\TFMA}{\rm TFMA}
\newcommand{\BPP}{\rm BPP}
\newcommand{\FBPP}{\rm FBPP}
\newcommand{\HH}{\mathcal{H}}
\newcommand{\suppress}[1]{}
\newcommand{\SPECT}{\mbox{\rmfamily\rm{Spect}}}
\newcommand{\SPAN}{\mbox{\rmfamily\rm{Span}}}
\def\N{{\mathbb N}}
\def\Tr{\mathrm{Tr}}
\begin{document}

\title{Characterizing the intersection of QMA and coQMA.}
\date{\today}

\author{Serge Massar}
\affiliation{Laboratoire d'Information Quantique CP224, Universit\'e libre de Bruxelles, B-1050 Brussels, Belgium.}
\email{smassar@ulb.ac.be}
\orcid{0000-0002-4381-2485}
\author{Miklos Santha}
\affiliation{CNRS, IRIF, Universit\'e Paris Diderot, 75205 Paris, France.}
\affiliation{Centre for Quantum Technologies \& MajuLab, National
University of Singapore, Singapore.}

\maketitle

\begin{abstract}
We show that the functional analogue of $\QMA \cap \coQMA$, denoted ${\rm F}(\QMA \cap \coQMA)$, equals the complexity class  Total Functional $\QMA$ ($\TFQMA$). To prove this we need to introduce an alternative definition of $\QMA \cap \coQMA$ in terms of a single quantum verification procedure.
We  show that if $\TFQMA$ equals the functional analogue of $\BQP$ ($\FBQP$), then $\QMA \cap \coQMA = \BQP$. We show that if there is a $\QMA$ complete problem that (robustly) reduces to a problem in $\TFQMA$, then $\QMA \cap \coQMA = \QMA$.  Our results thus imply that if some of the inclusions between functional classes
$\FBQP\subseteq \TFQMA \subseteq \FQMA$ are in fact equalities,  then the corresponding inclusions in
$\BQP\subseteq \QMA \cap \coQMA \subseteq \QMA$ are also equalities.
\end{abstract}

\section{Introduction}\label{Intro}

Functional  $\NP$ ($\FNP$) is 
the class of search problems defined by polynomial time relations $M(x,y)$ (the verifier) where the length of
$y$ is polynomial in the length of $x$. On an input $x$ the task is to find a witness $y$ (if it exists) such that 
$M$ accepts $(x,y)$.

The complexity class Total Functional $\NP$ ($\TFNP$), introduced in \cite{MP91}, is the subset of $\FNP$ for which it can be shown that for all inputs $x$, there exists at least one witness $y$.
It lies between Functional $\PP$ ($\FP$) 
(the subclass of $\FNP$ for which a witness can be found in polynomial time), and $\FNP$.

$\TFNP$ contains many natural and important problems, including factoring, local search problems\cite{JPY88,PSY90,Kr89}, computational versions of Brouwer's fixed point theorem\cite{P94}, finding Nash equilibria\cite{DGP09,CDT09}. Although there probably do not exist complete problems for $\TFNP$,  there are many syntactically defined subclasses of $\TFNP$ that contain complete problems, and for which some of the above natural problems can be shown to be complete. 
For recent work in this direction, see \cite{GP17}.

One of the results of the founding paper \cite{MP91} concerns the relation between 
Total Functional $\NP$ and  other complexity classes. The inclusions
\begin{equation}
\FP \subseteq \TFNP \subseteq \FNP
\label{Eq:SubseteqTFNP}
\end{equation}
are obvious. But what would be the consequences if some of these inclusions were not strict, but replaced by equality?

In \cite{MP91} this question is connected with the inclusions
\begin{equation}
\PP \subseteq \NP \cap \coNP \subseteq \NP\ .
\label{Eq:SubseteqNPcoNP}
\end{equation}
It  was first shown in~\cite{MP91} that Total Functional $\NP$ equals the functional analogue of $\NP\cap \coNP$ denoted ${\rm F}(\NP \cap \coNP)$:
\begin{equation}
\TFNP = {\rm F}(\NP \cap \coNP)\ ;
\label{Eq:TFNP-basic}
\end{equation}

Furthermore,  it is proven in~\cite{MP91} that an inclusion in Eq. \eqref{Eq:SubseteqTFNP}
is strict if and only if the corresponding inclusion in Eq. \eqref{Eq:SubseteqNPcoNP} is strict.
\rev{ The  inclusions in Eq. \eqref{Eq:SubseteqNPcoNP}  are believed to be strict,
as there are problems, such as factoring, that belong to $\NP\cap \coNP$ but 
 are believed to not belong to $\PP$, and there are problems such as 3-SAT that belong to $\NP$ but are thought not to belong to $\coNP$.
This result therefore provides strong evidence }  that the 
 inclusions in Eq. \eqref{Eq:SubseteqTFNP} are strict. The proofs of these results are very simple, and take 
 only one paragraph, or are even just implicit in \cite{MP91}.

The quantum analogue of $\NP$ is $\QMA$ \cite{KSV02}. $\QMA$ has been extensively studied, and contains a rich set of complete problems, see e.g. \cite{B12}.
Functional $\QMA$, the problem of producing a quantum state that serves as witness for a $\QMA$ problem was first introduced in \cite{JWB03}.

In a recent work  \cite{MS18} 
we 
 introduced  the complexity class Total Functional $\QMA$ ($\TFQMA$), the subset of $\FQMA$ for which it can be shown that for all inputs $x$, there exists at least one witness $\vert \psi \rangle$. $\TFQMA$ is an expressive class containing several interesting problems.
However in~ \cite{MS18}
the analog of the complexity results of \cite{MP91} for $\TFQMA$ where left open. Here we prove 
these results.

The analogs of Eqs. \eqref{Eq:SubseteqTFNP} and \eqref{Eq:SubseteqNPcoNP} are
\begin{equation}
\FBQP \subseteq \TFQMA \subseteq \FQMA
\label{Eq:SubseteqTFQMA}
\end{equation}
and
\begin{equation}
\BQP \subseteq \QMA \cap \coQMA \subseteq \QMA\ .
\label{Eq:SubseteqQMAcoQMA}
\end{equation}

We first show that $\TFQMA$  equals the
functional analogue of $\QMA \cap \coQMA$:
\begin{equation}
\TFQMA = {\rm F}(\QMA \cap \coQMA)\ .
\label{Eq:TFQMA-basic}
\end{equation}

We then show that if $\FBQP = \TFQMA$, then $\BQP = \QMA \cap \coQMA$.
We finally show that if there is a $\QMA$ complete problem that reduces (using a slightly stronger notion of reduction than the natural one, which we call robust reduction) to a problem in
$\TFQMA$, then $\QMA \cap \coQMA = \QMA$.

But while the proofs in the classical case are elementary, the
quantum proofs are more delicate.

 Showing  that $\TFQMA ={\rm F}(\QMA \cap \coQMA)$ requires a detailed enquiry into what is the correct  definition of  ${\rm F}(\QMA \cap \coQMA)$.
The  difficulty is that 
any language $L$ in $\QMA \cap \coQMA$ is naturally defined by  two quantum verification procedures $Q$ and $Q'$.  But these two quantum verification procedures do not necessarily commute.  Therefore it is not clear how to define a witness for $\QMA \cap \coQMA$. Indeed  given an input $x$ and a state $\vert \psi\rangle$ one cannot test  whether both $Q(x,\vert \psi \rangle)$ and $Q'(x,\vert \psi \rangle)$ accept, since the act of carrying out one of these tests will modify the state  $\vert \psi\rangle$ and render impossible the other test.

We will see that the solution to this conundrum is to append to the state a bit $z\in \{0,1\}$, with the convention that if $z=0$ one only tests $Q$, and if $z=1$ one only tests $Q'$. Thus a witness for $\QMA \cap \coQMA$ has the form $\vert z \rangle \vert \psi\rangle$.

However, while this solution is natural, it is not clear whether it is the unique way to solve this problem. In order to address this, we need a deeper understanding of $\QMA \cap \coQMA$. 
To this end we introduce two alternative definitions of  $\QMA \cap \coQMA$. The more
important is a definition in terms of a single 2-outcome quantum verification procedure which takes as input a single witness.
This  definition is particularly useful because it allows us to apply to  $\QMA \cap \coQMA$ the notions of of eigenbasis, spectrum, and eigenspace of a quantum verification procedure which we introduced in \cite{MS18} (based on the earlier work of Marriott and Watrous\cite{MW05}).
These notions are then used to provide a natural definition for ${\rm F}(\QMA \cap \coQMA)$ which parallels the definition for $\FQMA$ given in  \cite{MS18}. 

Furthermore, by extending the 
notion of QMA amplification as developed in \cite{MW05}, we  show that our definition of ${\rm F}(\QMA \cap \coQMA)$ is independent of the completeness and soundness bounds.

Together these results allow us to define precisely the right hand side of Eq. \eqref{Eq:TFQMA-basic}, in such a way that it has the same structure as the left hand side of Eq. \eqref{Eq:TFQMA-basic}. Proving the equality of the two quantities is then rather easy.

The paper is structured as follows:

Sections \ref{Sec:PrelDef} to \ref{Sec:FQMA}
sets the stage, defining quantum verification procedures, $\QMA$, $\coQMA$, $\QMA \cap \coQMA$,  Functional $\QMA$, Total Functional $\QMA$, and recalling the key notions of eigenbasis, spectrum and eigenspaces of a quantum verification procedure.
It closely follows, with appropriate modifications, our recent work \cite{MS18} on $\TFQMA$.

Section \ref{Sec:Reductions} introduces the notion of reduction of quantum verification procedures, including the notion of robust reduction mentioned above.

Section \ref{Sec:E-maps} recalls the notion of eigenspace preserving map introduced in \cite{MS18}.

Section \ref{Sec:GenAmp} introduces ``iterative procedures'', thereby generalising an idea introduced in \cite{MW05}. Iterative procedures modify the acceptance probabilities of eigenstates of a quantum verification procedure without changing the eigenstates themselves. We use iterative procedures to show \rev{ in Theorem \ref{Thm:SuperStrongAmpl}} that the spectrum of a quantum verification procedure can  be modified 
\revv{to a very large extent.}

As a first application of iterative procedures  we introduce in Section \ref{Sec:NonDest} the notion of ``non-destructive'' procedure which outputs both a classical bit indicating whether the procedure accepts or rejects and a quantum state, in such a way that if the input state is an eigenstate, the output state is also the eigenstate. We show that, essentially without loss of generality, one can take quantum verification procedures to be non-destructive.

In Section \ref{Sec:EquivQMAcapcoQMA} we present the two additional definitions of $\QMA \cap \coQMA$ mentioned above.
We show that the three definitions are equivalent. We then introduce in Section \ref{Sec:FQMAcoQMA} functional $\QMA \cap \coQMA$ and use 
the  results of Section  \ref{Sec:GenAmp} to show that this definition does not depend on the  soundness and  completeness thresholds.

In Section \ref{Sec:TFQMAequal} we show that 
${\rm F}(\QMA \cap \coQMA)$ 
equals the class Total Functional $\QMA$ ($\TFQMA$).

Section \ref{Sec:InclFBQP} contains the proof that if $\FBQP = \TFQMA$, then $\BQP = \QMA \cap \coQMA$.

And finally in Section \ref{Sec:RobReducQMATFQMA} we prove that if there exists a $\QMA$ complete problem that robustly reduces to a problem in $\TFQMA$, then  $\QMA = \QMA \cap \coQMA $.

\rev{
In the appendix, we sketch how similar results hold for the classical probabilistic classes $\MA$ and $\coMA$. We recommend that the reader interested in the details of the proofs of the paper read this appendix simultaneously with the rest of the paper, as it explains  which parts of our proofs are quantum and which parts are in fact classical.} 

\section{Preliminary Definitions}\label{Sec:PrelDef}

We denote by $\HH_n$  the Hilbert space of $n$ qubits. 
For pure states we use the Dirac ket notation  $\vert \psi \rangle$, whereas for density matrices we just use the Greek letter $\rho$. We denote by $\vert 0^k \rangle$ the state of $k$ qubits all in the $\vert 0\rangle$ state. We denote by $I_m$ the identity operator acting on $m$ qubits.

We denote by $\poly$ the set of all non-zero polynomials with non-negative integer coefficients. Note that if $f\in \poly$, then $f$ maps positive integers to positive integers.

We denote by $1/\poly$ the set of all functions that are the inverse of a polynomial in $\poly$:
\begin{equation}
1/\poly=\{g:\N \to \R \ : \exists p\in \poly \textrm{ AND } g=1/p\}\ .
\label{Eq:1/poly}
\end{equation}

\section{Quantum Verification Procedures}
\label{Sec:QVP}

\begin{deff}{\bf $d$-Outcome Quantum Verification Procedure.} 
\label{def:d-qvp}
For any integer $d$ larger or equal to $2$, a {\em $d$-outcome quantum verification procedure} is a 
polynomial time 
uniform family of quantum circuits $Q=\{Q_{n} : n \in \N\}$ 
with
$Q_n$ taking as input $(x, \vert \psi \rangle\otimes \vert 0^k \rangle )$, 
where 
 $x \in \{0,1\}^n$ is a binary string of length $n$,
$\vert \psi \rangle$ is a state of 
$m$ qubits, and both $m= m(n)$ and $k= k(n)$ belong to $\poly$.
The last $k$ qubits, initialized to the state $\vert 0^k \rangle$, form the {\em ancilla Hilbert space} $\HH_k$, 
and the $m$-qubit states $\vert \psi \rangle$ form the {\em witness Hilbert space} $\HH_m$.
The outcome of the run of $Q_n$ is a  word  $w\in\{0,\ldots d-1\}$.
It  is obtained by measuring the first $\lceil \log d \rceil$ qubits 
in the computational basis, interpreting the result as an integer in $\{0,...,2^{\lceil \log d \rceil}-1\}$ 
and, if  this integer is greater or equal to $d$, replacing it by $d-1$.
We denote this outcome by $Q_{n}(x,\vert \psi \rangle)$.
\end{deff}

In most of this article we will consider 2-outcome quantum verification procedures. For brevity throughout this work we use the following terminology which is standard in the literature:
\begin{deff}{\bf Quantum Verification Procedure.} 
A 2-outcome quantum verification procedure is called a {\em quantum verification procedure.}
\label{def:Qvp}
\end{deff}

Note that a $d$-outcome quantum verification procedure can of course also take as input a mixed state $\rho$, rather than a pure state  $\vert \psi \rangle$. Mixed states can be written as convex combinations of pure states. The acceptance (rejection) probability for the mixed state is the convex combination of the acceptance (rejection) probabilities for the constituent pure states.

In order to simplify notation, in what follows we will mainly consider the case where $Q_{n}$ takes as input a pure state. In view of the above remark, the extension to mixed state inputs is immediate. In some cases the argument requires that $Q_{n}$ takes as input a mixed state, in which case,
abusing slightly the notation, we write $Q_{n}(x,\rho)$ for the outcome of the quantum verification procedure on the mixed state $\rho$.

\section{$\QMA$}
\label{Sec:QMA}

\begin{deff}{\bf (a,b)--Quantum Verification Procedure.} 
\label{def:qvpab}
Let
$a,b : \N \rightarrow (0,1)$ be polynomially time computable functions
 which satisfy 
\begin{equation}
a(n)-b(n)\geq 1/ q(n)\ ,
\end{equation}
for some $q\in \poly$.
We say that a quantum verification procedure $Q$ is an 
$(a,b)$-{\em quantum verification  procedure} 
(or shortly an  $(a,b)$-{\em procedure})
if for every $x$ of length $n$,  one of the following holds:
\begin{eqnarray}
\exists \vert \psi \rangle,  \Pr [Q_{n}(x,\vert \psi \rangle)=1]\geq a,\ \label{QMA1}\\
\forall \vert \psi \rangle, \Pr [Q_{n}(x,\vert \psi \rangle )=1]\leq b.\ \label{QMA2}
\end{eqnarray}
We call $a$ and $b$  the {\em completeness} and {\em soundness} probabilities of the quantum verification procedure.
\end{deff}

Note that taking the completeness or soundness probabilities equal $1$ or $0$ require a specific formulation dealing with exact quantum computation, which while theoretically interesting is not implementable in practice (real quantum computation will have some, possibly exponentially small, error probability). One of the results of the present work is to show that several complexity classes do not depend on the completeness and soundness probabilities used to define them. More precisely, in these results we show that the completeness and soundness probabilities can be taken exponentially close to $1$ and  $0$ respectively. These results do not extend to showing that $a$ and $b$ can be taken equal to $1$ and $0$.  For these reason we explicitly  exclude in Definition \ref{def:qvpab} and all similar definitions  the cases when $a=1$ and $b=0$, and take them to belong to the open interval $(0,1)$.

\begin{deff}
\label{Deff:QMA_and_coQMA}
{\bf QMA and coQMA.} Let $a,b$ be functions as in Definition$~\ref{def:qvpab}$.
The {\em class} $\QMA(a,b)$ is the set of languages $L \subseteq\{0,1\}^*$ such that there exists an
$(a,b)$-procedure $Q$, where for every x, we have
$x \in L$ if and only if 
Equation~\eqref{QMA1} holds
(and consequently, $x \notin L$ if and only if 
Equation~\eqref{QMA2} holds).

We call $Q$ a 
{\em quantum verification procedure for $L$}, and 
for $x\in L$, we say that a $\vert \psi \rangle$ satisfying Equation~\eqref{QMA1}
is a {\em witness} for $x$.

The {\em class $\coQMA(a,b)$} is the set of languages $L \subseteq \{0,1\}^*$ such that there exists an
$(a,b)$-quantum verification procedure $Q'$, where for every x, we have
$x \notin L$ if and only if 
Equation~\eqref{QMA1} holds 
(and consequently, $x \in L$ if and only if Equation~\eqref{QMA2} holds).
\end{deff}

\begin{deff}
\label{Deff:QMACAPcoQMA}
{\bf QMA $\cap$ coQMA.} Let $a,b$ and $a',b'$ be pairs of functions as in Definition$~\ref{def:qvpab}$.
The {\em class} $\QMA\cap \coQMA(a,b;a',b')$ is the set of languages $L \subseteq\{0,1\}^*$ such that
$L\in \QMA(a,b)$ and $L\in \coQMA(a',b')$.

More explicitly, $\QMA\cap\coQMA$ is the class of languages  $L\subseteq \{0,1\}^*$ such that there   exist an $(a,b)$-procedure $Q=\{Q_n\}$, and an $(a',b')$-procedure $Q'=\{Q'_n\}$, such that for every $x$ we have $x\in L$ if and only if  both the following hold:
\begin{eqnarray}
\exists \vert \psi \rangle,  \Pr [Q_{n}(x,\vert \psi \rangle)=1]\geq a,
\label{QMAcoQMA1}\\
\forall \vert \psi' \rangle, \Pr [Q'_{n}(x,\vert \psi'\rangle )=1]\leq b';
\label{QMAcoQMA2}
\end{eqnarray}
and $x\notin L$ if and only if  both the following hold:
\begin{eqnarray}
\forall \vert \psi \rangle,  \Pr [Q_{n}(x,\vert \psi \rangle)=1]\leq b,
\label{QMAcoQMA3}\\
\exists \vert \psi' \rangle, \Pr [Q'_{n}(x,\vert \psi' \rangle )=1]\geq a'.
\label{QMAcoQMA4}
\end{eqnarray}

\end{deff}

It is of course essential to understand to what extent the above definitions depend on the bounds $(a,b)$ and $(a',b')$.
It was shown by Kitaev  that the separation $a-b$ in the definition of $\QMA$ could be amplified exponentially
by using multiple copies of the input state, and multiple copies of the verification circuit \cite{KSV02}, that is
by increasing both $m$ and $k$. This was further improved in 
\cite{MW05} (see also \cite{NWZ09})  where it was shown that by running forwards and backwards the original quantum verification procedure, only one copy of the input state was needed to obtain the same amplification, 
that is one needs only increase $k$. 

\begin{thm}
{\bf QMA Amplification \cite{KSV02,MW05,NWZ09}.} 
\label{Thm-QMA-Amplification}
Let $a,b$ be functions as in Definition$~\ref{def:qvpab}$. 
 For any
polynomial $r$, we have
$\QMA(a,b) \subseteq \QMA(1-2^{-r},2^{-r})$. 
\end{thm}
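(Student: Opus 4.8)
The plan is to use Kitaev's parallel-repetition-with-majority-vote construction, which is the approach that increases both the witness size $m$ and the ancilla size $k$. Fix a language $L \in \QMA(a,b)$ with an $(a,b)$-procedure $Q=\{Q_n\}$, and let $t = t(n) \in \poly$ be a repetition count to be chosen. I would build a new procedure $\tilde Q = \{\tilde Q_n\}$ whose witness register consists of $t$ copies of the original $m$-qubit witness space (so $m' = tm$) and whose ancilla consists of $t$ copies of $\HH_k$ together with $O(\log t)$ extra qubits for counting (so $k' = tk + O(\log t)$, still in $\poly$). On input $x$, $\tilde Q_n$ runs $Q_n$ on each of the $t$ witness/ancilla pairs in parallel, records the $t$ output bits, and outputs $1$ iff at least $\tau := \lceil t(a+b)/2\rceil$ of them equal $1$. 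This is polynomial-time and uniform since $a,b$ are polynomial-time computable.

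For completeness, if $x \in L$ there is a $|\psi\rangle$ with $\Pr[Q_n(x,|\psi\rangle)=1] \geq a$; feeding $|\psi\rangle^{\otimes t}$ makes the $t$ runs independent Bernoulli trials each of success probability $\geq a$, so the count of accepts has mean $\geq ta$ and, by a Chernoff bound, exceeds $\tau$ (which sits a distance $\geq t(a-b)/2 \geq t/(2q)$ below the mean) except with probability $\exp(-\Omega(t/q^2))$. Choosing $t = \Theta(rq^2)$ makes the acceptance probability $\geq 1 - 2^{-r}$.

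The delicate part is soundness, because a cheating prover need not send a product state: it may entangle the $t$ witness registers, and correlated outcomes could in principle defeat a naive Chernoff argument. The key observation that rescues the argument is that the $t$ acceptance events act on disjoint tensor factors. Writing $\Pi$ for the single-copy acceptance operator on $\HH_m$ (obtained by running $Q_n$, then post-selecting on output $1$), soundness says $\|\Pi\|_{\mathrm{op}} \leq b$; the $i$-th run corresponds to $\Pi_i = I \otimes \cdots \otimes \Pi \otimes \cdots \otimes I$, and the $\Pi_i$ pairwise commute. Hence they are simultaneously diagonalizable, and the majority-accept operator $M = \sum_{|S| > t/2}\prod_{i\in S}\Pi_i \prod_{i\notin S}(I-\Pi_i)$ is diagonal in the same product eigenbasis. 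Its largest eigenvalue --- which equals the worst-case acceptance probability and is attained on a product eigenstate --- is the probability that a sum of independent $\mathrm{Bernoulli}(\lambda_i)$ variables with each $\lambda_i \leq b$ exceeds $\tau$. By stochastic monotonicity this is maximized at $\lambda_i = b$ for all $i$, and the same Chernoff bound gives $\leq 2^{-r}$ for $t = \Theta(rq^2)$. Thus entanglement does not help the prover, and the worst case reduces to the independent one.

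Combining the two bounds shows that $\tilde Q$ is a $(1-2^{-r}, 2^{-r})$-procedure for $L$, which proves the inclusion. The main obstacle is exactly the soundness step above, and I would emphasize that its resolution hinges on the disjoint-register structure forcing the optimal cheating state to be a product eigenstate. I would also remark that the stronger witness-efficient statement of \cite{MW05} --- keeping $m' = m$ fixed and amplifying by running $Q_n$ forwards and backwards --- reaches the same conclusion without enlarging the witness, but for the inclusion as stated the parallel construction already suffices.
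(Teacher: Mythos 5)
Your proof is correct and takes essentially the approach the paper relies on: the paper does not reprove this theorem but cites it, and your argument is a faithful reconstruction of Kitaev's parallel-repetition-with-majority proof from \cite{KSV02} (you also correctly note the witness-efficient alternative of \cite{MW05}), with the commuting-operators-on-disjoint-registers observation being exactly the standard resolution of the entangled-witness issue in the soundness step. The only blemish is notational: in your formula for the majority operator $M$ the sum should run over subsets $S$ with $|S| \geq \tau = \lceil t(a+b)/2\rceil$ rather than $|S| > t/2$, consistent with the threshold you fixed earlier.
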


As a consequence of Theorem \ref{Thm-QMA-Amplification} the precise values of the bounds $(a,b)$  are irrelevant. Traditionally they are taken to be $2/3$ and $1/3$. 
We will do here the same.

\begin{deff}
\label{Deff:QMA}
We define the class $\QMA$ as $\QMA(2/3, 1/3).$
\end{deff}

\begin{deff}
\label{Deff:QMACAPCOQMA}
We define the class $\QMA \cap \coQMA$ as $\QMA\cap \coQMA(2/3,1/3;2/3,1/3)$
\end{deff}

\begin{deff}{\bf $a$-Total Quantum Verification Procedure.} 
\label{Deff:Totalqvp}
Let $a : \N \rightarrow [0,1]$ be a polynomially time computable function.
We say that a quantum verification procedure is an 
$a$-{\em total quantum verification  procedure} (or shortly an 
$a$-{\em total procedure})
if for every $x$ of length $n$,  the following holds:
\begin{equation}
\exists \vert \psi \rangle,  \Pr [Q_{n}(x,\vert \psi \rangle)=1]\geq a \ . \label{TQMA1}
\end{equation}
\end{deff}

Note that an $a$-total procedure is also an $(a,b)$-procedure for all $b$ satisfying the conditions of Definition \ref{def:qvpab}. 
 Note  that the language associated to an $a$-total procedure is $L=\{0,1\}^*$. That is the decision problem for total procedures is trivial, since for all $x \in \{0,1\}^*$ there exists a witness for $x$. Therefore for total procedures, the only interesting questions concern the witnesses. 
 The main topic of the present work is to understand the relation between $a$-total procedures  and $\QMA \cap \coQMA$.

 \section{BQP}
 \label{Sec:BQP}
 
\subsection{Efficiently implementable channels and states.}

Recall that a quantum channel is a completely positive (CP) trace preserving map between spaces of operators.

\begin{deff}{\bf Efficiently implementable quantum channel.} 
\label{def:EffChannel}

A family $\Phi$  of efficiently implementable quantum channels
is defined by a 
polynomial time 
uniform family of quantum circuits $Q=\{Q_n : n \in \N\}$ 
with 
$Q_n$ taking as input $(x,\vert \psi \rangle\otimes \vert 0^{k} \rangle )$
where
$x\in \{0,1\}^n$, where $\vert \psi \rangle \in \HH_{m}$ \rev{where $m=m(n)$ and $k=k(n)$ belong to $\poly$}, 
and  the output of the channel is
obtained by keeping the first $m'$ qubits and tracing over the remaining $m+k-m'$ qubits, where \rev{where $m'=m'(n)$  belongs to $\poly$}  and $m'\leq m+k$. 
We denote the output of the channel by $\Phi(x,\vert \psi \rangle)$:
\begin{equation}
\Phi(x,\vert \psi \rangle) = \Tr_{m+k-m'} Q_n(x,\vert \psi \rangle\otimes \vert 0^{k} \rangle )\ .
\end{equation}
\end{deff}

Note that a quantum channel can of course also act on a mixed state $\rho$ of $m$ qubits.

\begin{deff}
\label{Deff:EffState}
{\bf Efficiently preparable states.} 
Let $m' \in \poly$.
A family of density matrices $\{ \rho(x) : x\in\{0,1\}^n,  n \in \N\}$
is efficiently preparable if 
$\rho(x)$ acts on $\HH_{m'(n)}$ and if
there exists 
a 
polynomial time 
uniform family of quantum circuits $Q=\{Q_{n} : n \in \N\}$ 
with
$Q_n$ taking as input $(x,  \vert 0^{k} \rangle )$
with 
 $x \in \{0,1\}^n$, \rev{with $k=k(n)$ belonging to $\poly$ 
and   $k \geq m'$}, 
and where $\rho(x)$ is obtained by tracing out the last $k-m'$ qubits of $Q_n(x)$.
\end{deff}

This definition is equivalent to saying that a family of efficiently preparable states $\{\rho(x)\}$ is the output of a family of efficiently implementable quantum channels $\Phi$ 
which does not take any quantum state as input, i.e. if one sets $m=0$ in Definition \ref{def:EffChannel}: $\rho(x)=\Phi(x,\emptyset)$.

 \subsection{$\BQP$}
 
 Bounded-error quantum polynomial time ($\BQP$) is the class of decision problems solvable by a quantum computer in polynomial time, with bounded error probability for all instances. We give two equivalent definitions of $\BQP$, following the formulation of \cite{MS18}.

\begin{deff}{\bf $\BQP$.} 
\label{def:BQP}
Let $a,b$  be  functions as in Definition$~\ref{def:qvpab}$.
The class $\BQP(a,b)$ is the set of languages $L\subseteq \{0,1\}^*$ such that there exists an $(a,b)$-procedure $Q=\{Q_{n} : n \in \N\}$, 
with
$Q_n$ taking as input $(x,  \vert 0^k \rangle )$ (i.e. there is no witness Hilbert space), where
 $x \in \{0,1\}^n$ is a binary string of length $n$, and where for every $x$  we have
\begin{eqnarray}
x\in L &\Leftrightarrow & 
\Pr [Q_{n}(x)=1]\geq a\ ,\ \label{BQP1}\\
x\notin L &\Leftrightarrow & 
\Pr [Q_{n}(x)=1]\leq b\ .
\ \label{BQP2}
\end{eqnarray}
\end{deff}

 By repeating the procedure a polynomial number of times, the thresholds can be made exponentially close to $1$ and $0$ respectively. Therefore the exact values of the bounds $a$ and $b$ are irrelevant. Traditionally they are taken to be $2/3$ and $1/3$. 
We will do here the same.
 
 \begin{deff}
\label{Deff:BQP}
We define the class $\BQP$ as $\BQP(2/3, 1/3).$
\end{deff}

We  give an alternative definition of $\BQP$ which is closer to the definition of $\QMA$.

\begin{deff}
\label{Deff:Lpoly}
{\bf The language class $\BQP'$.} 
\label{Def:BQP'}

Let $a,b$  be  functions as in Definition$~\ref{def:qvpab}$. Let $\BQP'(a,b)\subseteq \QMA(a,b)$ be 
the set of languages  $L\subseteq \{0,1\}^*$ such that
there   exists:
\begin{enumerate}
\item  an $(a,b)$ quantum verification procedure $Q=\{Q_{n} : n \in \N\}$ with
$Q_n$ taking as input $(x, \vert \psi \rangle\otimes \vert 0^k \rangle )$, 
where 
 $x \in \{0,1\}^n$ is a binary string of length $n$,
$\vert \psi \rangle$ is a state of
$m$ qubits, with $m,k\in \poly$;
\item an efficiently preparable set of density matrices $\{\rho(x)\}$ where $\rho(x)$ acts on $\HH_m$;
\end{enumerate}
and where for every x, we have
 $x \in L$ if and only if 
\begin{eqnarray}
\Pr [Q_{n}(x,\rho(x))=1]\geq a,\ \label{BQP3}
\end{eqnarray}
and
 $x \notin L$ if and only if 
\begin{eqnarray}
\forall \vert \psi \rangle, \Pr [Q_{n}(x,\vert \psi \rangle )=1]\leq b.\ \label{QMA2B}
\end{eqnarray} 
 \end{deff}

\begin{thm}
{\bf \BQP'=\BQP\cite{MS18}.} 
\label{Thm:LPOLY=BQP}
For all $a,b$    functions as in Definition$~\ref{def:qvpab}$, $\BQP'(a,b)=\BQP$.
\end{thm}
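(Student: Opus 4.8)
The plan is to prove the two inclusions $\BQP \subseteq \BQP'(a,b)$ and $\BQP'(a,b) \subseteq \BQP$ separately, for arbitrary admissible bounds $(a,b)$, and then invoke \BQP-amplification to collapse the bounds.

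For the inclusion $\BQP \subseteq \BQP'(a,b)$, I would start from a language $L \in \BQP = \BQP(2/3,1/3)$ and first amplify its defining $(2/3,1/3)$-procedure so that its completeness and soundness thresholds match the target $(a,b)$; this is the routine repetition-and-majority argument alluded to after Definition~\ref{def:BQP}. The resulting circuit takes only $(x, \vert 0^k\rangle)$ as input, with no witness register. To place $L$ in $\BQP'(a,b)$ I must exhibit a witness-taking procedure together with an efficiently preparable family $\{\rho(x)\}$. The natural choice is to let the witness Hilbert space be trivial — say $\HH_m$ with $m$ a single unused qubit — feed the $\BQP$ circuit into the verification procedure $Q$ while ignoring the witness, and set $\rho(x)$ to any fixed efficiently preparable state, e.g. $\rho(x)=\vert 0\rangle\langle 0\vert$. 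Then Equation~\eqref{BQP3} reduces to the \BQP\ acceptance condition, which holds by construction; and since $Q_n$ does not actually read the witness, the acceptance probability is independent of the input state, so Equation~\eqref{QMA2B} holds for every $\vert\psi\rangle$ exactly when $x\notin L$. Hence $L\in\BQP'(a,b)$.

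For the reverse inclusion $\BQP'(a,b)\subseteq\BQP$, I would take $L\in\BQP'(a,b)$ with its procedure $Q$ and efficiently preparable family $\{\rho(x)\}$, and build a witness-free $\BQP$ machine that, on input $x$, first runs the circuit preparing $\rho(x)$ (guaranteed polynomial-time by Definition~\ref{Deff:EffState}) and then runs $Q_n$ on $(x,\rho(x))$. By the remark following Definition~\ref{def:d-qvp} on mixed-state inputs, the acceptance probability of this composite machine is $\Pr[Q_n(x,\rho(x))=1]$. When $x\in L$, Equation~\eqref{BQP3} gives acceptance probability at least $a$; when $x\notin L$, Equation~\eqref{QMA2B} holds for all pure $\vert\psi\rangle$ and hence, by convexity of acceptance probabilities over mixed states, also for $\rho(x)$, giving acceptance probability at most $b$. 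This exhibits an $(a,b)$-procedure with no witness register, so $L\in\BQP(a,b)$, and \BQP-amplification yields $L\in\BQP$.

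The conceptually delicate point — and the step I would flag as the one most easily gotten wrong rather than genuinely hard — is the soundness direction of the second inclusion: one must be careful that the $\BQP'$ soundness condition \eqref{QMA2B} is universally quantified over \emph{all} witnesses, so that feeding in the specific state $\rho(x)$ is legitimate. This is precisely what makes the class well-behaved: completeness is only required to hold for the efficiently preparable $\rho(x)$, but soundness holds against every input, so the prepared state cannot be a loophole. The only other thing to verify carefully is that the composition of the preparation circuit and $Q_n$ is still a single polynomial-time uniform circuit family, which is immediate since both are polynomial-time uniform and $k, m, m'\in\poly$.
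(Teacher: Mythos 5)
Your proof is correct. Note that the paper does not actually reprove Theorem~\ref{Thm:LPOLY=BQP}; it imports it from \cite{MS18}, so there is no in-paper proof to compare against. Your argument is the natural one: the forward inclusion via a dummy (ignored) witness register with $\rho(x)$ a fixed trivial state, and the reverse inclusion by composing the preparation circuit for $\rho(x)$ with $Q_n$, using convexity of acceptance probabilities to transfer the universally quantified soundness bound~\eqref{QMA2B} to the mixed state $\rho(x)$ — which is indeed the one point worth flagging, and you handle it correctly.
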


\section{Structure of the witness space.}
\label{SubSec:AmplStructureWitness}

The witness space has a structure that will be central in what follows. This structure can be derived using the methods of
\cite{MW05,NWZ09} based on Jordan's lemma, or more easily using the structure of the POVM element corresponding to the quantum verification procedure accepting, see \cite{ABBS08}.
We repeat here the formulation of  \cite{MS18}.

\begin{thm}
{\bf Structure of witness space \cite{MW05,ABBS08}.} 
\label{Thm:BlockStructure}
Given a  
quantum verification procedure $Q=\{Q_{n}\}$, for all $x \in\{0,1\}^n$
there exists a basis $B_Q(x)=\{\vert \psi_i\rangle : 1 \leq i \leq 2^m\}$ of the witness space  $\HH_m$
such that
the acceptance probability of linear combinations of the basis states does not involve interferences,
that is for all $\alpha_i$ such that $\sum_i \vert\alpha_i\vert^2=1$, we have
\begin{eqnarray}
 &\Pr [Q_{n}(x,\sum_i \alpha_i \vert \psi_i \rangle) = 1 ]&
\nonumber\\ 
& =
\sum_i \vert\alpha_i\vert^2  \Pr [Q_{n}(x,\vert \psi_i \rangle)
=1]&\ .
\label{Eq:eigenbasis}
\end{eqnarray}
\end{thm}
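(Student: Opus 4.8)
The plan is to reduce everything to a single Hermitian operator on the witness space and then invoke the spectral theorem. The key observation is that a (2-outcome) quantum verification procedure $Q_n$ on a fixed input $x$ of length $n$ applies some unitary $U$ (the circuit) to $\vert \psi \rangle \otimes \vert 0^k \rangle$ and then measures the first qubit, accepting when it reads $1$. Writing $\Pi$ for the projector onto ``first qubit $=1$'' (tensored with the identity on the remaining qubits), the acceptance probability is the squared norm
\begin{equation}
\Pr[Q_n(x,\vert \psi \rangle)=1] = \big\| \, \Pi \, U \, (\vert \psi \rangle \otimes \vert 0^k \rangle)\, \big\|^2 .
\end{equation}
My first step is to isolate the dependence on $\vert \psi \rangle$ by defining the linear map $A = \Pi \, U \, (I_m \otimes \vert 0^k \rangle) : \HH_m \to \HH_m \otimes \HH_k$, so that the right-hand side equals $\langle \psi \vert A^\dagger A \vert \psi \rangle$.

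Next I set $M = A^\dagger A$, an operator on the witness space $\HH_m$, and record its elementary properties: $M$ is Hermitian and positive semidefinite by construction, and it is a contraction, $0 \leq M \leq I_m$, because $\Pi$ is a projector (hence norm non-increasing) and $U$ is an isometry on the relevant subspace, so $\|A\vert\psi\rangle\|^2 \leq \||\psi\rangle\otimes\vert 0^k\rangle\|^2 = 1$ for every unit vector $\vert\psi\rangle$. Thus
\begin{equation}
\Pr[Q_n(x,\vert \psi \rangle)=1] = \langle \psi \vert M \vert \psi \rangle, \qquad 0 \leq M \leq I_m .
\end{equation}
This operator $M$ is precisely the POVM element ``$Q$ accepts'' alluded to in the remark citing \cite{ABBS08}.

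The final step is to apply the spectral theorem to the Hermitian operator $M$ on the finite-dimensional space $\HH_m$: there is an orthonormal basis $\{\vert \psi_i \rangle : 1 \leq i \leq 2^m\}$ of eigenvectors with real eigenvalues $\lambda_i \in [0,1]$, which I take to be $B_Q(x)$. Then for $\vert \psi \rangle = \sum_i \alpha_i \vert \psi_i \rangle$ with $\sum_i |\alpha_i|^2 = 1$, the cross terms vanish by orthogonality, giving $\langle \psi \vert M \vert \psi \rangle = \sum_i |\alpha_i|^2 \lambda_i$, while $\lambda_i = \langle \psi_i \vert M \vert \psi_i \rangle = \Pr[Q_n(x,\vert \psi_i \rangle)=1]$; substituting yields exactly Eq.~\eqref{Eq:eigenbasis}. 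There is no serious obstacle here: the whole content is the recognition that the acceptance probability is a genuine quadratic form in $\vert\psi\rangle$ governed by a Hermitian $M$, after which diagonalizability and the absence of interference are automatic. The only points demanding a little care are confirming that $M$ really is Hermitian (it is manifestly of the form $A^\dagger A$) and that it is a contraction, so that the eigenvalues are legitimate probabilities; the alternative derivation via Jordan's lemma, following \cite{MW05,NWZ09}, would reach the same basis but with considerably more bookkeeping.
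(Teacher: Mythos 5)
Your proof is correct, and it follows exactly the route the paper itself indicates: the paper gives no proof of Theorem~\ref{Thm:BlockStructure} but states just before it that the structure is obtained ``more easily using the structure of the POVM element corresponding to the quantum verification procedure accepting,'' which is precisely your operator $M = A^\dagger A$ diagonalized by the spectral theorem. The identification $\Pr[Q_n(x,\vert\psi\rangle)=1]=\langle\psi\vert M\vert\psi\rangle$ with $0\leq M\leq I_m$, and the vanishing of cross terms in the eigenbasis, is all that is needed.
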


\begin{deff}
\label{Deff:eigenbasis_spectrum_qvp}
{\bf Eigenbasis, spectrum and eigenspaces of a quantum verification procedure.}
Given a  
quantum verification procedure $Q=\{Q_{n}\}$, for all $x \in\{0,1\}^n$ and for any 
eigenbasis $B_Q(x)=\{\vert \psi_i\rangle \}$ of $Q$ for $x$,
\begin{enumerate}
\item for all $\vert \psi_i\rangle\in B_Q(x)$, we call 
\begin{equation}
p_i= \Pr [Q_{n}(x,\vert \psi_i \rangle)
=1]
\end{equation}
 the \emph{acceptance probability} of $\vert \psi_i\rangle$;
\item 
we call the set of acceptance probabilities the \emph{spectrum} of $Q$  for $x$:
\begin{eqnarray}
\label{Eq:SPECT(Q,x)}
&\SPECT(Q,x) =\{p\in[0,1] :  \exists \vert \psi_i\rangle \in B_Q(x) &\nonumber\\
&\quad  {\rm such~that\ } 
\Pr [Q_{n}(x,\vert \psi_i \rangle) = 1 ]=p\}\ ;&\nonumber\\
\end{eqnarray}
\item for  $p\in\SPECT(Q,x)$, we call 
\begin{eqnarray}
\label{Eq:HH(Q,x)}
\HH_Q(x,p)&=&\SPAN ( 
\{ \vert \psi_i \rangle \in B_Q(x)  \nonumber\\
& &\ :\  \Pr [Q_{n}(x,\vert \psi_i \rangle) = 1 ]=p\} )\nonumber\\
\end{eqnarray}
the \emph{eigenspace} of $Q$ for $x$ with acceptance probability $p$.
\end{enumerate}
\end{deff}

\begin{thm}
{\bf Uniqueness of the spectrum and eigenspaces of $Q$ \cite{ABBS08,MS18}.} 
\label{Thm:UniquenessBlockStructure}
Given a  
quantum verification procedure $Q=\{Q_{n}\}$, for all $x \in\{0,1\}^*$,
the spectrum  $\SPECT(Q,x)$ of $Q$ 
and
the eigenspaces $\HH_Q(x,p)$ of $Q$ with acceptance probability $p\in \SPECT(Q,x)$ are unique and do not depend on the choice of eigenbasis $B_Q(x)$.
\end{thm}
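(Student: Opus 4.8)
The plan is to reduce the statement to the uniqueness part of the spectral theorem, by exhibiting a single Hermitian operator on the witness space whose distinct eigenvalues and eigenspaces are precisely the spectrum and eigenspaces of $Q$ for $x$. The circuit $Q_n$ appends the ancilla $\vert 0^k\rangle$, applies a unitary $U$, and measures the first qubit; writing $\Pi$ for the projector onto its ``accept'' outcome, the acceptance probability of any witness is the quadratic form
\begin{equation}
\Pr[Q_n(x,\vert\psi\rangle)=1]=\langle\psi\vert\,E(x)\,\vert\psi\rangle,
\end{equation}
where $E(x)=(I_m\otimes\langle 0^k\vert)\,U^\dagger\,\Pi\,U\,(I_m\otimes\vert 0^k\rangle)$ is the POVM element associated with acceptance (as used in \cite{ABBS08}), a positive semidefinite operator on $\HH_m$ satisfying $0\le E(x)\le I_m$. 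Crucially, $E(x)$ depends only on $Q$ and $x$, not on any choice of basis.

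Next I would show that the no-interference condition of Theorem \ref{Thm:BlockStructure} is exactly the statement that an eigenbasis $B_Q(x)=\{\vert\psi_i\rangle\}$ diagonalizes $E(x)$. Expanding $\langle\psi\vert E(x)\vert\psi\rangle$ for $\vert\psi\rangle=\sum_i\alpha_i\vert\psi_i\rangle$ produces, for each pair $i\neq j$, an off-diagonal contribution $2\,\mathrm{Re}\big(\bar\alpha_i\alpha_j\langle\psi_i\vert E(x)\vert\psi_j\rangle\big)$. Requiring Eq.~\eqref{Eq:eigenbasis} to hold for all complex coefficients---taking in particular $\alpha_i=\alpha_j$ and then $\alpha_i=i\,\alpha_j$---forces both the real and the imaginary parts of $\langle\psi_i\vert E(x)\vert\psi_j\rangle$ to vanish. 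Hence each $\vert\psi_i\rangle$ is an eigenvector of $E(x)$, and its acceptance probability $p_i=\langle\psi_i\vert E(x)\vert\psi_i\rangle$ is the corresponding eigenvalue.

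The theorem then follows from the uniqueness part of the spectral theorem. The set of distinct eigenvalues of $E(x)$ is intrinsic to $E(x)$, and it equals $\SPECT(Q,x)$; for each $p$ in that set, the eigenspace $\HH_Q(x,p)$ coincides with $\ker(E(x)-p\,I_m)$, which is again intrinsic to $E(x)$. Any other eigenbasis $B_Q'(x)$ satisfying the no-interference condition must, by the same argument, diagonalize the very same operator $E(x)$, and therefore yields the same spectrum and the same eigenspaces. I do not expect a serious obstacle here; the only point requiring care is the equivalence in the second step, where one must use complex rather than merely real superpositions in order to conclude that the off-diagonal matrix elements vanish identically, and not merely that their real parts vanish.
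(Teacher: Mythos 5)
Your proof is correct and follows essentially the same route the paper points to: it gives no proof of its own but attributes the result to the POVM-element argument of \cite{ABBS08} (see the remark before Theorem \ref{Thm:BlockStructure}), which is exactly your reduction of the no-interference condition to the diagonalization of the acceptance operator $E(x)$ and the uniqueness part of the spectral theorem. Your attention to using complex as well as real superpositions to kill both the real and imaginary parts of the off-diagonal matrix elements is the right point of care, and the argument is complete.
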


\revv{
The spectrum of quantum verification procedures plays an important role in the study of $\QMA$ and related complexity classes. It is closely related (but distinct) from the energy spectrum of local Hamiltonians. Several complexity classes with conditions on the spectrum have been studied, such as Polynomially Gaped QMA (PGQMA) in which the gap between the largest and second largest eigenvalue is inverse polynomial \cite{ABBS08}, and Exponentially Gaped QMA (EGQMA) in which the gap between the largest and second largest eigenvalue is  exponentially small \cite{DGF20}.}

\section{Functional Classes}
\label{Sec:FQMA}

 In this section we are interested in the set of states
 on which a quantum verification procedure $Q$ accepts with high probability. We may also be interested in the set of states on which $Q$  rejects with high probability. These sets will allow us to characterise  the functional analogs of the complexity classes introduced previously.
 
 There are two approaches to characterising these sets. The first is based on the notion of witness introduced in Definition \ref{Deff:QMA_and_coQMA}. The second uses the notion of  eigenbasis of a quantum verification procedure, see Theorems \ref{Thm:BlockStructure} and  \ref{Thm:UniquenessBlockStructure}.

 \subsection{Functional classes based on witnesses}
 
\begin{deff}
\label{Deff:Accepting_and_rejecting_states}
{\bf Accepting 
density matrices.} 
Let $Q=\{Q_n\}$ be a quantum verification procedure and fix $a\in[0,1]$.

We define the following relations
over binary strings and density matrices:
\begin{eqnarray}
R_{Q}^{\geq a}(x, \rho) = 1 &\text{ if }& \Pr [Q_{n}(x,\rho)=1]\geq a
\ ,\\
R_{Q}^{\leq a}(x, \rho) = 1 &\text{ if }& \Pr [Q_{n}(x,\rho)=1]\leq a
\ .
\label{EQ:RQgeqleq}
\end{eqnarray}

To simplify notation, we  denote
\begin{eqnarray}
R_{Q}^{\geq a}(x)&=&
\left\{ \rho  \ :\    R_{Q}^{\geq a}(x,\rho) =1\right\}
\ ,\\
R_{Q}^{\leq a}(x)&=&\left\{ \rho   \ :\    R_{Q}^{\leq b}(x, \rho) =1\right\}
\ .
\end{eqnarray}
\end{deff}

Consider an $(a,b)$-procedure $Q$. We are interested in the functional task of outputting a witness for $Q$, and in defining the corresponding complexity class.  This motivates the following definitions, where the subscript $W$ stands for ``Witness''.

\begin{deff}
\label{Deff:FQMA(a,b)_W}
{\bf Witness based definition of Functional $\QMA$ ($\FQMA_W$).}
 Let $a,b$ be functions as in Definition$~\ref{def:qvpab}$.
The {\em class} $\FQMA_{W} (a)$ is the set 
$\{R_{Q}^{\geq a}(x, \rho)\}$
where $Q$ is an $(a,b)$-procedure
\end{deff}

We can also define the functional analog of $\BQP$, for which we use Definition \ref{Def:BQP'}.

\begin{deff}{\bf Functional $\BQP$ ($\FBQP_W$).}
\label{Deff:FBQP}
The {\em class} $\FBQP_W(a)$ is the set of relations $\{R_{Q}^{\geq a}(x, \rho)\} \subseteq \FQMA_W(a)$
where $Q$ is an $(a,b)$--procedure, 
and where for each relation $R_{Q}^{\geq a}(x, \rho)$  there exists an efficiently preparable family of states $\{\rho(x)\}$ 
such that for all  $x$,
$\rho(x)\in R^{\geq a}_Q(x)$.
\end{deff}

We can also define the functional class total functional $\QMA$.

\begin{deff}{\bf Witness based definition of total Functional $\QMA$ ($\TFQMA_W$).}
\label{Deff:TFQMA-A}
Let $a: \N \rightarrow [0,1]$ be a polynomially time computable function.
The {\em class} $\TFQMA_W(a) \subseteq \FQMA_W(a)$ is the set of relations
$\{R_{Q}^{\geq a}(x, \rho)\}$
where $Q$ is an $a$-total procedure. 
\end{deff}

 \subsection{Functional classes based on eigenstates}
 
 The above definitions are natural. However they are not completely satisfactory for several reasons.
 
First,
we do not know whether $\FQMA_W (a)$ is independent of the threshold $a$. We obviously have that
$\FQMA_W (a') \subseteq \FQMA_W (a)$ if $a'\leq a$, but we do not know if the reverse is true, as $\FQMA_W (a)$ does not transform simply under amplification. 

Second, while the sets $R_{Q}^{\geq a}(x, \rho)$ are convex, they are not closed under linear combinations of pure states. That is, if the projectors onto $\vert \psi \rangle$ and $\vert \psi' \rangle$ belong to $R_{Q}^{\geq a}(x, \rho)$, then the projector onto the linear combination $a\vert \psi \rangle + b \vert \psi' \rangle$ does not necessarily belong to $R_{Q}^{\geq a}(x, \rho)$.

Third,  the very useful concept of eigenstate and eigenspace is not   captured by the functional classes based on witnesses. 
 This is illustrated in the following example:
\begin{expl}
\label{Example:No1}
Consider a function $\delta:\N \rightarrow [0,1/3[$ that decreases faster than $1/ {\poly}(n)$
for any polynomial $\poly(n)$, for instance $\delta(n)=2^{-n-2}$. The example consists of a $(1/3,2/3)$--quantum verification procedure $Q^{example}$ whose spectrum 
is the set
\begin{equation}
\SPECT(Q^{example},x) = \{ \frac{1}{3}, \frac{2}{3}-\delta^2(n), \frac{2}{3}+ \delta(n)\}
\end{equation}
where $n = |x|$.
\end{expl}

Let us denote by $\vert \psi_{2/3+\delta}\rangle$ and 
$\vert \psi_{2/3-\delta^2}\rangle$ two eigenstates with acceptance probability $2/3+\delta$ and $2/3-\delta^2$ respectively. Then the following state
\begin{equation}
\vert \psi  \rangle = \sqrt {\frac{\delta}{1+\delta}} \vert \psi_{2/3+\delta}\rangle
+ \sqrt {\frac{1}{1+\delta}} \vert \psi_{2/3- \delta^2}\rangle
\end{equation}
is a witness (it has acceptance probability $2/3$), hence belongs to $R_{Q}(x,)$. But it has 
 exponentially small overlap with the eigenspace $\HH_{Q}(x,\frac{2}{3}+ \delta(n))$. 
 
 These difficulties are avoided by the following definitions, based on the notion of eigenspace of a quantum verification procedure.

\begin{deff}
\label{Deff:Accepting_and_rejecting_subspaces}
{\bf Accepting and rejecting subspaces.} 
Let $Q=\{Q_n\}$ be a quantum verification procedure and fix $a,b\in[0,1]$.

We define the following relations over binary strings  and quantum states:
\begin{eqnarray}
\HH_{Q}^{\geq a}(x, \vert \psi \rangle) = 1 &\text{ if }& \vert \psi \rangle \in \SPAN (  \{ \HH_Q (x, p) \ :\  p\geq a\} ) \ ,
\nonumber\\
& & \\
\HH_{Q}^{\leq b}(x, \vert \psi \rangle) = 1 &\text{ if }& \vert \psi \rangle \in \SPAN (  \{ \HH_Q (x, p) \ :\   p\leq b\} )\ ,
\nonumber\\
\label{EQ:HHQgeqleq}
\end{eqnarray}
where  $\HH_Q(x,p)$ are the eigenspaces of $Q$ for $x$.

To simplify notation, we  denote
\begin{eqnarray}
\HH_{Q}^{\geq a}(x)&=&
\left\{ \vert \psi\rangle  \ :\    \HH_{Q}^{\geq a}(x, \vert \psi\rangle) =1\right\}\ ,\\
\HH_{Q}^{\leq b}(x)&=&\left\{ \vert \psi\rangle  \ :\    \HH_{Q}^{\leq b}(x, \vert \psi\rangle) =1\right\}\ ,
\end{eqnarray}
and we will generally express results in terms of the subspaces $\HH_{Q}^{\geq a}(x), \HH_{Q}^{\leq b}(x)$, rather than the corresponding relations.
\end{deff}

These relations provide the basis for an alternative definition of functional classes. These classes are denoted without the subscript $W$.

\begin{deff}
\label{Deff:FQMA(a,b)}
{\bf Functional QMA (FQMA).}
 Let $a,b$ be functions as in Definition$~\ref{def:qvpab}$.
The {\em class} $\FQMA(a,b)$ is the set 
$\{(\HH_Q^{\geq a}(x, \vert \psi \rangle), \HH_Q^{\leq b}(x, \vert \psi \rangle))\}$
of pairs of relations,
where $Q$ is an $(a,b)$-verification
procedure. 
\end{deff}

This definition is in terms of two relations ($\HH_Q^{\geq a}$ and 
$\HH_Q^{\leq b}$) for reasons which were presented in \cite{MS18}.

One can show that this definition of $\FQMA$ does not depend on the bounds $(a,b)$:

\begin{thm}
\label{Thm:FQMAAmp}
{\bf $\FQMA$ is independent of the soundness and completeness bounds \cite{MS18}.} 
\label{Thm:FQMASR}
Let $Q$ be a quantum verification procedure. 
Let
$a,b$ be a pair of functions as in Definition$~\ref{def:qvpab}$. For any $r\in \poly$ and 
 pair of functions $a',b'$ as in Definition$~\ref{def:qvpab}$
satisfying $a' < 1-2^{-r}$ and $b'> 2^{-r}$,  \rev{ we have
$\FQMA(a,b) \subseteq \FQMA(a',b')$.}
\end{thm}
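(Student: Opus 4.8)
The plan is to prove the containment by reproducing, for an arbitrary $(a,b)$-procedure $Q$, its pair of relations $(\HH_Q^{\geq a}, \HH_Q^{\leq b})$ by means of a new $(a',b')$-procedure $\tilde Q$ that shares the canonical eigenbasis of $Q$ but has a rescaled spectrum. By Theorems~\ref{Thm:BlockStructure} and \ref{Thm:UniquenessBlockStructure}, for each $x$ the witness space $\HH_m$ splits into the eigenspaces $\HH_Q(x,p)$, and by Definition~\ref{Deff:Accepting_and_rejecting_subspaces} the two target relations are exactly the spans of those eigenspaces with $p\geq a$ (resp. $p\leq b$), the eigenspaces with $b<p<a$ lying in neither. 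It therefore suffices to build a $\tilde Q$ whose eigenspaces coincide with those of $Q$ and whose acceptance probability on an eigenstate with $Q$-probability $p$ equals $f(p)$, for a fixed strictly increasing continuous $f:[0,1]\to[0,1]$ with $f(a)=a'$ and $f(b)=b'$.

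First I would construct such a $\tilde Q$ using an eigenspace-preserving iterative procedure in the sense of Section~\ref{Sec:GenAmp}, invoking Theorem~\ref{Thm:SuperStrongAmpl}: such procedures leave every eigenstate of $Q$ invariant while transforming its acceptance probability $p$ into a controlled value $f(p)$. Note that the sharp amplification of Theorem~\ref{Thm-QMA-Amplification} is unsuitable here, since it would push the middle eigenspaces across both thresholds; what is needed instead is a gentle, boundary-preserving rescaling. With $f$ chosen so that $f(a)=a'$, $f(b)=b'$ and $f$ strictly increasing, one has $f(p)\geq a'\Leftrightarrow p\geq a$ and $f(p)\leq b'\Leftrightarrow p\leq b$; hence the eigenspaces of $\tilde Q$ with probability $\geq a'$ (resp. $\leq b'$) are precisely those of $Q$ with probability $\geq a$ (resp. $\leq b$). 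By the uniqueness of the eigenspaces this yields $\HH_{\tilde Q}^{\geq a'}(x)=\HH_Q^{\geq a}(x)$ and $\HH_{\tilde Q}^{\leq b'}(x)=\HH_Q^{\leq b}(x)$ for all $x$, so the two relation pairs coincide.

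Next I would verify that $\tilde Q$ is a legitimate $(a',b')$-procedure. Because $f$ is increasing, the maximal eigenvalue of $\tilde Q$ at $x$ is $f(p_{\max})$ with $p_{\max}=\max\SPECT(Q,x)$. The defining dichotomy of an $(a,b)$-procedure (Definition~\ref{def:qvpab}) forces $p_{\max}\geq a$ or $p_{\max}\leq b$, never in $(b,a)$: in the first case $f(p_{\max})\geq a'$, so \eqref{QMA1} holds for $\tilde Q$, and in the second all eigenvalues are $\leq b$, so all of $\tilde Q$'s acceptance probabilities are $\leq b'$ and \eqref{QMA2} holds. Thus $\tilde Q$ meets Definition~\ref{def:qvpab} for the pair $(a',b')$, which is valid by hypothesis, and the containment $\FQMA(a,b)\subseteq\FQMA(a',b')$ follows.

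The delicate point is the exact matching of the thresholds. Correct classification of every eigenspace, uniformly over all $x$, requires $f(a)=a'$ and $f(b)=b'$ to hold precisely, since the eigenvalues of $Q$ may accumulate arbitrarily close to $a$ or $b$ as $x$ varies; any error $\epsilon$ in realizing $f$ would misclassify eigenspaces whose probability lies within $O(\epsilon)$ of a threshold. The main work is therefore to show that the iterative, eigenspace-preserving machinery of Theorem~\ref{Thm:SuperStrongAmpl} can implement a reshaping that pins the two thresholds exactly while respecting the closed-versus-open boundary conventions of Definition~\ref{Deff:Accepting_and_rejecting_subspaces}; the buffer afforded by the hypotheses $a'<1-2^{-r}$ and $b'>2^{-r}$ is precisely what leaves room for this construction.
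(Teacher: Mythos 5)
The paper does not actually prove this theorem: it is quoted from \cite{MS18} with no argument supplied here, so there is no in-paper proof to compare against line by line. That said, your strategy --- reproduce the pair $(\HH_Q^{\geq a},\HH_Q^{\leq b})$ by a second procedure sharing the eigenspaces of $Q$, with the spectrum reshaped by a strictly increasing $f$ pinned at $f(a)=a'$ and $f(b)=b'$ via Theorem~\ref{Thm:SuperStrongAmpl} --- is exactly the strategy this paper itself deploys for the sibling statements, Theorems~\ref{Thm:L2indepbounds} and~\ref{Thm:FQMAcoQMASR}. Your threshold bookkeeping ($f$ strictly increasing gives $f(p)\geq a'\Leftrightarrow p\geq a$ and $f(p)\leq b'\Leftrightarrow p\leq b$ on the spectrum), the verification that $\tilde Q$ is a legitimate $(a',b')$-procedure via the dichotomy on $p_{\max}$, and the observation that the one-sided amplification of Theorem~\ref{Thm-QMA-Amplification} is the wrong tool are all correct.

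There is, however, one concrete gap. Theorem~\ref{Thm:SuperStrongAmpl} requires \emph{every} consecutive gap of $S_n$, including $s_1-s_0$ and $s_{M+1}-s_M$, to be bounded below by an inverse polynomial $\epsilon$, see Eq.~\eqref{Eq:pbound}. With $S_n=(0,b,a,1)$ this forces $b\geq\epsilon$ and $1-a\geq\epsilon$, whereas Definition~\ref{def:qvpab} only guarantees $a-b\geq 1/q$: it permits, say, $b(n)=2^{-n}$, in which case the hypothesis of the theorem you invoke fails --- and for good reason, since an iterative procedure with polynomially many repetitions cannot yield a strictly increasing $f$ that climbs from $f(0)=0$ to $f(2^{-n})=b'>2^{-r}$. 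The paper's own escape in the proof of Theorem~\ref{Thm:L2indepbounds} (shift the bounds inward by $1/(3q)$ before applying Theorem~\ref{Thm:SuperStrongAmpl}) is not available to you verbatim, because shifting $a$ and $b$ changes the subspaces $\HH_Q^{\geq a}(x)$ and $\HH_Q^{\leq b}(x)$ that you must reproduce exactly. You should either restrict to source bounds with $b,\,1-a\in 1/\poly$ (which covers every use made of the theorem in this paper, in particular $a=2/3$, $b=1/3$) or supply a separate argument for source bounds exponentially close to $0$ or $1$.
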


Hence we take the traditional values $2/3$ and $1/3$:

\begin{deff}
\label{Deff:FQMA2/3}
We define the {\em class} $\FQMA$ as $\FQMA(2/3, 1/3)$.
\end{deff}

We   introduced $a$--total procedures in Definition
\ref{Deff:Totalqvp}. 
We can now define the corresponding functional classes.

\begin{deff}
\label{Deff:Total}
{\bf Totality.}
A pair of relations $(\HH^{\geq a}_Q(x, \vert \psi \rangle), \HH^{\leq b}_Q(x, \vert \psi \rangle))$ in $\FQMA(a,b)$ is called {\em total} if for all inputs $x$ there exists at least one witness
$\vert \psi \rangle $, i.e. if $\HH^{\geq a}_Q(x)$ is non empty.
\end{deff}

\begin{deff}
\label{Deff:TFQMA}
{\bf Total Functional  QMA (TFQMA).}
 Let $a,b$ be functions as in Definition$~\ref{def:qvpab}$.
The {\em class} $\TFQMA(a,b)$  is the set 
$\{(\HH_Q^{\geq a}(x, \vert \psi \rangle), \HH_Q^{\leq b}(x, \vert \psi \rangle))\}$
of pairs of total relations, i.e. the set of pairs of relations in $\FQMA$
where $Q$ is an $a$--total verification
procedure. 
\end{deff}

Theorem \ref{Thm:FQMAAmp} implies that we can take the thresholds to have their traditional values:

\begin{deff}
We define the class of total relations in $\FQMA$ as $\TFQMA=\TFQMA(2/3,1/3)$.
\end{deff}

A similar definition can be given for the functional analog of $\BQP$. However the definition of $\BQP$ is that there is a witness that can be efficiently prepared. Therefore functional $\BQP$ needs to be expressed in terms of the existence of an efficiently preparable witness. The definition $\FBQP_W$ thus seems the natural one in this case.

\subsection{Relation between the two definitions of Functional $\QMA$}

We discuss here how  the two definitions of Functional $\QMA$ are related.

\begin{thm}{\bf Relation between definitions of Functional $\QMA$\cite{MS18}.}\label{Thm:RelationsFQMA}
 Let $a,b$ be functions as in Definition$~\ref{def:qvpab}$ and let $Q$ be an $(a,b)$-procedure. Then 
\begin{enumerate}
\item 
we have the inclusion
\begin{equation}
 \HH_{Q}^{\geq a}(x) \subseteq  R_{Q}^{\geq a}(x)\ ,
\end{equation}
 where we view $\HH_{Q}^{\geq a}(x) $ not as a set of pure states, but as the 
 set of density matrices associated to these pure states. \rev{It also holds that the convex hull of $ \HH_{Q}^{\geq a}(x) $ is included in
$ R_{Q}^{\geq a}(x)$. In general these inclusions are strict, see Example \ref{Example:No1} for an illustration.}
 \item
In the other direction, if 
 $R_{Q}^{\geq a}(x)$ is non empty, then
$ \HH_{Q}^{\geq a}(x)$ is non empty.
\end{enumerate}
\end{thm}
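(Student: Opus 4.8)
The plan is to prove the two inclusions separately, using the eigenbasis guaranteed by Theorem~\ref{Thm:BlockStructure} as the central tool throughout.

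For part~1, I would start from the eigenbasis $B_Q(x)=\{\vert\psi_i\rangle\}$ of $Q$ for $x$, so that by definition $\HH_Q^{\geq a}(x)$ is spanned by those eigenstates $\vert\psi_i\rangle$ with acceptance probability $p_i\geq a$. Take any $\vert\psi\rangle\in\HH_Q^{\geq a}(x)$ and write $\vert\psi\rangle=\sum_i\alpha_i\vert\psi_i\rangle$, where $\alpha_i=0$ whenever $p_i<a$. The no-interference identity of Theorem~\ref{Thm:BlockStructure} gives
\begin{equation}
\Pr[Q_n(x,\vert\psi\rangle)=1]=\sum_i\vert\alpha_i\vert^2 p_i\geq a\sum_i\vert\alpha_i\vert^2=a,
\end{equation}
since every surviving $p_i$ is at least $a$ and $\sum_i\vert\alpha_i\vert^2=1$. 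Hence $\vert\psi\rangle\in R_Q^{\geq a}(x)$, which is the claimed inclusion. For the statement about the convex hull, I would note that by the remark following Definition~\ref{def:Qvp} the acceptance probability of a mixed state is the convex combination of the acceptance probabilities of its pure constituents; since $R_Q^{\geq a}(x)$ is defined by the single linear constraint $\Pr[Q_n(x,\rho)=1]\geq a$, it is convex, so it contains the convex hull of any subset it contains, in particular of $\HH_Q^{\geq a}(x)$. Example~\ref{Example:No1} already exhibits a witness of acceptance probability exactly $a=2/3$ lying only exponentially close to the high-acceptance eigenspace, which shows the inclusions are in general strict.

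For part~2, suppose $R_Q^{\geq a}(x)$ is nonempty, so there is a density matrix $\rho$ with $\Pr[Q_n(x,\rho)=1]\geq a$. Decomposing $\rho$ over the eigenbasis and again invoking the no-interference property, the acceptance probability of $\rho$ is a convex combination $\sum_i\mu_i p_i$ of the eigenvalues $p_i$ with nonnegative weights summing to one. If every eigenstate had acceptance probability strictly below $a$, this convex combination would be strictly below $a$, contradicting $\Pr[Q_n(x,\rho)=1]\geq a$. Therefore at least one eigenstate $\vert\psi_i\rangle$ has $p_i\geq a$, so $\HH_Q(x,p_i)\subseteq\HH_Q^{\geq a}(x)$ is nonempty.

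The only mild subtlety, and the step I would be most careful about, is the mixed-state bookkeeping in part~2: one must be sure that the diagonalization of $\rho$ and the expansion over the fixed eigenbasis $B_Q(x)$ interact correctly with the no-interference identity, which as stated in Theorem~\ref{Thm:BlockStructure} is phrased for pure superpositions. The clean way around this is to pass to a pure purification or, equivalently, to expand $\rho$ directly in the eigenbasis $B_Q(x)$ and use that the acceptance probability of $\rho$ equals $\sum_i p_i\langle\psi_i\vert\rho\vert\psi_i\rangle$, with the weights $\langle\psi_i\vert\rho\vert\psi_i\rangle$ forming a probability distribution; the averaging argument then goes through verbatim. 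Everything else is elementary convexity, so the proof is short.
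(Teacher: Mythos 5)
Your proof is correct, and it is the expected argument: the paper itself states this theorem without proof (deferring to \cite{MS18}), and the natural route is exactly the one you take — expand in the eigenbasis of Theorem~\ref{Thm:BlockStructure}, use the no-interference identity together with convexity for part~1, and an averaging/pigeonhole argument on the diagonal weights $\langle\psi_i\vert\rho\vert\psi_i\rangle$ for part~2. Your handling of the mixed-state case via $\Pr[Q_n(x,\rho)=1]=\sum_i p_i\langle\psi_i\vert\rho\vert\psi_i\rangle$ is also the right way to extend the pure-state identity, consistent with the POVM formulation the paper uses elsewhere (e.g.\ in the proof of Theorem~\ref{Thm:PropNonDest}).
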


\section{Reductions}
\label{Sec:Reductions}

\subsection{Reductions of procedures}

\begin{deff}{\bf Reduction.} 
\label{def:ReductionQ}

Let $a,b$ and $a',b'$ be pairs of functions  as in Definition$~\ref{def:qvpab}$.

Let $L$ be a language in $\QMA(a,b)$, and denote by 
 $Q=\{Q_{n} : n \in \N\}$ the associated $(a,b)$-quantum verification procedure.
 
Let $L'$ be a language in $\QMA(a',b')$, and denote by 
 $Q'=\{Q'_{n} : n \in \N\}$ the associated $(a',b')$-quantum verification procedure.
 
A reduction from $Q$ to $Q'$ 
is a pair $(f,\Phi)$
where $f:\N\to \N$ is a polynomial time computable function, and $\Phi$ is a family of efficiently implementable channel,
such that:
\begin{enumerate}
\item \label{Def:ReducQ_1}
For all $x\in L$, it holds that $f(x) \in L'$. 

In other words, for all $x$ such that there exists  $\vert \psi \rangle$ satisfying \begin{equation}
\Pr [Q_{\vert x\vert }(x,\vert \psi \rangle)=1]\geq a
\label{Eq:RedQa} 
\end{equation}
(i.e. $\vert \psi \rangle$  is a witness for $Q$ for $x$),
it holds that 
there exists $\vert \psi' \rangle$  satisfying
\begin{equation}
\Pr [Q'_{\vert f(x)\vert }(f(x),\vert \psi' \rangle)=1]\geq a'\ ,
\label{Eq:RedQb} 
\end{equation}
(i.e. $\vert \psi' \rangle$  is a witness
 for $Q'$ for $f(x)$).

\item \label{Def:ReducQ_2}
For all
 $x$, for all witnesses $\vert \psi' \rangle$ for $Q'$ for $f(x)$, it holds that 
$\Phi(x,\vert \psi '\rangle)$  is a witness for $Q$ for $x$.
 
In other words, for all $x$,  for all $\vert \psi' \rangle$ such that
\begin{equation}
\Pr [Q'_{\vert f(x)\vert }(f(x),\vert \psi' \rangle)=1]\geq a'
\ ,
\label{Eq:Red1} 
\end{equation}
it holds that 
\begin{equation} 
\Pr [Q_{\vert x\vert }(x,\Phi(x,\vert \psi' \rangle)=1]\geq a\ .
\label{Eq:Red2} 
\end{equation}
\end{enumerate}
\end{deff}

To illustrate this definition, we recall briefly Kitaev's construction that $k$-local-Hamiltonian is $\QMA$ complete\cite{KSV02}.

The $k$-local Hamiltonian problem, which is a promise problem, is defined as follows. The input $x$ is a classical description of a $k$-local Hamiltonian acting on $n$ qubits, where we recall that a $k$-local Hamiltonian  $H$ is the sum of polynomially many  Hermitian matrices (whose norm is bounded by a polynomial) that act on only $k$ qubits. The input also contains two numbers $a<b$, such that $b-a \in 1/\poly$. The problem is to determine whether the smallest eigenvalue of this Hamiltonian is less than $a$ or greater than $b$, promised that one of these is the case. 

Kitaev showed that if $Q\in \QMA$, then for every instance $x$, there exists an instance $f(x)$ of $k-$local Hamiltonian --which we write $H(f(x))$-- such that:\\
1) if $x\in L$ there exists a low energy eigenstate of  $H(f(x))$;\\
2) if $x\notin L$ then $H(f(x))$ does not have low energy eigenstates.

Recall that the witnesses in Kitaev's construction have the form
\begin{equation} 
\frac{1}{\sqrt{L+1}}\sum_{j=0}^L U_j (\vert \psi \rangle \vert 0^k\rangle )\vert j\rangle
\end{equation}
where the first register is the witness space, the second register is the ancilla space, and the third register is the time counter, $L$ is the number of gates in $Q$, \rev{ and $U_j$ is the unitary that implements the $j$ first  gates of the procedure $Q$ (with $U_0$ the identity).}

A reduction from $Q$ to $k-$local Hamiltonian consists of:
1) the function $f(x)$; 2) the channel $\Phi$ which consists in 
measuring the time counter of the witness to obtain $j$ and then undo the computation conditional on $j$ (i.e. apply $U_j^\dagger$).

\begin{thm}
\label{ThmQMAReduc}
{\bf } 

Let $a,b$, $a',b'$, $L$, $L'$, $Q$, $Q'$ be functions, languages and procedures as in Definition$~\ref{def:ReductionQ}$.

Suppose $(f,\Phi)$ is a reduction from $Q$ to $Q'$.

Then there exists an $(a,b)$--procedure $Q^R$ such that 
\begin{itemize}
\item The language associated to $Q^R$ is $L$;
\item If $\vert \psi'\rangle$ is a witness for $Q'$ for $f(x)$, then 
$\vert \psi'\rangle$ is a witness for $Q^R$ for $x$, or expressed differently
\begin{equation}
R^{\geq a'}_{Q'}(f(x))\subseteq R^{\geq {a}}_{Q^R}(x)
\label{RelationReduction}
\end{equation}
\end{itemize}
\end{thm}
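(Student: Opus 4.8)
The plan is to define $Q^R$ as the sequential composition of the channel $\Phi$ with the original procedure $Q$. On input $(x,\vert\psi'\rangle)$, where $\vert\psi'\rangle$ is a state in the witness space of $Q'$ for $f(x)$, the procedure $Q^R$ first applies the efficiently implementable channel $\Phi$ to obtain the density matrix $\rho=\Phi(x,\vert\psi'\rangle)$ on the witness space of $Q$, and then runs $Q_{\vert x\vert}(x,\rho)$ and returns its outcome. Because $f$ is polynomial time computable (so that $\vert f(x)\vert$, and hence the number of qubits of $\vert\psi'\rangle$, is polynomial in $\vert x\vert$), $\Phi$ is efficiently implementable, and $Q$ is a polynomial time uniform family, the composite $Q^R$ is again a polynomial time uniform family of quantum circuits taking a single witness, hence a legitimate quantum verification procedure. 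Note that $Q^R$ never needs to compute $f(x)$ explicitly, since $\Phi$ already takes $x$ directly as input; the function $f$ enters only through the analysis.

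First I would verify the witness inclusion \eqref{RelationReduction}. By construction $\Pr[Q^R_{\vert x\vert}(x,\vert\psi'\rangle)=1]=\Pr[Q_{\vert x\vert}(x,\Phi(x,\vert\psi'\rangle))=1]$. If $\vert\psi'\rangle$ is a witness for $Q'$ for $f(x)$, then item~\ref{Def:ReducQ_2} of Definition~\ref{def:ReductionQ} (Equations~\eqref{Eq:Red1}--\eqref{Eq:Red2}) guarantees that $\Phi(x,\vert\psi'\rangle)$ is a witness for $Q$ for $x$, so this probability is at least $a$; thus $\vert\psi'\rangle\in R^{\geq a}_{Q^R}(x)$, establishing $R^{\geq a'}_{Q'}(f(x))\subseteq R^{\geq a}_{Q^R}(x)$. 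Next I would check that $Q^R$ is an $(a,b)$-procedure with associated language $L$ by treating the two cases of the promise. For $x\in L$, item~\ref{Def:ReducQ_1} gives $f(x)\in L'$, so $Q'$ has a witness for $f(x)$, and by the inclusion just proved this same state is a witness for $Q^R$ for $x$; hence Equation~\eqref{QMA1} holds for $Q^R$.

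For $x\notin L$ I would invoke the soundness of $Q$: since $Q$ is an $(a,b)$-procedure with language $L$, Equation~\eqref{QMA2} holds, meaning every state of the witness space of $Q$ is accepted by $Q$ with probability at most $b$. Applying this to the (in general mixed) state $\Phi(x,\vert\psi'\rangle)$ for an arbitrary input $\vert\psi'\rangle$ yields $\Pr[Q^R_{\vert x\vert}(x,\vert\psi'\rangle)=1]\leq b$, so Equation~\eqref{QMA2} holds for $Q^R$. As exactly one of the two cases applies to each $x$, $Q^R$ is an $(a,b)$-procedure, it inherits the gap $a-b\geq 1/q(n)$ from $Q$, and its associated language is precisely $L$. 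The argument is essentially a routine composition, and the only point requiring genuine care is this soundness step: it is crucial that the soundness clause of $Q$ quantifies over \emph{all} input states rather than only honest witnesses, so that an adversarially chosen $\vert\psi'\rangle$ cannot be mapped by $\Phi$ into an accepting region of $Q$; one must also invoke the convexity remark of Section~\ref{Sec:QVP} to handle the fact that $\Phi(x,\vert\psi'\rangle)$ is typically a mixed rather than pure state.
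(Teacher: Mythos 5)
Your proposal is correct and takes exactly the approach of the paper, which defines $Q^R(x,\vert\psi'\rangle)=Q(x,\Phi(x,\vert\psi'\rangle))$ and declares the verification trivial. Your write-up simply fills in the routine checks (witness inclusion via item~\ref{Def:ReducQ_2}, completeness via item~\ref{Def:ReducQ_1}, soundness via the universal quantifier in Equation~\eqref{QMA2} extended to mixed states) that the paper leaves implicit.
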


\begin{proof} Trivial.
The procedure $Q^R$ is given by 
$Q^R(x,\vert \psi\rangle)= Q(x,\Phi(\vert \psi\rangle))$.
\end{proof}

We note that it would be interesting to generalise slightly the definition of reduction of procedures, and only require that the channel $\Phi$ succeed with probability greater than $1/\poly$. For instance in the case of a reduction  from $Q$ to $k-$local Hamiltonian  the channel $\Phi$ could consists in 
measuring the time counter of the witness and rejecting except if the result is $j=0$. However as the above discussion of $k-$local Hamiltonian  suggests, a chanel $\Phi$ that always succeeds seems sufficient. In the present work we will suppose that $\Phi$  always succeeds. The case where $\Phi$  is probabilistic would be interesting to study, but is a non--trivial generalisation. For instance it is not obvious whether Theorem \ref{ThmQMAReduc} below holds in this case.

\begin{deff}{\bf QMA Completeness.} 
\label{def:Completness}

Let $a,b$  be functions as in Definition$~\ref{def:qvpab}$.

Let $L$ be a language in $\QMA(a,b)$, and denote by 
 $Q=\{Q_{n} : n \in \N\}$ the associated $(a,b)$-quantum verification procedure.  
 
The procedure $Q$ is $\QMA$-complete if, for every $a',b'$ functions as in Definition$~\ref{def:qvpab}$, for every language $L'$  in $\QMA(a',b')$ with
 $Q'=\{Q'_{n} : n \in \N\}$ the associated $(a',b')$-quantum verification procedure, there exists a reduction from $Q'$ to $Q$.
\end{deff}

\subsection{Robust reductions}
\label{subsec:robreduc}

For  many questions concerning $\QMA$ the above notion of
reduction is sufficient. However Example \ref{Example:No1} shows that it may not be appropriate in some cases.
Suppose there exists a reduction $(f,\Phi)$ from a procedure $Q^{example}$  to procedure $Q$. This reduction tells us how the witness of $Q^{example}$ (the eigenstates which accept with probability $2/3+\delta$) transform under $\Phi$. But it tells us nothing about how eigenstates with acceptance probability $2/3 - \delta^2$ transform under $\Phi$. 
But these eigenstates are operationally \rev{ indistinguishable (except possibly by using the structure of $Q^{example}$) from the genuine witnesses, since their acceptance probability is so close to $2/3$.}

In Section \ref{Sec:RobReducQMATFQMA} we will need a slightly stronger form of reduction which we call robust reduction, and which does not suffer from this problem.

\begin{deff}{\bf Robust Reduction.} 
\label{def:Eps-Reduction}

Let $a,b$, $a',b'$, $L$, $L'$, $Q$, $Q'$ be functions, languages and procedures as in Definition$~\ref{def:ReductionQ}$.

Recall that $a'-b'\geq 1/q'$ for some $q'\in \poly$. Let $\epsilon\in 1/\poly$ be such that $\epsilon \leq 1/(2q')$. (As a consequence $a'-\epsilon$ lies between $a'$ and $b'$, at a distance $\geq 1/\poly$ from either bound).

A robust reduction from $Q$ to $Q'$ with parameter $\epsilon$ is a reduction from $Q$ to $Q'$ via the pair $(f,\Phi)$ where condition \ref{Def:ReducQ_2} of  Definition$~\ref{def:ReductionQ}$ is replaced by the stronger condition

\begin{itemize} 
  \item[2'.]
For all
 $x$, for all $\vert \psi' \rangle$ such that
\begin{equation}
\Pr [Q'_{\vert f(x)\vert }(f(x),\vert \psi' \rangle)=1]\geq a'-\epsilon
\ ,
\label{Eq:Red1-1} 
\end{equation}
it holds that 
\begin{equation} 
\Pr [Q_{\vert x\vert }(x,\Phi(x,\vert \psi' \rangle)=1]\geq a\ .
\label{Eq:Red2-1} 
\end{equation}  
\end{itemize}
\end{deff}

Note that a robust reduction with parameter $\epsilon=0$ is a reduction according to Definition$~\ref{def:ReductionQ}$.

\begin{prop}\label{Prop:Reductions}
The following hold for all  $\epsilon, \epsilon' \geq 0$:
\begin{enumerate}
\item 
A robust reduction with parameter $\epsilon$ is also a robust reduction with parameter $\epsilon'$ for all $ 0\leq \epsilon'\leq \epsilon$.
\item Transitivity of reductions.
If  there exists a robust reduction from  $Q$ to $Q'$ with parameter $\epsilon$,
and if there exists a robust reduction from  $Q'$ to $Q''$ with parameter $\epsilon'$,
then there exists a robust reduction from  $Q$ to $Q''$ with parameter $\epsilon'$.
\end{enumerate}
\end{prop}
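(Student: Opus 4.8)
The first claim is a pure monotonicity observation, and I would dispatch it in one line. If $(f,\Phi)$ is a robust reduction from $Q$ to $Q'$ with parameter $\epsilon$, then condition 2' holds for every $\vert\psi'\rangle$ whose acceptance probability for $Q'$ on $f(x)$ is at least $a'-\epsilon$. For $0\le\epsilon'\le\epsilon$ we have $a'-\epsilon'\ge a'-\epsilon$, so any $\vert\psi'\rangle$ satisfying $\Pr[Q'_{\vert f(x)\vert}(f(x),\vert\psi'\rangle)=1]\ge a'-\epsilon'$ already satisfies the hypothesis of condition 2' for parameter $\epsilon$; its image under $\Phi$ is therefore a witness for $Q$, which is exactly condition 2' for parameter $\epsilon'$. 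Hence $(f,\Phi)$ is also a robust reduction with parameter $\epsilon'$.

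For the second claim the plan is to compose the two reductions in the obvious way. Write the robust reduction from $Q$ to $Q'$ (parameter $\epsilon$) as $(f,\Phi)$ and the one from $Q'$ to $Q''$ (parameter $\epsilon'$) as $(g,\Psi)$, and define the candidate reduction from $Q$ to $Q''$ by the function $h=g\circ f$ together with the channel $\Xi(x,\cdot)=\Phi\bigl(x,\Psi(f(x),\cdot)\bigr)$, which computes $f(x)$ classically, runs $\Psi$ on input $f(x)$, and then runs $\Phi$ on input $x$. Since $f$ and $g$ are polynomial time computable, so is $h$; and since $\Phi,\Psi$ are efficiently implementable channels and $f(x)$ is produced in polynomial time, $\Xi$ is again efficiently implementable. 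The language condition is immediate by transitivity of the forward map, as $x\in L$ implies $f(x)\in L'$ implies $g(f(x))\in L''$, and the target bounds $a'',b''$ are those of $Q''$ in both reductions so the admissibility constraint on $\epsilon'$ carries over unchanged.

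To verify robustness of the composite with parameter $\epsilon'$, I would fix $x$ and a pure state $\vert\psi''\rangle$ with $\Pr[Q''_{\vert h(x)\vert}(h(x),\vert\psi''\rangle)=1]\ge a''-\epsilon'$. Applying condition 2' of the second reduction with instance $y=f(x)$, the state $\rho'=\Psi(f(x),\vert\psi''\rangle)$ satisfies $\Pr[Q'_{\vert f(x)\vert}(f(x),\rho')=1]\ge a'$. Since $a'\ge a'-\epsilon$, the state $\rho'$ meets the hypothesis of condition 2' of the first reduction, whose conclusion yields $\Pr[Q_{\vert x\vert}(x,\Phi(x,\rho'))=1]\ge a$; and $\Phi(x,\rho')=\Xi(x,\vert\psi''\rangle)$ by construction, which is precisely the required bound. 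Note that only the second reduction's robustness is used essentially, so the composite parameter is $\epsilon'$.

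The one delicate point, and the step I expect to require the most care, is that $\rho'=\Psi(f(x),\vert\psi''\rangle)$ is in general a mixed state, whereas condition 2' of the first reduction is stated for pure inputs, and the hypothesis does not immediately decompose component-wise. The resolution I would use is that the hypothesis of condition 2' extends from pure to mixed inputs automatically: the map $\rho\mapsto\Pr[Q_{\vert x\vert}(x,\Phi(x,\rho))=1]$ is affine, and the feasible set $\{\rho:\Pr[Q'_{\vert f(x)\vert}(f(x),\rho)=1]\ge a'-\epsilon\}$ is compact, convex, and has only pure states as extreme points, so the affine acceptance probability attains its minimum over this set at a pure state, where the pure-state version of condition 2' already guarantees the bound $a$. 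With this extension the composite satisfies condition 2' for parameter $\epsilon'$, completing the proof.
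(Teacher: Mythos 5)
Your proof is correct. The paper itself dismisses this proposition with the single word ``Immediate,'' and the composition argument you give ($h=g\circ f$, $\Xi(x,\cdot)=\Phi(x,\Psi(f(x),\cdot))$, using robustness only of the second reduction so that the composite inherits parameter $\epsilon'$) is clearly what the authors have in mind; part~1 is indeed a one-line monotonicity remark. The one place where you go beyond the paper is the mixed-state issue: condition~2$'$ of Definition~\ref{def:Eps-Reduction} is stated for pure inputs, while the intermediate state $\Psi(f(x),\vert\psi''\rangle)$ is in general mixed, and the paper never addresses this. Your resolution is sound: the feasible set $\{\rho:\Pr[Q'_{\vert f(x)\vert}(f(x),\rho)=1]\geq a'-\epsilon\}$ is compact and convex, and its extreme points are pure (any mixed feasible $\rho$ admits a traceless Hermitian perturbation supported on $\mathrm{supp}(\rho)$ that also annihilates the acceptance POVM element, since $d^2-2\geq 2$ for $d\geq 2$, so $\rho$ is a nontrivial midpoint of feasible states); Bauer's principle, or equivalently decomposing any feasible mixed state as a convex combination of feasible pure states and using affineness of the acceptance probability, then transfers the pure-state guarantee to mixed inputs. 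So your write-up is a strictly more careful version of the paper's implicit argument, and the extra care is warranted rather than superfluous.
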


\begin{proof} Immediate.
\end{proof}

We note that $k$--local Hamiltonian is also $\QMA$ complete using the stronger notion of robust reduction. This follows from the promise that that the smallest eigenvalue of the $k$--local Hamiltonian is less than $a$ or greater than $b$. \rev{This, together with the fact that we need robust reductions to prove Theorem \ref{ThmQMA=QMAcoQMA}, suggests that robust reductions may in fact be the natural notion to use when considering reductions of procedures.}

\section{Eigenspace preserving maps.}
\label{Sec:E-maps}

Reductions (robust or not) do not preserve the eigenspace structure of a quantum verification procedure. 
The following definitions and results repeated from \cite{MS18}, define a mapping between procedures that preserves this structure.

\begin{deff}{\bf Eigenspace preserving map of quantum verification procedures.}
\label{Deff:StrongReduction}
Let $Q$ and $Q'$ be two quantum verification procedures with the same witness space dimensions. We say that  there exists \emph{an eigenspace preserving map from $Q$   to $Q'$} if there exists a polynomial time
 computable strictly  increasing 
  family of functions $\{f_n:n \in \N\}$, $f_n:[0,1]\to [0,1]$, 
such that
for all $x\in \{0,1\}^*$ with $n=\vert x \vert$:  
\begin{enumerate}
\item
there exists a basis $B_Q(x)=\{\vert \psi_i\rangle \}$ of the witness Hilbert space $\HH_m$
which is a joint eigenbasis of $Q$ and $Q'$ for $x$;
\item 
for all $\vert \psi_i\rangle \in B_Q(x)$ with
$p_i=\Pr [Q_{n}(x,\vert \psi_i \rangle)
=1]$ the acceptance probability of $\vert \psi_i\rangle$ for $Q_n$, 
and $p'_i=\Pr [Q'_{n}(x,\vert \psi_i \rangle)
=1]$ the acceptance probability of $\vert \psi_i\rangle$ for $Q'_n$,  
it holds that $p'_i=f_n(p_i)$.
\end{enumerate}
In what follows we will refer to an eigenspace preserving map simply as an \emph{e--map}.
\end{deff}

The reason why we require that the functions $f_n$ 
be polynomial time computable 
is because we wish that the soundness and completeness thresholds  of $Q$ be mapped onto the soundness and completeness thresholds of $Q'$, where we recall
that the soundness and completeness thresholds must be polynomial time computable, see Definition \ref{def:qvpab}.
That is, if $Q$ is an $(a,b)$-procedure that e--maps to $Q'$
via $f_n$, 
then $Q'$ is an $a',b'$-procedure with $a'(n)=f_n(a(n))$ and $b'(n)= f_n(b(n))$.

\begin{prop}{\bf \cite{MS18}}
\label{Prop:eMaps}
The following hold:
\begin{enumerate}
\item Conservation of eigenspaces. If $Q$ e-maps to $Q'$ via $\{f_n\}$, then the eigenspace of $Q$ for $x$ with acceptance probability $p$ equals the eigenspace of $Q'$ for $x$ with acceptance probability $f_n(p)$:
\begin{equation}
\HH_Q(x,p) = \HH_{Q'}(x,f_{\vert x \vert}(p))\ .
\end{equation}
\item Conservation of accepting and rejecting subspaces. 
 Let $Q$ be an $(a,b)$--procedure and let 
 $Q'$ be an $(a',b')$, and suppose that $Q$
  e-maps to $Q'$ via $\{f_n\}$ with $a'(n)=f_n(a(n))$ and
  $b'(n)=f_n(b(n))$, then
\begin{eqnarray}
\HH_Q^{\geq a}(x) &=& \HH_{Q'}^{\geq a'}(x)\nonumber\\
\HH_Q^{\leq b}(x) &=& \HH_{Q'}^{\leq b'}(x)
\end{eqnarray}
\item Transitivity of reductions.
If $Q$ e--maps to $Q'$,
and $Q'$ e--maps to $Q''$,
then $Q$ e--maps to $Q''$.
\end{enumerate}
\end{prop}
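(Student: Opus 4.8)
The plan is to derive all three parts from two facts established earlier: the uniqueness of the eigenspaces (Theorem \ref{Thm:UniquenessBlockStructure}) and the fact that the functions $f_n$ attached to an e--map are \emph{strictly increasing}, hence order-preserving injections. I would treat the parts in the stated order, since part 2 is an immediate corollary of part 1, and part 3 reuses part 1 to manufacture the joint eigenbasis it requires.

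For part 1 I would start from the joint eigenbasis $B_Q(x)=\{\vert\psi_i\rangle\}$ of $Q$ and $Q'$ supplied by Definition \ref{Deff:StrongReduction}, on which each $\vert\psi_i\rangle$ has $Q$-acceptance probability $p_i$ and $Q'$-acceptance probability $f_n(p_i)$. Because $f_n$ is injective, $p_i=p$ holds exactly when $f_n(p_i)=f_n(p)$, so the set of basis vectors lying in the $Q$-eigenspace at $p$ coincides, \emph{as a set}, with the set of basis vectors lying in the $Q'$-eigenspace at $f_n(p)$. Taking spans and invoking Theorem \ref{Thm:UniquenessBlockStructure}, which guarantees both eigenspaces are correctly computed from this particular eigenbasis, yields $\HH_Q(x,p)=\HH_{Q'}(x,f_n(p))$.

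Part 2 then follows purely from order-preservation. Since $\SPECT(Q',x)=f_n(\SPECT(Q,x))$ and $f_n$ is strictly increasing, the condition $p\geq a$ on a point $p\in\SPECT(Q,x)$ is equivalent to $f_n(p)\geq f_n(a)=a'$. Hence the family $\{\HH_Q(x,p):p\geq a\}$ is, by part 1, exactly the family $\{\HH_{Q'}(x,p'):p'\geq a'\}$, and spanning gives $\HH_Q^{\geq a}(x)=\HH_{Q'}^{\geq a'}(x)$; the rejecting case $\HH_Q^{\leq b}(x)=\HH_{Q'}^{\leq b'}(x)$ is identical with the inequalities reversed.

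The transitivity in part 3 is where the real work lies, and I expect it to be the main obstacle. Composing the two e--maps $\{f_n\}$ and $\{g_n\}$, the natural candidate is $\{g_n\circ f_n\}$, which is again strictly increasing and polynomial-time computable, with acceptance probabilities composing as $p\mapsto g_n(f_n(p))$. The difficulty is the first clause of Definition \ref{Deff:StrongReduction}: the two e--maps guarantee a joint eigenbasis of $(Q,Q')$ and a joint eigenbasis of $(Q',Q'')$ respectively, but these need not be the same basis, whereas transitivity demands a single basis diagonalizing both $Q$ and $Q''$. I would resolve this by applying part 1 twice, obtaining $\HH_Q(x,p)=\HH_{Q'}(x,f_n(p))=\HH_{Q''}(x,g_n(f_n(p)))$, so that the eigenspace decomposition of $\HH_m$ induced by $Q$ coincides with the one induced by $Q''$ (the labels differing by the injection $g_n\circ f_n$). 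Consequently \emph{any} eigenbasis of $Q$ already refines the eigenspace decomposition of $Q''$. To see such a basis is automatically an eigenbasis of $Q''$, I would write the ``accept'' event of $Q''$ as a POVM element $E''=\sum_q q\,\Pi_q$, with $\Pi_q$ the projector onto $\HH_{Q''}(x,q)$: for any orthonormal basis refining this decomposition the acceptance probability of a superposition splits into $\sum_i\vert\alpha_i\vert^2 q_i$ with no cross terms, which is exactly the no-interference property of Theorem \ref{Thm:BlockStructure}. Thus an eigenbasis of $Q$ is a joint eigenbasis of $Q$ and $Q''$, each $\vert\psi_i\rangle$ carrying $Q''$-acceptance probability $g_n(f_n(p_i))$, completing the e--map via $\{g_n\circ f_n\}$.
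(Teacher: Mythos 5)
The paper does not actually prove this proposition: it is stated with a citation to \cite{MS18} and no argument is given here, so there is no in-paper proof to compare against. On its own merits your argument is correct and is the natural one. Parts 1 and 2 follow exactly as you say from the joint eigenbasis, the injectivity and monotonicity of $f_n$, and Theorem \ref{Thm:UniquenessBlockStructure}; the only implicit step worth making explicit is that $\SPECT(Q',x)=f_{\vert x\vert}(\SPECT(Q,x))$, which holds because the joint eigenbasis spans the whole (common) witness space, and you do use this in part 2.

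You correctly identify the one genuine issue in part 3, namely that the two joint eigenbases need not coincide, and your resolution is sound: part 1 applied twice shows the eigenspace decompositions of $Q$ and $Q''$ agree up to relabelling by $g_n\circ f_n$, and any orthonormal basis refining the eigenspace decomposition of $Q''$ satisfies the no-interference property of Theorem \ref{Thm:BlockStructure} (the POVM element for acceptance is $\sum_q q\,\Pi_q$, so cross terms vanish), hence is an eigenbasis of $Q''$. Together with the fact that $g_n\circ f_n$ is strictly increasing and polynomial-time computable, this gives the e--map from $Q$ to $Q''$. I see no gap.
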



Eigenspace preserving maps are characterised by the family of functions $\{f_n\}$. What freedom do we have in choosing the functions $f_n$? The next section introduced iterative procedures, which are of interest in themselves. Iterative procedures are then used in section \ref{Sec:GenAmp-2} to show that
\revv{we have a lot of freedom in choosing the functions $f_n$.}

\section{ Iterative Procedures}\label{Sec:GenAmp}

The amplification result for $\FQMA$, Theorem \ref{Thm:FQMASR}, is based on the method introduced in \cite{MW05} in which a quantum verification procedure is run repeatedly backwards and forwards. Here we generalise this approach.


\begin{deff}{\bf Iterative Procedures.}
\label{Deff:GAmpProc}
Let $Q=\{Q_{n} : n \in \N\}$ be a quantum verification procedure with parameters $m$ (the size of the witness space) and $k$ (the size of the ancilla space). 

Recall that the Hilbert space $\HH$ on which $Q_n$ acts can be decomposed into $\HH=\HH^w_m\otimes \HH_k^a$ 
where $\HH^w_m$ is the witness Hilbert space comprising $m$ qubits, and $\HH_k^a$ is the ancilla Hilbert space comprising $k$ qubits.  The Hilbert space can also be decomposed into
$\HH=\HH_1^{out}\otimes \HH^{rest}_{m+k-1}$ 
where $\HH^{out}_1$ of dimension 2 
is the qubit which is measured at the end of the quantum verification procedure, and $ \HH^{rest}_{m+k-1}$ is the Hilbert space of the remaining $m+k-1$ qubits.

Define the following two projectors
\begin{eqnarray}
\Pi_0&=&I_m\otimes \vert 0^k\rangle \langle 0^k\vert\\
\Pi_1&=& V_x^\dagger \left(  \vert 0\rangle \langle 0\vert \otimes I_{m+k-1}^{rest}\right)V_x
\label{Eq:Pi1}
\end{eqnarray}
where $V_x$ is the unitary transformation realised by the quantum verification procedure.
That is $\Pi_0$ projects onto valid inputs to the quantum verification procedure, and $\Pi_1$ projects onto states on which the quantum verification procedure will accept with certainty.

Let $N\in \poly$ and let 
\begin{eqnarray}
G &=& \{g_n : \{0,...,N(n)\}\rightarrow [0,1] \ :\ 
x\in  \N \} \ \ 
\end{eqnarray}
 be a family of polynomially time computable functions.  
 
 The {\em $(N,G)$--iterative procedure obtained from $Q$} is a quantum verification procedure $Q_{it}$ 
with the same witness space dimension $m$ as $Q$, 
and an ancilla space which we decompose into a first space of $k$ qubits (the same dimension as for $Q$) and an additional working space of dimension $k'$. This second space can be classical. The input space of $Q_{it}$ is thus $\HH_m\otimes \HH_k\otimes \HH_{k'}$. We view the projectors $\Pi_0$ and $\Pi_1$ as acting on the space $\HH^w_m\otimes \HH_k$.

On input $(x,\vert \psi \rangle\vert 0^k \rangle\vert 0^{k'} \rangle)$ 
with $\vert \psi \rangle \in \HH_m$, $Q_{it}$ acts as follows:
\begin{enumerate}
\item
Repeat until $N+1$ measurement outcomes have been registered:
\begin{enumerate}
\item measure $\{ \Pi_0, 1 - \Pi_0\}$; \label{list2(b)}
\item measure $\{ \Pi_1, 1 - \Pi_1\}$.
\end{enumerate}
(Note: the result of the first measurement will necessarily be $1$ since the initial state of the first $m+k$ qubits ($\vert \psi \rangle \vert 0^k\rangle$) is an eigenstate of $\Pi_0$ with eigenvalue $1$).
\item
Inspect the sequence of results. 
\begin{enumerate}
\item If results $i$ and $i+1$ differ set $z_i=0$; 
\item while if results $i$ and $i+1$ are equal set $z_i=1$.
\end{enumerate}
\item Set $s=\sum_{i=1}^N z_i$.
\item Generate a random bit $c\in \{0,1\}$ satisfying $\Pr [c=1]=g_{\vert x\vert} (s)$ and 
$\Pr [c=0]=1- g_{\vert x\vert}(s)$. 
\item Output $c$.
\end{enumerate}
\end{deff}

Note that, in view of step 2,  the functions $g_n(s)\in[0,1]$ can be interpreted as probability distributions over the set $\{0,1\}$. One can extend the above generalised amplification procedure and take the $g_n(s)$ to be probability distributions over the alphabet $\{0,...,d-1\}$, in which case the generalised amplification would yield a $d$-outcome quantum verification procedure. 

A particularly interesting case of iterative procedure is when the $g_n$ are threshold functions:
\begin{eqnarray}
g_n(s)&=&0 \mbox{ if } s<s_0(n)\nonumber\\
&=&1  \mbox{ if } s\geq s_0(n)
\end{eqnarray}
for some polynomial time  computable function $s_0:\N \to \N$ with
$s_0(n)\in \{0,...N(n)\}$. The $\QMA$ amplification described in \cite{MW05} is based on a threshold iterative procedure.

In the following we will denote by $f(k; N,p)$ the probability mass function of the binomial 
distribution with parameters $N$ and $p$. \rev{ Thus if 
\begin{equation}
X\sim B(N,p)\ ,
\label{Eq:Binomial}
\end{equation}
} 
then 
\begin{equation}
f(k; N,p) = \Pr [X = k] = \binom{N}{k} p^{k} (1-p)^{N-k} \ .
\end{equation}
In addition, for any $g:\{0,...,N\}\to [0,1]$, we denote by $P_g$ the function
\begin{eqnarray}
P_g&:&[0,1]\rightarrow [0,1] \nonumber\\
& :& p \rightarrow P_g(p) =  E[g(X)]=\sum_{k=0}^N f(k; N,p) g(k) \ .
\nonumber\\
\label{Eq:Pg}
\end{eqnarray}

\begin{thm}{\bf Properties of Iterative Procedures.}\label{Thm:PropertiesGenAmpl}
Let $Q=\{Q_{n} : n \in \N\}$ be a quantum verification procedure, let $(N,G)$ be as in Definition \ref{Deff:GAmpProc}, and let 
 $Q_{it}$ be the $(N,G)$-iterative procedure obtained from $Q$. Then the following hold:
 \begin{enumerate}
 \item There exists a basis $B_Q(x) = \{\vert \psi_i\rangle\}$  of the witness Hilbert space $\HH_m$ which is a joint eigenbasis of $Q$ and $Q_{it}$ for $x$.
\item 
Let $\vert \psi_i \rangle$ be an eigenstate of $Q$ for $x$ (and therefore also an eigenstate of $Q_{it}$ for $x$). Denote by $p_i$ and $p'_i$ the acceptance probabilities of $\vert \psi_i \rangle$ for $Q$ and $Q_{it}$ respectively.

If the initial state of the witness space is $\vert \psi_i \rangle$, then the following hold:
\begin{enumerate}
\item 
The variables $z_i$ defined in step 2 are $0-1$ i.i.d. random variables satisfying
$\Pr [z_i=1]=p_i$ (i.e. equals the acceptance probability of $\vert \psi_i \rangle$).
 \item The acceptance probabilities of the iterative procedure is given by
 \rev{
\begin{eqnarray}
 p'_{i} &=& P_{g_{n}}(p_i) \nonumber\\
 &=&\sum_{k=0}^N f(k; N,p_i) g_{n}(k) 
\label{eq:P'i}
\end{eqnarray}
where $n=\vert x \vert$.}
\item  Whenever at step \ref{list2(b)} of Definition \ref{Deff:GAmpProc}  the outcome of measurement $\{ \Pi_0, 1 - \Pi_0\}$ is $1$, the state of the witness space is the original witness $\vert \psi_i \rangle$.\label{QNDProperty2}
\end{enumerate}
\end{enumerate}
\end{thm}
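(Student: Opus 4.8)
The plan is to reduce everything to the geometry of the two projectors $\Pi_0$ and $\Pi_1$ via Jordan's lemma, exactly as in the analysis underlying Theorem \ref{Thm:BlockStructure}. First I would apply Jordan's lemma to $\Pi_0,\Pi_1$ acting on $\HH^w_m\otimes\HH_k$ to obtain an orthogonal decomposition into subspaces $S_i$ of dimension at most two, each invariant under both projectors. In a two-dimensional block $S_i$, let $\vert v_i^0\rangle$ span the range of $\Pi_0$ restricted to $S_i$ and $\vert v_i^1\rangle$ span the range of $\Pi_1$ restricted to $S_i$, with $\vert\langle v_i^0\vert v_i^1\rangle\vert^2=p_i$; one-dimensional blocks have $p_i\in\{0,1\}$. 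Since the valid inputs are precisely the range of $\Pi_0$, the states defined by $\vert\psi_i\rangle\vert 0^k\rangle=\vert v_i^0\rangle$ form a basis of $\HH_m$, and by construction $\Vert\Pi_1\vert\psi_i\rangle\vert 0^k\rangle\Vert^2=p_i$ is the acceptance probability of $\vert\psi_i\rangle$ for $Q$. This is the eigenbasis $B_Q(x)$ of Theorem \ref{Thm:BlockStructure}; it remains to show it is also an eigenbasis for $Q_{it}$ and to compute the associated acceptance probabilities.

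Next I would analyze a single eigenstate input $\vert\psi_i\rangle\vert 0^k\rangle=\vert v_i^0\rangle$. Because both measurements preserve $S_i$, the state stays in this block throughout and, after each measurement, collapses to one of the two $\Pi_0$-eigenvectors or one of the two $\Pi_1$-eigenvectors of $S_i$; hence the outcome sequence $r_1,\dots,r_{N+1}$ is a two-state Markov chain. The key geometric fact is that the probability that two consecutive (necessarily alternating-type) measurements agree equals $\vert\langle v_i^0\vert v_i^1\rangle\vert^2=p_i$ \emph{regardless of the current state}, since the angle between the $\Pi_0$- and $\Pi_1$-bases is the same whether one measures from $\vert v_i^0\rangle$ or from its orthogonal partner in $S_i$. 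As this ``match probability'' is $p_i$ independently of the chain's position, the indicators $z_j=\mathbf{1}[r_{j+1}=r_j]$ are i.i.d.\ Bernoulli($p_i$), which is part 2(a). Consequently $s=\sum_{j=1}^N z_j\sim B(N,p_i)$, and $p'_i=\Pr[c=1]=E[g_n(s)]=\sum_{k=0}^N f(k;N,p_i)\,g_n(k)=P_{g_n}(p_i)$, which is \eqref{eq:P'i}. Part 2(c) is then immediate: when a $\{\Pi_0,1-\Pi_0\}$ measurement returns $1$, the post-measurement state lies in the range of $\Pi_0$ inside $S_i$, which is exactly $\vert v_i^0\rangle=\vert\psi_i\rangle\vert 0^k\rangle$.

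It remains to establish part 1, the no-interference property, which I expect to be the main obstacle, since the first $\Pi_0$ measurement leaves any input $\sum_i\alpha_i\vert\psi_i\rangle\vert 0^k\rangle$ unchanged and so does not by itself decohere the blocks. The clean way to handle this is to let $P_i$ denote the orthogonal projector onto $S_i$ and observe that every measurement operator ($\Pi_0$, $\Pi_1$ and their complements) commutes with each $P_i$; hence any sequential-measurement operator $K=M_{r_{N+1}}\cdots M_{r_1}$ is block-diagonal and $K^\dagger K$ commutes with all $P_i$. Therefore, for $\rho=\vert\phi\rangle\langle\phi\vert$ with $\vert\phi\rangle=\sum_i\alpha_i\vert v_i^0\rangle$, the probability of each outcome sequence satisfies $\Tr[K^\dagger K\rho]=\Tr[K^\dagger K\sum_i P_i\rho P_i]$, so the joint distribution of the outcomes, and thus of $s$ and of $c$, is identical to that obtained from the dephased mixture $\sum_i\vert\alpha_i\vert^2\vert v_i^0\rangle\langle v_i^0\vert$. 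By convexity of the acceptance probability on mixed inputs this yields $\sum_i\vert\alpha_i\vert^2 p'_i$, i.e.\ the no-interference identity \eqref{Eq:eigenbasis} for $Q_{it}$, so $B_Q(x)$ is a joint eigenbasis of $Q$ and $Q_{it}$ with $Q_{it}$-eigenvalue $p'_i$ for $\vert\psi_i\rangle$. The only remaining bookkeeping is the degenerate one-dimensional blocks ($p_i\in\{0,1\}$), where the outcome sequence is constant or strictly alternating and the formulas reduce consistently to $B(N,1)$ and $B(N,0)$.
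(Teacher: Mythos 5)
Your proof is correct and is exactly the Marriott--Watrous/Jordan's-lemma argument that the paper has in mind: the paper explicitly omits the proof, stating only that it is a ``simple application of the construction in [MW05]'', and your write-up supplies precisely those details (the two-dimensional invariant blocks, the state-independent agreement probability $p_i$ giving i.i.d.\ $z_j$, the collapse back to $\vert v_i^0\rangle$ for nondestructiveness, and the block-diagonal dephasing argument for the no-interference property). No gaps; the handling of the degenerate one-dimensional blocks and the part-1 dephasing step are exactly the points one would want spelled out.
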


\begin{proof} Not given. Simple application of the construction in \cite{MW05}. \end{proof}

\section{ \rev{Iterative Procedures and Eigenspace Preserving Maps}}\label{Sec:GenAmp-2}

\rev{
We now show how iterative procedures can be used to construct e-maps. 
}

\rev{
The results in the present section are purely classical, and can be phrased as the following problem. Given $N$ draws from a coin with  bias $p$, generate a new coin with bias $f(p)$, where $p$ is unknown, but the function $f:[0,1]\to [0,1]$ is known. To this end introduce a function $g:\{0,...,N\}\to [0,1]$, and upon finding that $k$ of the coins come out head, toss a coin with bias $g(k)$. The coin so obtained will have bias 
$p'=\sum_{k=0}^N f(k; N,p) g(k)$. This is Eq. \eqref{eq:P'i}.
}

In addition we need to impose the further  conditions: the function $f$ is strictly increasing (see Definition \ref{Deff:StrongReduction} of e-maps), and both $f$ and $g$ are polynomial time computable (see Definitions \ref{Deff:StrongReduction}  and \ref{Deff:GAmpProc}).

First we show in Theorem \ref{Thm:GAmpStrongReduc} that, under the natural condition that the function $g$ is increasing and non constant, the function $f$ is strictly increasing.

\revv{
Second we show in Theorem \ref{Thm:SuperStrongAmpl} that by choosing the function $g$ appropriately, we have a lot of freedom in choosing the function $f$. More precisely we show that for any $M$, if $N$ is large enough we can choose $g$ so that $f$ is strictly increasing and passes through $M$ predefined points $(s_i,t_i)$, i.e. $f(s_i)=t_i$, where $i=1,...,,M$ (to which we add the points $(0,0)$ and $(1,1)$). The main remaining condition is that the $s_i$ have inverse polynomial gaps.
}

\begin{thm}{\bf Iterative procedures with  increasing functions $g_n$ are eigenspace preserving maps.}\label{Thm:GAmpStrongReduc}
Given a quantum verification procedure $Q$, and $(N,G=\{g_n\})$ as in Definition \ref{Deff:GAmpProc} with the functions $g_n:\{0,...,N\}\rightarrow [0,1] $   increasing and non constant (by this we mean that $g(k+1)\geq g(k)$ and $g(N)>g(0)$), then $Q$ 
e--maps to the $(N,G)$-generalised amplification procedure $Q_{it}$ obtained from $Q$.
\end{thm}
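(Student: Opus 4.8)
The plan is to verify the two requirements of Definition~\ref{Deff:StrongReduction}: that $Q$ and $Q_{it}$ share a joint eigenbasis for each $x$, and that the induced map on acceptance probabilities is a strictly increasing, polynomial-time computable function $f_n$. The first requirement is handed to us directly by Theorem~\ref{Thm:PropertiesGenAmpl}, part~1, which asserts that there exists a basis $B_Q(x)=\{\vert\psi_i\rangle\}$ that is a joint eigenbasis of $Q$ and $Q_{it}$ for $x$. So the work lies entirely in the second requirement: I must exhibit the function relating the acceptance probabilities and show it is strictly increasing. Theorem~\ref{Thm:PropertiesGenAmpl}, part~2(b), tells me precisely what this function is, namely $p'_i = P_{g_n}(p_i)$ where $P_{g_n}$ is defined in Eq.~\eqref{Eq:Pg}. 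Hence the candidate for $f_n$ in the e-map is exactly $f_n = P_{g_n}$, and the whole theorem reduces to a purely analytic claim about the polynomial $P_g$.

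The core of the proof is therefore to show that $P_g(p)=\sum_{k=0}^N f(k;N,p)\,g(k)$ is strictly increasing in $p$ on $[0,1]$ whenever $g$ is increasing and non-constant in the stated sense. The natural approach is to differentiate with respect to $p$ and show $P_g'(p)>0$ on the open interval $(0,1)$ (the endpoints then follow by continuity, or are handled separately). Differentiating the Bernstein-type polynomial $\binom{N}{k}p^k(1-p)^{N-k}$ and collecting terms, one obtains the standard identity
\begin{equation}
P_g'(p) = N\sum_{k=0}^{N-1} f(k;N-1,p)\,\bigl(g(k+1)-g(k)\bigr)\ .
\end{equation}
This is the key computation. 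Since $g$ is increasing, every difference $g(k+1)-g(k)\geq 0$, and since $g$ is non-constant, at least one such difference is strictly positive; because $f(k;N-1,p)>0$ for all $k$ when $p\in(0,1)$, the sum is strictly positive. Hence $P_g$ is strictly increasing on $[0,1]$, which is exactly what Definition~\ref{Deff:StrongReduction} demands of $f_n$.

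It remains to address polynomial-time computability, the last condition required for an e-map. Here I would observe that $f_n = P_{g_n}$ is a finite sum of $N(n)+1$ terms, each a product of a binomial coefficient, powers of $p$ and $1-p$, and the value $g_n(k)$; since $N\in\poly$ and the $g_n$ are polynomial-time computable by hypothesis (Definition~\ref{Deff:GAmpProc}), $P_{g_n}$ is polynomial-time computable as well. Putting these pieces together—the joint eigenbasis from Theorem~\ref{Thm:PropertiesGenAmpl}(1), the explicit relation $p'_i=P_{g_n}(p_i)$ from part~2(b), and the strict monotonicity and computability of $P_{g_n}$—gives that $Q$ e-maps to $Q_{it}$ via $\{P_{g_n}\}$. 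I expect the main obstacle to be purely technical rather than conceptual: correctly carrying out the derivative computation and the telescoping rearrangement that produces the clean $B(N-1,p)$-weighted form above, since a careless index shift is the easiest place to slip. Everything else follows immediately from the already-established Theorem~\ref{Thm:PropertiesGenAmpl}.
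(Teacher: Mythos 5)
Your proposal is correct, and the overall scaffolding (joint eigenbasis from Theorem~\ref{Thm:PropertiesGenAmpl}(1), the relation $p'_i=P_{g_n}(p_i)$ from part 2(b), polynomial-time computability of $P_{g_n}$ from that of $g_n$) matches the paper exactly. Where you genuinely diverge is in the proof of strict monotonicity of $P_g$. You invoke the exact telescoping identity
\begin{equation}
P_g'(p) \;=\; N\sum_{k=0}^{N-1} f(k;N-1,p)\,\bigl(g(k+1)-g(k)\bigr)\ ,
\end{equation}
which is correct (using $k\binom{N}{k}=N\binom{N-1}{k-1}$ and $(N-k)\binom{N}{k}=N\binom{N-1}{k}$), and from which strict positivity on $(0,1)$ is immediate: every increment $g(k+1)-g(k)$ is nonnegative, at least one is strictly positive, and every weight $f(k;N-1,p)$ is strictly positive for $p\in(0,1)$. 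The paper instead works with the raw derivative $\sum_k \binom{N}{k}p^{k-1}(1-p)^{N-k-1}(k-Np)g(k)$, splits the sum at $\lfloor Np\rfloor$ and $\lceil Np\rceil$, bounds $g(k)$ by $g(\lfloor Np\rfloor)$ on the lower block and $g(\lceil Np\rceil)$ on the upper block, and uses the identity that the two blocks balance (Eq.~\eqref{Eq:pos-neg}) to conclude nonnegativity, with a separate case analysis on the location of the index $k_0$ where $g$ strictly increases in order to upgrade to strict positivity. Your route is shorter, yields an exact formula rather than an inequality chain, and makes the strictness argument one line instead of a three-case discussion; the paper's route avoids the index-shift manipulation but at the cost of handling the limits $p\to 0,1$ of individually singular terms and the case analysis. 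Both are valid; as you note, your endpoint treatment (strict increase on $(0,1)$ plus continuity gives strict increase on $[0,1]$) closes the argument.
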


\begin{proof}
The condition that $g$ is increasing and non constant  implies that there is some $k_0\in \{0,...,N-1\}$ such that
$g(k_0+1) > g(k_0)$ (with a strict inequality).

We need to show that under the conditions that the functions $g_n$ are polynomial time computable, increasing and non constant, $p'_i$ given in Eq. \eqref{eq:P'i}, viewed as a function of $p_i$, is a polynomial time computable strictly increasing function.

It is therefore sufficient to show that for any  polynomial time  computable strictly increasing function $g$,  the function $P_g(p)$ defined in Eq. \eqref{Eq:Pg}
is  a polynomial time  computable strictly increasing function of $p$. 

The fact that $P_g(p)$ is  polynomial time  computable follows immediately from the fact that the functions $g_n$ are polynomial time computable. 

We now show that $P_g(p)$ is strictly increasing.

First note that for all $p\in[0,1]$, we have
\begin{equation}
1= \sum_{k=0}^N f(k; N,p)\ .
\label{Eq:Bin=1}
\end{equation}
Taking the derivative of Eq. \eqref{Eq:Bin=1} with respect to $p$, we have
\begin{eqnarray}
0&=& \sum_{k=0}^N \binom{N}{k} p^{k-1} (1-p)^{N-k-1} (k-Np)\ .\ 
\label{Eq:DerBin0}
\end{eqnarray}
(Note that as written  this expression is well defined only  for $0 < p < 1$. However the limits $p\to 0$ and $p\to 1$ are finite. We extend Eq. \eqref{Eq:DerBin0} and all expressions below to $p=0$ and $p=1$ by replacing the ill--defined terms by their limit).

Hence 
\begin{eqnarray}
&\sum_{k=0}^{\lfloor Np \rfloor} \binom{N}{k} p^{k-1} (1-p)^{N-k-1} (Np-k)& \nonumber\\
&= 
\sum_{k=\lceil Np \rceil}^N \binom{N}{k} p^{k-1} (1-p)^{N-k-1} (k-Np)&\ ,\quad 
\label{Eq:pos-neg}
\end{eqnarray}
where we note that all terms under summation signs are positive.

Now take the derivative of Eq. \eqref{Eq:Pg} with respect to $p$ to obtain
\begin{eqnarray}
&\frac{\partial P_g}{\partial p} &=
 \sum_{k=0}^N \binom{N}{k} p^{k-1} (1-p)^{N-k-1} (k-Np)g(k)\nonumber\\
  &=&
  -\sum_{k=0}^{\lfloor Np \rfloor} \binom{N}{k} p^{k-1} (1-p)^{N-k-1} (Np-k) g(k)
  \nonumber\\
  & & +
 \sum_{k=\lceil Np \rceil}^N \binom{N}{k} p^{k-1} (1-p)^{N-k-1} (k-Np) g(k)\nonumber\\
   & & \label{Eq:StrictIncrease1}\\
 &\geq &
  -\sum_{k=0}^{\lfloor Np \rfloor} \binom{N}{k} p^{k-1} (1-p)^{N-k-1} (Np-k) g(\lfloor Np \rfloor)
  \nonumber\\
 & &  +
 \sum_{k=\lceil Np \rceil}^N \binom{N}{k} p^{k-1} (1-p)^{N-k-1} (k-Np) g(\lceil Np \rceil)\nonumber\\
   & & \label{Eq:StrictIncrease2}\\
&=&
 \sum_{k=\lceil Np \rceil}^N \binom{N}{k} p^{k-1} (1-p)^{N-k-1} (k-Np) 
 \nonumber
 \\
 & & \quad \quad \quad \quad \times
  \left( g(\lceil Np \rceil) - g(\lfloor Np \rfloor) \right)
  \nonumber
 \\
   & & \label{Eq:StrictIncrease3}\\
  & & \geq 0 \label{Eq:StrictIncrease4}
\end{eqnarray}
where for the last equality we have used Eq. \eqref{Eq:pos-neg}.

Now note that either $k_0\in\{0,...,\lfloor Np \rfloor-1\}$, or $k_0=\lfloor Np \rfloor$, or 
$k_0\in\{\lceil Np \rceil,...,N-1\}$. In the second case, there is a strict  inequality when going from Eq. \eqref{Eq:StrictIncrease3} to Eq. \eqref{Eq:StrictIncrease4}. In the first case or the third case, there is a strict  inequality when going from Eq. \eqref{Eq:StrictIncrease1} to Eq. \eqref{Eq:StrictIncrease2}. Therefore under the condition that $g$ is increasing and nonconstant, $\frac{\partial P_g}{\partial p} >0$, i.e.
 $P_g$ is strictly increasing.
\end{proof}

This result shows that there are many e--maps. We now show that  
\revv{we have a lot of freedom in choosing}
the functions $\{f_n\}$ that define the e--map.

\begin{thm}
\label{Thm:SuperStrongAmpl}
Let $Q=\{Q_n\}$ be a quantum verification procedures with witness Hilbert space dimension $m(n)$. Let $u, v, M \in \poly$, 
and denote $\epsilon(n) = 1/u(n)$ and $\delta(n)=\exp(-v(n))$. 
Let $S$ and $T$ denote the following sets:
\begin{eqnarray}
S&=&\{S_n :n\in \N\}\nonumber\\
S_n &=& (s_0,s_1,...,s_M,s_{M+1})\nonumber\\
s_i &\in & [0,1]
\ , \ s_0=0\ , \ s_{M+1}=1\nonumber\\
\forall i \ &:&\ s_i - s_{i-1} \geq \epsilon >0 \label{Eq:pbound}\\
\forall i \ &:&\  s_i \mbox{ is polynomial time  computable}\nonumber
\end{eqnarray}
and
\begin{eqnarray}
T&=&\{T_n :n\in \N\}\nonumber\\
T_n &=& (t_0,t_1,...,t_M,t_{M+1})\nonumber\\
t_i &\in & [0,1]
\ , \ t_0=0\ , \ t_{M+1}=1\nonumber\\
\forall i \ &:&\ t_i - t_{i-1} \geq \delta >0 \label{Eq:qbound}\\
\forall i \ &:&\  t_i \mbox{ is polynomial time  computable}\nonumber
\end{eqnarray}
where $M=M(n)$.
(Note that the definitions of $S$ and $T$ differ only by Eqs. 
 \eqref{Eq:pbound} and \eqref{Eq:qbound} 
 which ensure that $s_i$ and $t_i$ are ranked in increasing order, but with different gaps).

Then there exists a quantum verification procedure $Q'$ such that $Q$ e-maps to $Q'$ and the polynomial time  computable strictly increasing functions $f_n$ that define the e-map satisfy $f_n(s_i)=t_i$ for all $i\in \{0,...,M+1\}$.
\end{thm}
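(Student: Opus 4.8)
The plan is to realise $Q'$ as an $(N,G)$--iterative procedure obtained from $Q$ in the sense of Definition \ref{Deff:GAmpProc}, for a suitable polynomial $N$ and a suitable family of increasing non-constant functions $g_n$. By Theorem \ref{Thm:GAmpStrongReduc} such a procedure is automatically e-mapped to by $Q$, and the e-map is given by the strictly increasing functions $f_n=P_{g_n}$ of Eq.~\eqref{Eq:Pg}. Since by Theorem \ref{Thm:PropertiesGenAmpl} the acceptance probabilities transform as $f_n(p)=P_{g_n}(p)=\sum_{k=0}^N f(k;N,p)\,g_n(k)$, it suffices to construct, for each $n$, a polynomial time computable increasing non-constant $g_n:\{0,\dots,N\}\to[0,1]$ satisfying $P_{g_n}(s_i)=t_i$ for all $i$. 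Note that $P_{g_n}(0)=g_n(0)$ and $P_{g_n}(1)=g_n(N)$, so the boundary constraints at $s_0=0$ and $s_{M+1}=1$ simply force $g_n(0)=0$ and $g_n(N)=1$.

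Next I would use binomial concentration to reduce the construction to a well-conditioned linear system. I would choose $N\in\poly$ with $N\geq C\,u(n)^2\,(v(n)+\log(M(n)+1))$ for a suitable absolute constant $C$; this is polynomial because $u,v,M\in\poly$. I would partition $\{0,\dots,N\}$ into $M+2$ consecutive blocks $I_0,\dots,I_{M+1}$, where $I_i$ is centred on $\lfloor N s_i\rfloor$ with block boundaries placed at the midpoints $\lfloor N(s_{i-1}+s_i)/2\rfloor$, and then look for a $g_n$ that is constant, equal to some $c_i\in[0,1]$, on each block $I_i$. For such a $g_n$,
\begin{equation}
P_{g_n}(s_j)=\sum_{i=0}^{M+1} A_{ji}\,c_i,\qquad A_{ji}=\Pr\!\left[\,B(N,s_j)\in I_i\,\right],
\end{equation}
so the requirement $P_{g_n}(s_j)=t_j$ becomes the linear system $A\mathbf{c}=\mathbf{t}$. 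Because $s_0=0$ and $s_{M+1}=1$ are deterministic inputs to the binomial, the first and last rows of $A$ are the corresponding standard basis vectors, which directly recover $c_0=0$ and $c_{M+1}=1$.

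The crux is to show that the solution $\mathbf{c}=A^{-1}\mathbf{t}$ exists, lies in $[0,1]^{M+2}$, and is increasing, so that $g_n$ meets the hypotheses of Theorem \ref{Thm:GAmpStrongReduc}. Here I would invoke a Chernoff/Hoeffding bound: since consecutive $s_i$ are separated by at least $\epsilon=1/u$ by Eq.~\eqref{Eq:pbound}, every block boundary sits at distance $\geq N\epsilon/2$ from each relevant mean $N s_j$, so $A_{ji}\leq \exp(-N\epsilon^2/2)$ for $i\neq j$. Writing $A=I+E$ gives $\|E\|_\infty\leq 2(M+1)\exp(-N\epsilon^2/2)$, which the choice of $N$ makes smaller than $\delta/2$, where $\delta=\exp(-v)$. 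Thus $A$ is strictly diagonally dominant, hence invertible, and $\|\mathbf{c}-\mathbf{t}\|_\infty\leq\|E\|_\infty/(1-\|E\|_\infty)<\delta/2$. Combining this with the gap $t_{i+1}-t_i\geq\delta$ from Eq.~\eqref{Eq:qbound} yields $c_{i+1}-c_i\geq\delta-2\|\mathbf{c}-\mathbf{t}\|_\infty>0$, so $\mathbf{c}$ is increasing and lies in $[0,1]$. The entries of $A$ are polynomials in the polynomial time computable numbers $s_j$ and $\mathbf{t}$ is polynomial time computable, so $\mathbf{c}$, and hence $g_n$, is polynomial time computable; and $g_n(0)=0\neq1=g_n(N)$ makes it non-constant. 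Applying Theorem \ref{Thm:GAmpStrongReduc} to the resulting $(N,G)$--iterative procedure $Q'=Q_{it}$ then produces an e-map from $Q$ with $f_n=P_{g_n}$ strictly increasing and $f_n(s_i)=t_i$, as required.

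I expect the monotonicity control of the last paragraph to be the main obstacle. The difficulty is that the target gaps $\delta=\exp(-v)$ are exponentially small whereas the source gaps $\epsilon=1/u$ are only inverse polynomial, so one cannot rely on the $O(1/\sqrt{N})$ resolution of the binomial mean; one must instead exploit the exponential decay of the binomial tails, which is precisely what permits separation below scale $\delta$ to be achieved with a merely polynomial $N$.
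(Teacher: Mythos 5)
Your proposal is correct and follows essentially the same route as the paper: both realise $Q'$ as an $(N,G)$--iterative procedure, invoke Theorem~\ref{Thm:GAmpStrongReduc} to get the e-map with $f_n=P_{g_n}$, and use binomial concentration to reduce the interpolation constraints $P_{g_n}(s_i)=t_i$ to a near-identity, diagonally dominant linear system whose solution perturbs the targets by an amount that can be driven below $\delta/2$ by taking $N$ polynomially large. The only difference is your ansatz for $g_n$ (piecewise constant on blocks, solving $A\mathbf{c}=\mathbf{t}$ for the values directly) versus the paper's continuous piecewise linear $g_n$ with small correction parameters $\lambda_i$ solving $(\mathbb{I}-A)\lambda=B$; this is a cosmetic variation, and your version is if anything slightly cleaner.
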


The fact that the $s_i$'s have inverse polynomial gaps, see Eq. \eqref{Eq:pbound}, is the main remaining constraint on the choice of the functions $f_n$. When we use Theorem 
\ref{Thm:SuperStrongAmpl} this will however not be a limitation, as typically the $s_i$'s  will be taken to be soundness and/or completeness bounds, which always have inverse polynomial gaps.
\revv{
We do not expect that this condition can be lifted because the spectra of quantum verification procedures are related to their complexity. In particular it seems unlikely that one could expand exponentially small intervals in the spectrum into inverse polynomial intervals. Indeed it is known that when the gap between the largest and second largest eigenvalue of the quantum verificaton procedure  and/or 
the gap between the completeness and soundness bounds are taken to be exponentially small, one has new, much larger, complexity classes such as PP and PSPACE, see 
 \cite{DGF20} for details and exact definitions. If we could take the $s_i$'s to have exponentially small gaps one could probably show that these complexity classes collapse with QMA.
 }

\begin{proof}

{\bf Outline of the proof.}
We will use Theorem \ref{Thm:GAmpStrongReduc}.
We will show that there exists $(N,G=\{\bar g_n\})$ as in Definition \ref{Deff:GAmpProc}, with the functions $\bar g_n:\{0,...,N\}\rightarrow [0,1] $  polynomial time  computable and strictly increasing, such that the  $(N,G)$-generalised amplification procedure $Q'$ obtained from $Q$ satisfies the conditions of Theorem \ref{Thm:SuperStrongAmpl}. The  polynomial time  computable strictly increasing functions $f_n$ that define the e-map will be given by 
$P_{\bar g_n}$.

\rev{
It is convenient to extend the discrete functions $\bar g_n$ to continuous functions $g_n:[0,N]\to[0,1]$ such that $\bar g_n$ coincides with $g_n$ on the integers. The functions 
$g_n$ are taken to be continuous, piece wise linear and to satisfy $g_n(N s_i)=t_i +\lambda_i$ where $\lambda_i$ are small parameters that we fix later in the proof. We then show that the conditions $f_n(s_i)=t_i$ yield a set of $M$ linear equations for $\lambda_i$. We show that for large $N$ the solutions $\lambda_i$ of these equations are exponentially small. The smallness of the $\lambda_i$ is also a sufficient condition for the functions $g_n$ to be strictly increasing, which implies that the $f_n$ are strictly increasing (see Theorem \ref{Thm:GAmpStrongReduc}). 
}

{\bf Definition of $\bar g_n$.}

Choose a function $N\in \poly$. We will determine how large $N$ must be at the end of the proof.

We denote by
\begin{eqnarray}
\Delta_i &=&[N(s_i-\frac{\epsilon}{3}), N(s_i+\frac{\epsilon}{3})]\ , \nonumber\\
\Delta_0 &=& [0, N\frac{\epsilon}{3}]\ ,
\nonumber\\
\Delta_{M+1} &=&  [1 - N\frac{\epsilon}{3},1]\ ,\nonumber\\
\Xi_i &=& ( N(s_{i}+\epsilon/3) , N(s_{i+1}-\epsilon/3) )\ ,
\end{eqnarray}
 where $ i=1,...,M$.
And we denote by
\begin{eqnarray}
\bar{\Delta}_i &=& \{ n\in \N : n \in \Delta_i\}\ ,
\nonumber\\
\bar \Xi_i &= &\{ n\in \N \ :\ n \in \Xi_i\}
\end{eqnarray}
the integers belonging to these intervals.

Let $X_i\sim B(N,s_i)$ \rev{be distributed as a binomial}. For all $s_i$, $i=1,...,M$,  denote
\begin{equation}
\Pr [  X_i \in \bar \Delta_i ] = 1 - \mu_i\ .
\end{equation}
Denote
\begin{equation}
\mu = \max_i \{\mu_i\} \ .
\end{equation}
Note that by taking $N$ sufficiently large, we can ensure that $\mu$ is exponentially small, and in particular smaller than $\delta$.

Let $(\lambda_0, \lambda_1,...\lambda_M,\lambda_{M+1})$ be parameters we will fix later, except that we fix $\lambda_0=0$ and $\lambda_{M+1}=0$.
Denote
\begin{equation}
\Lambda = \max_i \{\vert \lambda_i\vert \} \ .
\end{equation}

Let 
\begin{equation} \sigma = \frac{\delta}{2 \epsilon}\ .
\end{equation}

Consider the continuous, piece wise linear functions $g_n:[0,N] \to \R$ which on $ \Delta_i$ are given by
\begin{eqnarray}
g_n(z) &=& (t_i + \lambda_i) + \sigma (t-N s_i) \ ,  \ z\in \Delta_i \ ,
\label{Eq:gx1}
\end{eqnarray}
and which are linear 
outside of the intervals $\Delta_i$. Continuity then implies that  for $z \in \Xi_i$, $i=0,...,M$, one has the expression:
\begin{eqnarray}
g_n(z) &=& \frac{ (t_{i+1}+ \lambda_{i+1}) -   (t_i + \lambda_i)  - 2 \epsilon \sigma/3}
{N(s_{i+1} - s_i - 2 \epsilon/3)} \times \nonumber\\
& & \times (z-N (s_i + \epsilon/3))  
\nonumber\\ & & +(t_i +\lambda_i + \sigma\frac{\epsilon}{3})\ .
\label{Eq:gx2}
\end{eqnarray}

We denote by $\bar g_n:\{0,...,N\}\to [0,1]$  the functions which coincide with the functions $g_n$  on the integers.

{\bf Strategy for proving the existence of an e--map with the desired properties.}

We wish to show that for  $N$ large enough, one can choose the parameters $\lambda_i$ such that:  
\begin{enumerate}
\item $Q$ e-maps to the $(N,G=\{g_n\})$ generalised amplification procedure obtained from $Q$;
\item and that for all $i$
\begin{equation}
 t_{i} = \sum_{k=0}^N f(k; N,s_i) \bar g_n(k) \ .\label{Eq:qfg}
\end{equation}
\end{enumerate}

Note that if $\lambda_i=0$, then $\bar g_n (N s_i)=t_i$.
Hence if the probability mass functions $f(k; N,s_i)$ 
are strongly peaked around $N s_i$ (which is the case if $N$ is large enough), one expects that Eqs. \eqref{Eq:qfg} can be satisfied for small $\lambda_i$. We show that this is indeed the case below.

{\bf Boundary values.}

The functions $g_n$ satisfy $g_n(0)=0$ and $g_n(N)=1$.

{\bf Condition for strict increase.}

Consider the slope of the functions $g_n$. Over the intervals $\Delta_i$ the slope is $\sigma$ which is strictly positive.

Between the intervals $\Delta_i$, the slope is
equal to 
\begin{equation}
\frac{ (t_{i+1}+ \lambda_{i+1}) -   (t_i + \lambda_i)  - 2 \epsilon \sigma/3}
{N(s_{i+1} - s_i - 2 \epsilon/3)} \ ,
\end{equation}
see Eq. \eqref{Eq:gx2}. This is strictly positive if
$(t_{i+1}+ \lambda_{i+1}) -   (t_i + \lambda_i)  - 2 \epsilon \sigma/3$ is strictly positive.
Now note that 
\begin{eqnarray}
(t_{i+1}+ \lambda_{i+1}) -   (t_i + \lambda_i)  - 2 \epsilon \sigma/3 
&>& 2  \delta/3 - 2 \Lambda\nonumber\\
\end{eqnarray}
 
Hence a sufficient condition for the functions $g_n$ being strictly increasing is
\begin{equation}
\Lambda < \delta/3\ .
\label{Eq:boundlambda}
\end{equation}

{\bf Computing $\lambda_i$.}

 Eqs. \eqref{Eq:qfg} are a set of $M$ linear equations in the $\lambda_i$, with coefficients that can be efficiently computed. Hence,  since $s_i$ and $t_i$ are polynomial time computable,  the functions $g_n$ are polynomial time  computable.

It remains to show that for $N$ large enough solving Eqs. \eqref{Eq:qfg}  for the $\lambda_i$ yields solutions that satisfy Eq. \eqref{Eq:boundlambda}.

{\bf Properties of invertible matrices.}

Let $A\in \R^{M\times M}$ be an $M \times M$ matrix, and denote by $A_{ij}$ its elements. 
Denote by
\begin{equation}
\| A\|_\rev{\infty} = \max_{ij} \vert A_{ij}\vert
\end{equation}
\rev{ the $\infty$-norm of $A$.}

Recall that if $A,A' \in \R^{M\times M}$,  then the product matrix $A A'$ has norm
$\| A A'\|_\rev{\infty} \leq M \| A\|_\rev{\infty} \| A'\|_\rev{\infty} $. 
From this it follows that if 
$\| A\|_\rev{\infty} \leq \alpha$, then $\| A^k\|_\rev{\infty} \leq M^{k-1}\alpha^k$.

From the identity
\begin{equation}
(\mathbb{I}+A+A^2+...+A^k)(\mathbb{I}-A) = \mathbb{I}-A^{k-1}
\end{equation}
it follows that the matrix $(1-A)$ is invertible if $\lim_{k\to \infty}A^k=0$, in which case 
\begin{equation}
(\mathbb{I}-A)^{-1} =\lim_{k\to \infty}(\mathbb{I}+A+A^2+...+A^k)\ .
\end{equation}

In particular $(\mathbb{I}-A)$ is invertible if $\| A\|_\rev{\infty} < 1/M$, and one has
\begin{equation}
\| (\mathbb{I} - A)^{-1}\|_\rev{\infty} \leq \frac{1}{1-M\| A\|_\rev{\infty} }\ .
\end{equation}
In particular if $\| A\|_\rev{\infty} < 1/(2M)$, then $\| (\mathbb{I} - A)^{-1}\|_\rev{\infty} < 2$.

Let  $B\in \R^{M}$ be an $M$ component vector. Denote by
\begin{equation}
\| B\|_\rev{\infty} = \max_{i} \vert B_{i}\vert
\end{equation}
its \rev{$\infty$}-norm.
If $A \in \R^{M\times M}$ and  $B\in \R^{M}$, then the vector $AB$ has norm bounded by
\begin{equation}
\| A B\|_\rev{\infty} \leq M \| A\|_\rev{\infty} \| B\|_\rev{\infty} \ .
\end{equation}

As a consequence, if one needs to solve the system of $M$ linear equations for $\lambda_i$, with $A\in \R^{M\times M}$ and $B\in \R^{M}$,
\begin{equation}
(\mathbb{I}-A) \lambda = B\ ,
\label{Eq:Lambda}
\end{equation}
and if $\| A\|_\rev{\infty} < 1/(2M)$, 
then 
\begin{equation}
\max_i \lambda_i \leq 2 M \max_i \vert B_i\vert
\end{equation}

 {\bf System of equations for $\lambda_i$.}

 Eqs. \eqref{Eq:qfg}, viewed as equations for the $\lambda_i$, can be rewritten in the form \eqref{Eq:Lambda}. We will now bound the entry--wise $1$-norm of the corresponding  matrix $A$ and vector $B$.

Explicitly, Eqs. \eqref{Eq:qfg} take the form
\begin{eqnarray}
t_i &=&
\sum_{k\in \bar \Delta_i }f(k;N,s_i) ( t_i+\lambda_i )\nonumber\\
&+& \sum_{k\in \bar \Delta_i }f(k;N,s_i)   \sigma (k-N s_i) 
\nonumber\\
& +&
\sum_{k\in \{0,...,N\}\setminus \bar \Delta_i} f(k;N,s_i) g_n(k) 
\label{Eq:qi3}
\end{eqnarray}
We now consider the contributions of the three terms in Eq. \eqref{Eq:qi3} to $|A_{ij}|$ and $|B_i|$.

{\bf Left hand side and first term.}
The first term on the right hand side in Eq. \eqref{Eq:qi3} is
\begin{equation}
\sum_{k\in \Delta_i }f(k;N,s_i) ( t_i+\lambda_i )
=(1 - \mu_i) ( t_i+\lambda_i )\ .
\label{Eq:1stTerm}
\end{equation}
Therefore, the contribution from Eq. \eqref{Eq:1stTerm} to $|A_{ii}|$ is $\mu_i$. And the joint contribution of Eq. \eqref{Eq:1stTerm}
and the left hand side of Eq. \eqref{Eq:qi3} 
to $|B_i|$ is $\mu_i t_i$.

{\bf Second term.}
\begin{eqnarray}
& & \sigma \vert   \sum_{k\in \Delta_i }f(k;N,s_i)  (k-Ns_i) \vert \nonumber\\
&=&\sigma \vert   \sum_{k\in \{0,...,N\}\setminus \Delta_i} f(k;N,s_i) (k-Ns_i) \vert 
\nonumber\\
& \leq & \mu_i \sigma N
\label{Eq:2ndTerm}
\end{eqnarray}
where we have used Eq. \eqref{Eq:DerBin0}.
Therefore the contribution from Eq. \eqref{Eq:2ndTerm}  to $|B_i|$ is at most $ \mu_i \sigma N.$

{\bf Third term.}
Note that when $\lambda_i=0$ for all $i$, then $g_n(t)\in[0,1]$. Hence the parts of the third term that are independent of the $\lambda$'s are bounded by $\mu_i$.  Therefore the third term contributes at most $\mu_i$ to $|B_i|$.

The parts of the third term proportional to $\lambda$ are given by
\begin{eqnarray}
& &\sum_{\substack{i'=0  \\ {i'\neq i}}}^{M+1} 
\sum_{k\in \bar \Delta_{i'} } f(k;N,s_i)  \lambda_{i'}
\label{eq:thirdterm_1} 
\\
&+&\sum_{i'=0}^M \sum_{ k \in \bar S_{i'}} f(k;N,s_i)\times \nonumber\\
& &
\times(\lambda_{i'+1}-\lambda_{i'} ) \frac{k-N(s_{i'}+\epsilon/3) }{N(s_{i'+1}-s_{i'}-2 \epsilon/3)}\label{eq:thirdterm_2} 
\\
&+&\sum_{i'=0}^M \sum_{ k \in \bar S_{i'}} f(k;N,s_i)  \lambda_{i'} 
\label{eq:thirdterm_3}
\end{eqnarray}

Note that the coefficient of each $\lambda_{i'} $
in lines  \eqref{eq:thirdterm_1} and \eqref{eq:thirdterm_3} is bounded by $\mu$. 
Therefore the contribution from  \eqref{eq:thirdterm_1} to $|A_{ii'}|$ (with $i'\neq i$) is at most $\mu$;
and  the contribution from \eqref{eq:thirdterm_3} to $|A_{ii'}|$ is at most $\mu$. 

Note that $N(s_{i'+1}-s_{i'}-2 \epsilon/3) \geq N \epsilon/3$ and 
that $k-N(s_{i'}+\epsilon/3) \leq N$, hence the factor $(k-N(s_{i'}+\epsilon/3))/(N(s_{i'+1}-s_{i'}-2 \epsilon/3))$ in line \eqref{eq:thirdterm_2} is bounded by $3/\epsilon$.
Hence the coefficient of each $\lambda_{i'}$ and $\lambda_{i'+1}$
in line \eqref{eq:thirdterm_2} is bounded by $3 \mu / \epsilon$. 
Therefore the contribution from \eqref{eq:thirdterm_2} to $|A_{ii'}|$ is at most $6\mu / \epsilon.$

{\bf For large $N$, $\lambda_i$ are unique and small.}

Putting all together, the matrix $A$ in Eq. \eqref{Eq:Lambda} is bounded by
\begin{equation}
\| A\|_\rev{\infty}  \leq \mu (2 + 6/\epsilon) \ .
\end{equation}
And the right hand side of Eq. \eqref{Eq:Lambda} is bounded by
\begin{equation}
\vert B_{i}\vert \leq \mu_i (t_i + \sigma N +1) \leq  \mu (2 + \sigma N)\ .
\end{equation}

Therefore for   large enough  $N$, i.e. $\mu$ sufficiently small, 
$\| A\|_\rev{\infty}  \leq 1/(2M)$. When this is the case 
there is a unique solution for the $\lambda$'s,
and 
\begin{equation}
\vert \lambda_i \vert \leq 2M (2 + \sigma N)\mu \ .
\end{equation}
Since $\mu$ decreases exponentially with $N$, and the other factors increase polynomially with $N$,
by taking $N$ large enough one can ensure that Eq. \eqref{Eq:boundlambda} is satisfied.
\end{proof}

\section{Nondestructive procedures}
\label{Sec:NonDest}

As a first application of Theorem \ref{Thm:SuperStrongAmpl}, we introduce the notion of nondestructive procedure.

\begin{deff}{\bf Nondestructive $(a,b)$--Quantum Verification Procedure.} 
\label{def:NonDest-qvp}
Let $a,b$ be functions as in Definition$~\ref{def:qvpab}$.
An $(a,b)$--quantum verification procedure $Q=\{Q_{n} : n \in \N\}$  is 
{\em nondestructive} if
it outputs both a classical bit (the outcome of the quantum verification procedure) and a quantum state of $m$ qubits (with $m$ the dimension of the witness Hilbert space), such that if the input of the procedure is an eigenstate $\vert \psi_i\rangle$ of $Q_n$ for $x$ then, conditional on the classical output bit being $1$ (i.e. on the procedure accepting), the quantum output is the eigenstate 
$\vert \psi_i\rangle$.
\end{deff}

\begin{thm}
{\bf Properties of nondestructive procedures.} 
\label{Thm:PropNonDest}
Let  $Q=\{Q_{n} : n \in \N\}$ be a nondestructive procedure as in Definition$~\ref{def:NonDest-qvp}$. 
Denote by $\{\vert \psi_i\rangle\}$ the eigenbasis of $Q_n$ for $x$, and by $p_i$ the acceptance probability of $\vert \psi_i\rangle$. Then the following hold:

1) If the quantum input of $Q_n$ is a pure state $\vert \psi^{in} \rangle$ which we express in the eigenbasis of $Q_n$ as 
\begin{equation}
\vert \psi^{in} \rangle=\sum_i \alpha_i \vert \psi_i\rangle\ ,
\label{Eq:psiINQnondest}
\end{equation} 
then, conditional on the classical output bit being $1$, the quantum output will be the pure state 
\begin{equation}
\vert \psi^{out} \rangle = \frac{1}{\sqrt{ \sum_i p_i \vert \alpha_i\vert^2}} \sum_i \sqrt{p_i} \alpha_i \vert \psi_i\rangle\ .
\label{Eq:psiOUTQnondest}
\end{equation}

2) If one uses the state Eq. \eqref{Eq:psiOUTQnondest} as input to the procedure $Q_n$, the probability of acceptance will be larger than the probability of acceptance on the original state Eq. \eqref{Eq:psiINQnondest}:
\begin{equation}
\Pr [Q_{n}(x,\vert \psi^{out} \rangle)=1]\geq \Pr [Q_{n}(x,\vert \psi^{in} \rangle)=1]\ .
\label{Eq:psiINEQQnondest}
\end{equation}
\end{thm}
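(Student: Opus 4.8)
The plan is to reduce both parts to the linear action, on the witness space $\HH_m$, of the measurement operator attached to the ``accept'' branch of the procedure, and then to read off the two claims from elementary linear algebra together with the nonnegativity of a variance.

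First I would make precise the operator describing acceptance. The nondestructive procedure is a quantum instrument that outputs a classical bit together with a witness-space state; I describe its accept branch by a measurement operator $A$ on $\HH_m$, so that on input $\vert \psi\rangle$ the procedure accepts with probability $\Vert A\vert \psi\rangle\Vert^2$ and, conditioned on acceptance, outputs $A\vert \psi\rangle/\Vert A\vert \psi\rangle\Vert$. Applying Definition \ref{def:NonDest-qvp} to an eigenstate $\vert \psi_i\rangle$ forces $A\vert \psi_i\rangle$ to be proportional to $\vert \psi_i\rangle$, since the conditional output must be the pure state $\vert \psi_i\rangle$, while the acceptance probability $p_i$ of $\vert \psi_i\rangle$ fixes the modulus. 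The construction of nondestructive procedures out of the iterative procedures of Theorem \ref{Thm:PropertiesGenAmpl}, whose property \ref{QNDProperty2} restores each eigenstate exactly (with no residual phase), then gives the positive operator $A=\sum_i \sqrt{p_i}\,\vert \psi_i\rangle\langle \psi_i\vert$, diagonal in the eigenbasis of $Q_n$ for $x$.

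For part 1 I would apply $A$ to $\vert \psi^{in}\rangle=\sum_i\alpha_i\vert \psi_i\rangle$ by linearity, obtaining $A\vert \psi^{in}\rangle=\sum_i\sqrt{p_i}\,\alpha_i\vert \psi_i\rangle$, whose squared norm $\sum_i p_i\vert\alpha_i\vert^2$ equals the acceptance probability of $\vert \psi^{in}\rangle$ (consistently with the interference-free value given by Theorem \ref{Thm:BlockStructure}). Normalising $A\vert \psi^{in}\rangle$ then yields Eq. \eqref{Eq:psiOUTQnondest}. For part 2 I would compute the acceptance probability of the output state. If $\beta_i$ denotes the coefficient of $\vert \psi_i\rangle$ in $\vert \psi^{out}\rangle$, then $\vert\beta_i\vert^2=p_i\vert\alpha_i\vert^2/\sum_j p_j\vert\alpha_j\vert^2$, so by Theorem \ref{Thm:BlockStructure} the acceptance probability of $\vert \psi^{out}\rangle$ is $\sum_i p_i\vert\beta_i\vert^2=\left(\sum_i p_i^2\vert\alpha_i\vert^2\right)/\left(\sum_i p_i\vert\alpha_i\vert^2\right)$. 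Writing $w_i=\vert\alpha_i\vert^2$ with $\sum_i w_i=1$, the desired inequality \eqref{Eq:psiINEQQnondest} becomes $\sum_i p_i^2 w_i\geq\left(\sum_i p_i w_i\right)^2$, which is precisely $E[p^2]\geq (E[p])^2$ for the random variable equal to $p_i$ with probability $w_i$, i.e. the nonnegativity of its variance (equivalently Cauchy--Schwarz).

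I expect the only genuine subtlety to lie in the first step: justifying that the accept branch is conjugation by a single positive operator $A$ diagonal in the eigenbasis, so that for a superposed input the conditional output is genuinely the pure state \eqref{Eq:psiOUTQnondest} rather than a mixture, and carries no spurious relative phases. Once $A=\sum_i\sqrt{p_i}\,\vert \psi_i\rangle\langle \psi_i\vert$ is established, both parts are routine, part 1 being linearity plus normalisation and part 2 the nonnegativity of a variance.
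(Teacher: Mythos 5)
Your proposal is correct and follows essentially the same route as the paper: the paper introduces the POVM element $Q_x=\sum_i p_i\vert\psi_i\rangle\langle\psi_i\vert$, writes $\vert\psi^{out}\rangle=\sqrt{Q_x}\vert\psi^{in}\rangle/\Vert\sqrt{Q_x}\vert\psi^{in}\rangle\Vert$ (your operator $A$ is exactly $\sqrt{Q_x}$), and proves part 2 via $\langle\psi^{in}\vert Q_x^2\vert\psi^{in}\rangle\geq\langle\psi^{in}\vert Q_x\vert\psi^{in}\rangle^2$, which is the same Cauchy--Schwarz/variance inequality you write in the eigenbasis as $\sum_i p_i^2 w_i\geq(\sum_i p_i w_i)^2$. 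If anything you are slightly more careful than the paper, which declares Eq.~\eqref{Eq:psiOUTQnondest} ``immediate,'' whereas you correctly flag that the accept branch being a single phase-free Kraus operator diagonal in the eigenbasis needs the explicit construction of Theorem~\ref{Thm:ExistNonDest} and property~\ref{QNDProperty2} of Theorem~\ref{Thm:PropertiesGenAmpl}, not just Definition~\ref{def:NonDest-qvp}.
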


\begin{proof}
The expression Eq. \eqref{Eq:psiOUTQnondest} is immediate. We prove 
Eq. \eqref{Eq:psiINEQQnondest}. To this end, we introduce the positive operator $Q_x = \sum_i p_i \vert \psi_i \rangle \langle \psi_i \vert$, which is the POVM element corresponding to the classical output of the procedure being $1$ (i.e. accepting). Then we can write \begin{equation}
\Pr [Q_{n}(x,\vert \psi^{in} \rangle)=1] = \langle \psi^{in} \vert Q_x \vert \psi^{in} \rangle
\end{equation}
and
\begin{equation}
\vert \psi^{out} \rangle =  \frac{\sqrt{Q_x} \vert \psi^{in} \rangle}{\sqrt{\langle \psi^{in} \vert Q_x \vert \psi^{in} \rangle }}\ .
\end{equation}

Consequently
\begin{eqnarray}
\Pr [Q_{n}(x,\vert \psi^{out} \rangle)=1] &=& \langle \psi^{out} \vert Q_x \vert \psi^{out} \rangle \nonumber\\
&=& \frac{\langle \psi^{in} \vert Q_x^2 \vert \psi^{in} \rangle}{\langle \psi^{in} \vert Q_x \vert \psi^{in} \rangle}
\nonumber\\
&\geq & 
 \frac{\langle \psi^{in} \vert Q_x \vert \psi^{in} \rangle \langle \psi^{in} \vert  Q_x  \vert \psi^{in} \rangle}{\langle \psi^{in} \vert Q_x \vert \psi^{in} \rangle}
  \nonumber\\
&=& \langle \psi^{in} \vert Q_x \vert \psi^{in} \rangle\ ,
\end{eqnarray}
where we have used the fact that $ I_m \geq \vert \psi ^{in}\rangle \langle \psi^{in}\vert $ with $I_m$ the identity operator.
\end{proof}

We now show that without loss of generality we can take quantum verification procedures to be nondestructive.

\begin{thm}
{\bf Existence of nondestructive procedures.} 
\label{Thm:ExistNonDest}
Let $Q=\{Q_n\}$ be a quantum verification procedures, and let $S=\{S_n\}$,
$S_n = (s_0,s_1,...,s_M,s_{M+1})$,
$T=\{T_n\}$,
$T_n = (t_0,t_1,...,t_M,t_{M+1})$
 be as in Theorem$~\ref{Thm:SuperStrongAmpl}$.
Then there exists a nondestructive quantum verification procedure $Q^{ND}$ such that $Q$ e--maps to $Q^{ND}$ and the polynomial time  computable strictly increasing functions $f_n$ that define the e--map satisfy $f_n(s_i)=t_i$ for all $s_i\in S_n$.
\end{thm}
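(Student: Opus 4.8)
The plan is to build $Q^{ND}$ as an iterative procedure obtained from $Q$, using Theorem~\ref{Thm:SuperStrongAmpl} to fix the e-map and property~\ref{QNDProperty2} of Theorem~\ref{Thm:PropertiesGenAmpl} to recover the witness. First I would invoke Theorem~\ref{Thm:SuperStrongAmpl} to obtain an $(N,G=\{g_n\})$-iterative procedure whose induced e-map $f_n=P_{g_n}$ is strictly increasing and satisfies $f_n(s_i)=t_i$ for all $i$. This already produces a procedure sharing its eigenbasis $\{\vert\psi_i\rangle\}$ with $Q$ and accepting $\vert\psi_i\rangle$ with probability $f_n(p_i)$; the only ingredient of Definition~\ref{def:NonDest-qvp} still missing is the guarantee on the output state.

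The key observation is property~\ref{QNDProperty2}: every time the measurement $\{\Pi_0,1-\Pi_0\}$ returns outcome $1$, the witness register collapses back to the input eigenstate $\vert\psi_i\rangle$ (and the first ancilla to $\vert 0^k\rangle$). I would therefore arrange the alternating chain $\Pi_0,\Pi_1,\Pi_0,\dots$ to terminate on a $\Pi_0$ measurement, so that whenever its final outcome equals $1$ the witness space is \emph{exactly} $\vert\psi_i\rangle$. Moreover, since the measurement types alternate and the first outcome is always $1$, the parity of the recorded sequence already fixes the terminal outcome: writing $s=\sum_{i=1}^N z_i$ for the number of agreements, the terminal $\Pi_0$ outcome equals $1$ precisely when $s$ and $N$ share the same parity. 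Hence the event \emph{``witness restored''} can be read off from the very statistic $s$ that drives the acceptance decision.

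I would then define $Q^{ND}$ to emit its bit $1$ only in the branch where the witness has been restored, and to output the witness register as its quantum output; by construction, conditional on accepting, this output is $\vert\psi_i\rangle$, so by Theorem~\ref{Thm:PropNonDest} (equivalently, directly from Definition~\ref{def:NonDest-qvp}) the resulting procedure is nondestructive and, sharing the eigenbasis of $Q$, still constitutes an e-map. The acceptance probability of $\vert\psi_i\rangle$ is then again of the form $\sum_{k=0}^N f(k;N,p_i)\,\tilde g_n(k)$ for a post-processing function $\tilde g_n$ supported on the restored outcomes, that is a strictly increasing $P_{\tilde g_n}(p_i)$ which I must still force through the prescribed points $(s_i,t_i)$.

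The main obstacle is exactly this last coupling. Restricting acceptance to the restored branch constrains the admissible post-processing function, whereas the targets $t_i$ may be large (up to $t_{M+1}=1$), while the probability of terminating in the $\Pi_0=1$ subspace is itself a nontrivial, $p_i$-dependent quantity that degrades near $p_i=1/2$. The resolution I would pursue is to precede the restoration step by an application of Theorem~\ref{Thm:SuperStrongAmpl} relocating the spectrum away from this problematic region (or, equivalently, to interleave additional $\Pi_1,\Pi_0$ restoration rounds and reject on failure), and then to re-solve for $\tilde g_n$ exactly as in the proof of Theorem~\ref{Thm:SuperStrongAmpl} — setting up the linear system for the correction parameters and showing that for $N$ large enough it admits a unique, small, strictly increasing solution hitting every $(s_i,t_i)$. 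Verifying that exact restoration on acceptance and the prescribed strictly increasing e-map can be imposed \emph{simultaneously} is the delicate technical core of the argument.
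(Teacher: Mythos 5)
Your strategy (iterative procedure, property~\ref{QNDProperty2} to recover the witness on a final $\Pi_0$-outcome $1$, pre-composition with Theorem~\ref{Thm:SuperStrongAmpl}) is the right one, and your diagnosis of the obstruction is accurate: if acceptance is restricted to the restored branch, the acceptance probability is capped by $\Pr[s\equiv N \pmod 2]$, which near $p=1/2$ is about $1/2$ and hence cannot reach targets $t_i$ close to $1$. But you stop exactly at the point where the proof has to be given: you announce that imposing exact restoration and the prescribed e-map simultaneously is ``the delicate technical core'' and leave it to a constrained re-run of the linear-system argument of Theorem~\ref{Thm:SuperStrongAmpl}, without showing that this constrained system (with $\tilde g_n$ supported on one parity class) is solvable, strictly increasing, and able to hit every $(s_i,t_i)$. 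As written, the theorem is not proved.

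The resolution in the paper avoids the constrained system entirely by splitting the two jobs. First apply Theorem~\ref{Thm:SuperStrongAmpl} to get a procedure $Q^{(1)}$ whose e-map satisfies $f^{(1)}_n(s_i)=\sqrt{t_i}$ (legitimate targets: they are polynomial-time computable, lie in $[0,1]$, keep the endpoints $0$ and $1$, and since $\sqrt{t_i}-\sqrt{t_{i-1}}\geq t_i-t_{i-1}\geq\delta$ the gap condition still holds). Then define $Q^{ND}$ as the $N=2$ iterative procedure on $Q^{(1)}$ that accepts iff $z_1=z_2=1$, i.e. $g(2)=1$, $g(0)=g(1)=0$. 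Acceptance forces the last measurement, which is a $\Pi_0$-measurement, to have outcome $1$, so property~\ref{QNDProperty2} gives nondestructiveness; and the acceptance probability is exactly $\bigl(f^{(1)}_n(p)\bigr)^2$, a fixed squaring with no further parameters to solve for, so the composite e-map is strictly increasing and sends $s_i$ to $t_i$. Your parity observation is correct but unnecessary once the wrapper is fixed to this single two-round, accept-on-all-heads form; the point you were missing is that the entire burden of hitting the prescribed points can be pushed into the \emph{first} stage by pre-distorting the targets to $\sqrt{t_i}$.
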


\begin{proof}

{\bf Step 1: procedure $Q^{(1)}$.}

Use Theorem$~\ref{Thm:SuperStrongAmpl}$ to construct 
 a procedure $Q^{(1)}$ with the property that the polynomial time  computable strictly increasing functions $f_n^{(1)}$ that define the e--map satisfy $ f_n^{(1)}(s_i) = \sqrt{t_i} $.

 {\bf Step 2: procedure $Q^{ND}$.}

 Procedure $Q^{ND}$ is obtained as follows:
 Run the iterative procedure of  Definition \ref{Deff:GAmpProc} with parameter $N=2$ on procedure $Q^{(1)}$, obtaining $2$ bits $z_1$ and $z_2$.
  Accept if $z_1=z_2=1$. Otherwise reject.

 {\bf Step 3: proof that $Q^{ND}$ is a nondestructive procedure with the desired properties.}
\begin{enumerate}
 \item $Q$ e--maps to  $Q^{(1)}$ and $Q^{(1)}$  e--maps to $Q^{ND}$, hence $Q$ e--maps  to $Q^{ND}$. Therefore the eigenbasis $\{\vert \psi_i\rangle\}$ of $Q$ for $x$ is also an eigenbasis 
  $Q^{(1)}$  and of $Q^{ND}$ for $x$.
\item
The polynomial time computable strictly increasing functions $f_n$ that defines the e--map satisfies $f_n(p)=\left( f_n^{(1)} (p)\right)^2$, hence it satisfies $f_n(s_i)=t_i$.
\item If $Q^{ND}$ accepts, then the procedure has ended with a measurement of  $\{ \Pi_0, 1 - \Pi_0\}$ with outcome $1$. Therefore, by property \ref{QNDProperty2} of Theorem \ref{Thm:PropertiesGenAmpl},  if the original state of the witness space was an eigenstate $\vert \psi_i\rangle$ of $Q$, then conditional on $Q^{ND}$ accepting, the state of the witness space is $\vert \psi_i\rangle$.
\end{enumerate}

\end{proof}

\section{Equivalent definitions of $\QMA \cap \coQMA$}
\label{Sec:EquivQMAcapcoQMA}

Definition \ref{Deff:QMACAPcoQMA} is our starting point for studying $\QMA\cap \coQMA$.  It is however not very satisfactory  for defining functional $\QMA\cap\coQMA$. Indeed, recall that functional $\QMA$ is based on the existence of a quantum verification procedure, of witnesses (i.e. of states that are accepted with high probability by the quantum verification procedure and certify that $x\in L$), and of the eigenbasis of the quantum verification procedure.  However it does not seem possible to introduce these notions starting from Definition \ref{Deff:QMACAPcoQMA}. 
The difficulty stems from the fact that the two quantum verification procedures $Q$ and $Q'$ do not commute. 

To circumvent this we introduce two alternative definitions of $\QMA\cap\coQMA$, the first is based on a 3-outcome quantum verification procedure, and the second is based on a $2$-outcome quantum verification procedure. The later is particularly useful, as it allows us to apply the notions of eigenbasis, spectrum, eigenspaces, and amplification in the context of $\QMA\cap\coQMA$.

The inspiration for the following definition comes from the fact that a quantum verification procedure for a language $L\in \QMA$ has two outcomes, but these two outcomes play different roles. Outcome $1$ certifies that $x\in L$ (up to some uncertainty, since outcomes are probabilistic), while  outcome $0$ does not provide information whether $x\in L$ or $x\notin L$. In the case of $\QMA\cap\coQMA$ we need one outcome to certify that $x\in L$, one outcome to certify that $x \notin L$, and one outcome that does not provide information.

\begin{deff}{\bf $(a_3,b_3; a'_3,b'_3)$--Three Outcome Quantum Verification Procedure.} 
\label{def:QMAcoQMA3}
Let $a_3,b_3$ and $a'_3,b'_3$ be  functions as in Definition$~\ref{def:qvpab}$. An $(a_3,b_3; a'_3,b'_3)$--Three Outcome Quantum Verification Procedure is a $3$-outcome procedure $Q^3=\{Q^3_n : n\in \N \}$ whose outcomes are denoted $\{0,L,\overline L\}$, and  such that
 for every $x$ of length $n$,  either both of the following hold:
 \begin{eqnarray}
 \exists \vert \psi \rangle,  \Pr [Q^3_{n}(x,\vert \psi \rangle)=L]\geq a_3,
\label{QMAcoQMA3_1}\\
\forall \vert \psi \rangle, \Pr [Q^3_{n}(x,\vert \psi\rangle )=\overline L]\leq b'_3;
\label{QMAcoQMA3_2}
\end{eqnarray}
or both the following hold:
 \begin{eqnarray}
 \exists \vert \psi \rangle,  \Pr [Q^3_{n}(x,\vert \psi \rangle)=\overline L]\geq a'_3,
\label{QMAcoQMA3_3}\\
\forall \vert \psi \rangle, \Pr [Q^3_{n}(x,\vert \psi\rangle )= L]\leq b_3.
\label{QMAcoQMA3_4}
\end{eqnarray}
\end{deff}

\begin{deff}
\label{Deff:L3}
{\bf The language class $\LTHREE$.} 
Let $a_3,b_3$ and $a'_3,b'_3$ be  functions as in Definition$~\ref{def:qvpab}$. Let $\LTHREE(a_3,b_3; a'_3,b'_3)$ be 
the set of languages  $L\subseteq \{0,1\}^*$ such that there   exists   an $(a_3,b_3; a'_3,b'_3)$--three outcome quantum verification procedure $Q^3=\{Q^3_{n} : n \in \N\}$,
where for every $x$, we have $x\in L$ if and only if Equations \eqref{QMAcoQMA3_1} and \eqref{QMAcoQMA3_2} hold (and consequently we have $x\notin L$ if and only if Equations \eqref{QMAcoQMA3_3} and \eqref{QMAcoQMA3_4} hold).
\end{deff}

The following definitions will enable us to provide a definition of  $\QMA\cap\coQMA$ based on a quantum verification procedure  that only has two outcomes. The idea behind this definition is that if outcome $1$ occurs with high probability this certifies that $x\in L$, if  outcome $0$ occurs with high probability this certifies that $x\notin L$, while if outcomes $1$ and $0$ occur with approximately the same probability, then no information is obtained.

\begin{deff}
{\bf $(a_2,b_2; a'_2,b'_2)$--Quantum Verification Procedure.} 
\label{def:QMAcoQMA4}

Let $a_2,b_2$ and $a'_2,b'_2$ be pairs of functions as in 
Definition$~\ref{def:qvpab}$.

An $(a_2,b_2; a'_2,b'_2)$--Quantum Verification Procedure is
a quantum verification procedure $Q^2=\{Q^2_{n} : n \in \N\}$  such that
 for every $x$ of length $n$,  either both of the following hold:
 \begin{eqnarray}
 \exists \vert \psi \rangle,  \Pr [Q^2_{n}(x,\vert \psi \rangle)=1]\geq \frac{1}{2}+\frac{a_2}{2},
\label{QMAcoQMA4_1}\\
\forall \vert \psi \rangle, \Pr [Q^2_{n}(x,\vert \psi\rangle )=1]\geq \frac{1}{2}-\frac{b'_2}{2};
\label{QMAcoQMA4_2}
\end{eqnarray}
or both the following hold:
 \begin{eqnarray}
 \exists \vert \psi \rangle,  \Pr [Q^2_{n}(x,\vert \psi \rangle)= 0]\geq \frac{1}{2}+\frac{a'_2}{2},
\label{QMAcoQMA4_3}\\
\forall \vert \psi \rangle, \Pr [Q^2_{n}(x,\vert \psi\rangle )= 0]\geq \frac{1}{2}-\frac{b_2}{2}.
\label{QMAcoQMA4_4}
\end{eqnarray}
 \end{deff}

\begin{deff}
\label{Deff:L2}
{\bf The language class $\LTWO$.}  
Let $a_2,b_2$ and $a'_2,b'_2$ be pairs of functions as in Definition$~\ref{def:qvpab}$.
 Let $\LTWO(a_2,b_2; a'_2,b'_2)$ be 
the set of languages the set of languages  $L\subseteq \{0,1\}^*$ such that there   exists   an $(a_2,b_2; a'_2,b'_2)$--quantum verification procedure $Q^2=\{Q^2_{n} : n \in \N\}$,
where for every $x$, we have $x\in L$ if and only if Equations  \eqref{QMAcoQMA4_1} and \eqref{QMAcoQMA4_2} hold (and consequently we have $x\notin L$ if and only if Equations \eqref{QMAcoQMA4_3} and \eqref{QMAcoQMA4_4} hold).
\end{deff}

We note that the language class $\LTWO$ is independent of the bounds used to define it.

\begin{thm} {\bf $\LTWO (a_2, b_2;a'_2, b'_2)$ is independent of  the completeness and soundness probabilities.}
\label{Thm:L2indepbounds}

Let $a_2,b_2$ and ,$a'_2,b'_2$ be pairs of functions as in Definition$~\ref{def:qvpab}$.
For any $r\in \poly$, for any $\tilde a_2,\tilde b_2$, $\tilde a'_2,\tilde b'_2$ pairs of functions as in Definition$~\ref{def:qvpab}$
with $\tilde a_2, \tilde a'_2 < 1-2^{-r}$  and
$\tilde b_2, \tilde b'_2 > 2^{-r}$,
it holds that
$\LTWO(a_2, b_2;a'_2, b'_2) \rev{\subseteq }\LTWO(\tilde a_2, \tilde b_2; \tilde a'_2, \tilde b'_2)$
\end{thm}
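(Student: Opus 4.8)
The plan is to recast the two cases of Definition~\ref{def:QMAcoQMA4} as statements about the extreme points of the spectrum of the verification procedure, and then to transport the spectrum to the desired thresholds by an eigenspace preserving map built with Theorem~\ref{Thm:SuperStrongAmpl}. Concretely, let $Q^2$ be an $(a_2,b_2;a'_2,b'_2)$--procedure for $L$ and fix $x$ with $n=|x|$. By Theorem~\ref{Thm:BlockStructure} the acceptance probability of an arbitrary state is the convex combination of the acceptance probabilities of the eigenstates, so $\min_{|\psi\rangle}\Pr[Q^2_n(x,|\psi\rangle)=1]=\min\SPECT(Q^2,x)=:p_{\min}$ and likewise the maximum equals $p_{\max}=\max\SPECT(Q^2,x)$. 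Writing $A=\tfrac12+\tfrac{a_2}{2}$, $B=\tfrac12+\tfrac{b_2}{2}$, $m_{\mathrm{in}}=\tfrac12-\tfrac{b'_2}{2}$ and $m'=\tfrac12-\tfrac{a'_2}{2}$, conditions \eqref{QMAcoQMA4_1}--\eqref{QMAcoQMA4_2} (the case $x\in L$) read $p_{\max}\geq A$ and $p_{\min}\geq m_{\mathrm{in}}$, while \eqref{QMAcoQMA4_3}--\eqref{QMAcoQMA4_4} (the case $x\notin L$) read $p_{\min}\leq m'$ and $p_{\max}\leq B$. The gap hypotheses give $A-B\geq\tfrac1{2q}$ and $m_{\mathrm{in}}-m'\geq\tfrac1{2q'}$, and since $a_2,a'_2\leq 1$ and $m'\geq 0$ they also force $m_{\mathrm{in}}\geq\tfrac1{2q'}$ and $B\leq 1-\tfrac1{2q}$.

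Next I would construct a strictly increasing, polynomial time computable $f_n$ that sends four source thresholds to the target thresholds $\tfrac12-\tfrac{\tilde a'_2}{2}<\tfrac12-\tfrac{\tilde b'_2}{2}<\tfrac12+\tfrac{\tilde b_2}{2}<\tfrac12+\tfrac{\tilde a_2}{2}$, applying Theorem~\ref{Thm:SuperStrongAmpl} with $M=4$. Because $f_n$ is monotone it is enough to place the source points so that every value that can occur in a given case lands on the correct side: I take the two interior ``cluster separators'' at the midpoints $s_2=\tfrac{m'+m_{\mathrm{in}}}{2}$ and $s_3=\tfrac{A+B}{2}$, the two ``witness'' points at $s_1=\max(m',\tfrac1{8q'})$ and $s_4=\min(A,1-\tfrac1{8q})$, and endpoints $s_0=0$, $s_5=1$, with targets $f_n(s_1)=\tfrac12-\tfrac{\tilde a'_2}{2}$, $f_n(s_2)=\tfrac12-\tfrac{\tilde b'_2}{2}$, $f_n(s_3)=\tfrac12+\tfrac{\tilde b_2}{2}$, $f_n(s_4)=\tfrac12+\tfrac{\tilde a_2}{2}$. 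Then in the case $x\in L$ one has $p_{\min}\geq m_{\mathrm{in}}>s_2$ so $f_n(p_{\min})\geq\tfrac12-\tfrac{\tilde b'_2}{2}$, and $p_{\max}\geq A\geq s_4$ so $f_n(p_{\max})\geq\tfrac12+\tfrac{\tilde a_2}{2}$; the case $x\notin L$ is symmetric, using $m'\leq s_1$ and $B<s_3$. By Proposition~\ref{Prop:eMaps} the procedure $\tilde Q^2$ that $Q^2$ e--maps to has spectrum $f_n(\SPECT(Q^2,x))$ and the same eigenspaces, hence $\tilde Q^2$ is an $(\tilde a_2,\tilde b_2;\tilde a'_2,\tilde b'_2)$--procedure whose language is again $L$, which yields the stated inclusion.

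The step I expect to be delicate is checking the hypothesis of Theorem~\ref{Thm:SuperStrongAmpl} that the source points $s_0<\dots<s_5$ have inverse polynomial gaps, because the original bounds may be pathological: the two spectral clusters can be separated by only an exponentially small gap $B-m_{\mathrm{in}}=\tfrac{b_2+b'_2}{2}$, and the witness thresholds $A,m'$ may sit exponentially close to $1$ and $0$. Placing $s_2,s_3$ at the cluster midpoints rather than at $m_{\mathrm{in}},B$ is exactly what absorbs the possibly tiny gap $B-m_{\mathrm{in}}$, since $s_3-s_2\geq\tfrac{A-m'}{2}=\tfrac{a_2+a'_2}{4}\geq\tfrac1{\poly}$ independently of $B-m_{\mathrm{in}}$; the clamps defining $s_1$ and $s_4$, together with $m_{\mathrm{in}}\geq\tfrac1{2q'}$ and $B\leq 1-\tfrac1{2q}$, then force inverse polynomial gaps at both endpoints and at $s_2-s_1$, $s_4-s_3$. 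The remaining verifications---that the clamped points lie strictly between their neighbours, that all $s_i,t_i$ are polynomial time computable, and that the target gaps satisfy $t_i-t_{i-1}\geq 2^{-r-1}$ (matching the exponentially small $\delta$ permitted by Theorem~\ref{Thm:SuperStrongAmpl})---are routine and do not require any further quantum input.
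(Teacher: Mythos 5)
Your proposal is correct and follows the same overall strategy as the paper: build an eigenspace preserving map with Theorem~\ref{Thm:SuperStrongAmpl} sending the old thresholds to the new ones, and conclude by monotonicity of $f_n$ together with Theorem~\ref{Thm:BlockStructure}. You also correctly identify the one delicate point, namely that the six naturally ordered values $0<\tfrac12-\tfrac{a'_2}{2}<\tfrac12-\tfrac{b'_2}{2}<\tfrac12+\tfrac{b_2}{2}<\tfrac12+\tfrac{a_2}{2}<1$ need not have inverse polynomial gaps (the first, middle and last gaps can be exponentially small). Where you differ from the paper is in how this is repaired: the paper keeps the procedure but relaxes the four bounds by $\tfrac1{3q}$ each, i.e.\ replaces $(a_2,b_2;a'_2,b'_2)$ by $(a_2-\tfrac1{3q},\,b_2+\tfrac1{3q};\,a'_2-\tfrac1{3q},\,b'_2+\tfrac1{3q})$, observing that $Q^2$ is still a procedure for these weaker bounds and that the relaxed thresholds now have gaps $\geq\tfrac1{6q}$; you instead keep the original bounds and move the interpolation nodes to the cluster midpoints $\tfrac{m'+m_{\mathrm{in}}}{2}$, $\tfrac{A+B}{2}$ and to clamped witness thresholds $\max(m',\tfrac1{8q'})$, $\min(A,1-\tfrac1{8q})$. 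The two devices are equivalent in effect (both produce source points with inverse polynomial gaps lying on the correct side of every value the spectrum can take in each case), and your gap estimates check out; the paper's version is marginally cleaner in that it avoids taking $\max$/$\min$ of real-valued computable functions, though that operation is still polynomial time computable as a real number since no branch decision is required.
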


 \begin{proof}

\rev{ 
 
{\bf Step 1:  e-map from $Q^2$ to $\tilde Q_2$.}
 
 Let $L$ be a language in $\LTWO(a_2, b_2;a'_2, b'_2)$.
Let $Q^2=\{Q^2_{n} : n \in \N\}$ be the quantum verification procedure 
associated to language $L$, i.e. such that 
 for every $x$, we have $x\in L$ if and only if Equations  \eqref{QMAcoQMA4_1} and \eqref{QMAcoQMA4_2} hold (and consequently we have $x\notin L$ if and only if Equations \eqref{QMAcoQMA4_3} and \eqref{QMAcoQMA4_4} hold).
 
 If we rewrite the four bound Eqs. \eqref{QMAcoQMA4_1}, \eqref{QMAcoQMA4_2}, \eqref{QMAcoQMA4_3},\eqref{QMAcoQMA4_4} in terms of $\Pr [Q^2_{n}(x,\vert \psi \rangle)=1]$, then the corresponding right hand sides, ranked in increasing order are
 \begin{equation}
 0< \frac{1}{2}-\frac{a'_2}{2} < \frac{1}{2}-\frac{b'_2}{2} < \frac{1}{2} +\frac{b_2}{2}
 <\frac{1}{2}+\frac{a_2}{2}<1\ .
 \label{Eq:gapsL2}
 \end{equation}
 
We would like to apply     Theorem \ref{Thm:SuperStrongAmpl}
to the values in Eq. \eqref{Eq:gapsL2}. 
However we cannot do this directly, because some of the differences between two consecutive terms
in Eq. \eqref{Eq:gapsL2} may be exponentially small (in fact the ones corresponding to the first, third, and fifth inequalities).
We note however that language $L$ does not change if we keep the procedure $Q^2$ unchanged, but change the bounds
according to
\begin{eqnarray}
(a_2, b_2;a'_2, b'_2) &\to & ( \bar{a}_2  ,\bar  b_2 ;\bar a'_2 , \bar b'_2)\ ;
\\
\bar{a}_2 &=& {a}_2 -\frac{1}{3q}\ ,\nonumber\\
\bar{b}_2 &=& {b}_2 +\frac{1}{3q}\ ,\nonumber\\
\bar{a}'_2 &=& {a}'_2 -\frac{1}{3q}\ ,\nonumber\\
\bar{b}'_2 &=& {b}'_2 +\frac{1}{3q}\ ,\nonumber
\end{eqnarray}
where $q$ is a polynomial 
such that $a_2-b_2\geq 1/q$ and $a'_2 - b'_2 \geq 1/q$ (which necessarily exists, given the definition of $(a_2, b_2;a'_2, b'_2)$).

We thus apply Theorem \ref{Thm:SuperStrongAmpl} to the procedure $Q_2$ and the
sets
\begin{eqnarray}
S&=&\{S_n :n\in \N\} \nonumber\\
S_n &=& (0,\frac{1}{2}-\frac{\bar a'_2}{2} , \frac{1}{2}-\frac{\bar b'_2}{2} , \frac{1}{2} +\frac{\bar b_2}{2}
 , \frac{1}{2}+\frac{\bar a_2}{2}, 1)\nonumber\\\label{LTWO-S}
\end{eqnarray}
 and
\begin{eqnarray}
T&=&\{T_n :n\in \N\} \nonumber\\
T_n &=& (0,\frac{1}{2}-\frac{\tilde a'_2}{2} , \frac{1}{2}-\frac{\tilde b'_2}{2} , \frac{1}{2} +\frac{\tilde b_2}{2}
 , \frac{1}{2}+\frac{\tilde a_2}{2}, 1)\ ,\nonumber\\\label{LTWO-T}
\end{eqnarray} 
yielding a new procedure $\tilde Q_2$.

{\bf Step 2:  $\LTWO(a_2, b_2;a'_2, b'_2) \subseteq \LTWO(\tilde a_2, \tilde b_2; \tilde a'_2, \tilde b'_2)$.}

We now can show that for the language $L \in \LTWO( a_2,  b_2;  a'_2,  b'_2)$ introduced at the beginning of Step 1, it also holds that
 $L \in \LTWO(\tilde a_2, \tilde b_2; \tilde a'_2, \tilde b'_2)$.

Suppose $x\in L$. 
We  need to show that 
\begin{equation}
 \exists \vert \psi \rangle,  \Pr [\tilde Q^2_{n}(x,\vert \psi \rangle)=1]\geq \frac{1}{2}+\frac{\tilde a_2}{2}
 \label{Eq:L2-AA}
\end{equation}
and
that
\begin{equation}
\forall \vert \psi \rangle, \Pr [\tilde Q^2_{n}(x,\vert \psi\rangle )= 0]\geq \frac{1}{2}-\frac{\tilde b'_2}{2}\ .
\label{Eq:L2-BB}
\end{equation}

We know that
$ \exists \vert \psi \rangle,  \Pr [Q^2_{n}(x,\vert \psi \rangle)=1]\geq \frac{1}{2}+\frac{a_2}{2}$.
Consequently there exists an eigenstate $\vert \psi_i \rangle$ of $Q^2_n$ with acceptance probability
$p_i = \Pr [Q^2_{n}(x,\vert \psi_i \rangle)=1]\geq \frac{1}{2}+\frac{a_2}{2}$.
The state $\vert \psi_i \rangle$  is also an eigenstate of $\tilde Q^2_n$ 
with acceptance probability
$\tilde p_i = \Pr [\tilde Q^2_{n}(x,\vert \psi_i \rangle)=1]\geq \frac{1}{2}+\frac{\tilde a_2}{2}$. Hence Eq. \eqref{Eq:L2-AA} holds.

We also know that
$ \forall \vert \psi \rangle,  \Pr [Q^2_{n}(x,\vert \psi \rangle)=1]\geq \frac{1}{2}-\frac{b'_2}{2}$.
Consequently for all eigenstates $\vert \psi_i \rangle$ of $Q^2_n$, it holds that their  acceptance probability satisfies
$p_i \geq \frac{1}{2} -\frac{b_2}{2}$.
Therefore all eigenstates of $\tilde Q^2_n$ 
have acceptance probability that satisfies
$\tilde p_i  \geq \frac{1}{2} -\frac{\tilde b'_2}{2}$. Hence Eq. \eqref{Eq:L2-BB} holds.

A similar reasoning applies when $x \notin L$.

Therefore $L \in \LTWO (\tilde a_2, \tilde b_2; \tilde a'_2, \tilde b'_2)$ and consequently $ \LTWO(a_2, b_2;a'_2, b'_2) \subseteq  \LTWO (\tilde a_2, \tilde b_2; \tilde a'_2, \tilde b'_2)$.
}
  \end{proof}

We now show that all the above definitions are equivalent.
 
\begin{thm}\label{Thm:L3L2QMAcapcoQMA}
{\bf $\QMA\cap\coQMA = \LTHREE=\LTWO$.} For all $a,b$,  $a,b'$,  pairs of functions as in Definition$~\ref{def:qvpab}$, the following equalities hold
\begin{equation}
\QMA\cap\coQMA = \LTHREE(a,b; a',b') =\LTWO(a,b; a',b') \  .
\label{Eq:L3L2QMAcapcoQMA}
\end{equation}
\end{thm}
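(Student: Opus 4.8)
The plan is to establish the two claimed equalities by proving a cycle of inclusions among the three classes, tracking the completeness and soundness thresholds carefully, and then removing the dependence on the bounds via amplification. The point to exploit is that all three definitions package the same data: an existence condition certifying $x\in L$ and a universal condition certifying $x\notin L$. Consequently each construction is a simple relabelling of outcomes, and the only genuine work is the threshold bookkeeping and checking that appending or combining witnesses cannot be abused.

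First I would prove $\QMA\cap\coQMA(a,b;a',b')\subseteq\LTHREE(a,b;a',b')$ by the bit-appending idea from the introduction. Given the $(a,b)$-procedure $Q$ and the $(a',b')$-procedure $Q'$ witnessing $L\in\QMA(a,b)$ and $L\in\coQMA(a',b')$, define $Q^3$ on a witness $\vert z\rangle\vert\psi\rangle$ to run $Q$ and report outcome $L$ (resp. $0$) when $z=0$ and $Q$ accepts (resp. rejects), and to run $Q'$ and report $\overline L$ (resp. $0$) when $z=1$. Since the labels $L$ and $\overline L$ record the measured value of $z$, the two branches decohere at the final measurement and a general witness $\alpha_0\vert 0\rangle\vert\psi_0\rangle+\alpha_1\vert 1\rangle\vert\psi_1\rangle$ gives $\Pr[Q^3=L]=\vert\alpha_0\vert^2\Pr[Q(x,\psi_0)=1]$ and $\Pr[Q^3=\overline L]=\vert\alpha_1\vert^2\Pr[Q'(x,\psi_1)=1]$. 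Feeding in the witness of $Q$ with $z=0$ and using soundness of $Q'$ yields exactly the $(a,b;a',b')$ conditions of Definition~\ref{def:QMAcoQMA3}, and symmetrically for $x\notin L$. The reverse inclusion $\LTHREE(a,b;a',b')\subseteq\QMA\cap\coQMA(a,b;a',b')$ is immediate: let $Q$ accept iff $Q^3$ outputs $L$ and $Q'$ accept iff $Q^3$ outputs $\overline L$. Combining these with amplification (Theorem~\ref{Thm-QMA-Amplification}), which makes $\QMA(a,b)$ and $\coQMA(a',b')$ independent of their bounds and hence gives $\QMA\cap\coQMA(a,b;a',b')=\QMA\cap\coQMA$, already proves $\LTHREE(a,b;a',b')=\QMA\cap\coQMA$ for every choice of bounds.

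For the $\LTWO$ equality I would close the loop with a third construction. Starting from a three-outcome procedure at strong bounds $a_3=a'_3=1-\eta$, $b_3=b'_3=\eta$ (available since $\LTHREE(1-\eta,\eta;1-\eta,\eta)=\QMA\cap\coQMA$ by the previous step), define $Q^2$ to output $1$ on outcome $L$, $0$ on outcome $\overline L$, and a fair coin on outcome $0$. Then $\Pr[Q^2(x,\psi)=1]=\tfrac12+\tfrac12\bigl(\Pr[Q^3=L]-\Pr[Q^3=\overline L]\bigr)$, and a short calculation converts the $\LTHREE$ conditions into the $\LTWO$ conditions of Definition~\ref{def:QMAcoQMA4}, the ``no information'' outcome being exactly what maps to probability $\tfrac12$; this yields $\LTWO(1-2\eta,\eta;1-2\eta,\eta)$. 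In the other direction $\LTWO(a_2,b_2;a'_2,b'_2)\subseteq\QMA\cap\coQMA$ follows by taking $Q=Q^2$ and letting $Q'$ accept iff $Q^2$ rejects, so that $\Pr[Q'(x,\psi)=1]=\Pr[Q^2(x,\psi)=0]$; the four $\LTWO$ inequalities become precisely the requirements for an $(a,b)$-procedure $Q$ and an $(a',b')$-procedure $Q'$ with $a=\tfrac12+\tfrac{a_2}{2}$, $b=\tfrac12+\tfrac{b_2}{2}$, and analogously for the primed bounds. Together these give $\QMA\cap\coQMA=\LTWO(1-2\eta,\eta;1-2\eta,\eta)$, and the bound-independence of $\LTWO$ (Theorem~\ref{Thm:L2indepbounds}) upgrades this to $\LTWO(a,b;a',b')=\QMA\cap\coQMA$ for all bounds.

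I expect the main obstacle to be the threshold bookkeeping rather than any deep idea: the constructions are elementary relabellings, but one must check that each preserves the inverse-polynomial completeness--soundness gap and that the equalities hold uniformly in $(a,b;a',b')$. The fair-coin step is the most delicate, because the existence bound it produces is $a_2=a_3-b'_3$, which is useful only after $Q^3$ has been amplified so that $a_3-b'_3$ is bounded below; this is precisely why it is convenient to route the argument through the already-established identity $\LTHREE(1-\eta,\eta;1-\eta,\eta)=\QMA\cap\coQMA$ and then lean on Theorem~\ref{Thm:L2indepbounds} to pass from these strong bounds to arbitrary ones. A secondary point that must be verified is that appending the selector bit $z$ gives no advantage to a cheating prover through coherence between the $z=0$ and $z=1$ branches; as noted above this is ruled out because the outcomes $L$ and $\overline L$ encode the value of $z$, forcing the branches to decohere.
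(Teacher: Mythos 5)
Your proof is correct, and three of its four inclusions coincide with the paper's: the bit-appending construction for $\QMA\cap\coQMA\subseteq\LTHREE$, the outcome-relabelling for $\LTHREE\subseteq\QMA\cap\coQMA$, and the fair-coin construction for $\LTHREE\subseteq\LTWO$ are all exactly what the paper does (the paper measures the selector qubit first rather than arguing decoherence from the outcome labels, but the acceptance probabilities are the same, and it runs the coin step at bounds $(3/4,1/4)$ rather than $(1-\eta,\eta)$ --- immaterial differences). Where you genuinely diverge is in closing the cycle: the paper proves $\LTWO\subseteq\LTHREE$ by running the two-outcome procedure twice via the $N=2$ iterative procedure of Definition~\ref{Deff:GAmpProc}, obtaining outcome probabilities $p_i^2$, $2p_i(1-p_i)$, $(1-p_i)^2$ on eigenstates and reading off the three-outcome bounds, whereas you prove $\LTWO\subseteq\QMA\cap\coQMA$ directly by taking $Q=Q^2$ and $Q'$ the complement of $Q^2$, so that the four $\LTWO$ inequalities become the $(a,b;a',b')$ conditions with $a=\tfrac12+\tfrac{a_2}{2}$, $b=\tfrac12+\tfrac{b_2}{2}$, etc., all gaps remaining inverse polynomial. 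Your route is more elementary --- it needs no Marriott--Watrous-style repetition and no appeal to the eigenbasis structure for this step --- and it suffices for the stated theorem since the chain of inclusions still closes; what the paper's version buys is an explicit direct construction taking an $\LTWO$-procedure to an $\LTHREE$-procedure, which keeps the three characterizations interconvertible without passing back through the two-procedure definition. Your attention to the threshold bookkeeping (in particular that the coin step yields $a_2=a_3-b_3'$ and therefore should be run after amplification, with Theorem~\ref{Thm:L2indepbounds} restoring arbitrary bounds) matches the paper's handling and is exactly the right concern.
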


\begin{proof}

{\bf Step 1: $\QMA\cap\coQMA (a,b;a',b') \subseteq \LTHREE (a,b;a',b') $.}

Let $a,b$ and $a',b'$  functions as in Definition$~\ref{def:qvpab}$.
Let $L\in \QMA\cap \coQMA(a,b;a',b')$.
Then there
 exist $a,b$ and $a',b'$  functions as in Definition$~\ref{def:qvpab}$, and there exist two quantum verification procedures $Q=\{Q_{n} : n \in \N\}$  and 
$Q'=\{Q'_{n} : n \in \N\}$ such that Equations \eqref{QMAcoQMA1} to \eqref{QMAcoQMA4} 
hold. We now show that $L\in \LTHREE(a,b; a',b')$.

Recall that $Q_n$ takes as input  $(x, \vert \psi \rangle\otimes \vert 0^k \rangle )$ 
where $\vert x \vert = n$, $\vert \psi \rangle$ is a state of
$m$ qubits, and both $m= m(n)$ and $k= k(n)$ are polynomial functions of $n$; and that $Q'_n$ takes as input  $(x, \vert \psi' \rangle\otimes \vert 0^{k'} \rangle )$ 
where $\vert x \vert = n$,  $\vert \psi' \rangle$ is a state of
$m'$ qubits, and both $m'= m'(n)$ and $k'= k'(n)$ are polynomial functions of $n$.

Define $m_3=\max\{m,m'\}+1$, and $k_3=\max\{k,k'\}$. Define the circuit $Q^3_{n}$ which on input
$(x, \vert \tilde \psi \rangle\otimes \vert 0^{k_3} \rangle )$ with $\vert x \vert = n$ and $\vert \tilde \psi \rangle$  a state of
$m_3$ qubits, acts as follows:
\begin{enumerate}
\item Measure the first qubit of $\vert \tilde \psi \rangle$ in the standard basis.
\item If the result of the first measurement is $1$, then carry out the procedure $Q_n$ on input $(x, \vert \tilde \psi \rangle_{2,...,m+1}\otimes \vert 0^{k} \rangle)$ where the second register contains qubits $2,...,m+1$ of  $\vert \tilde \psi \rangle$ and the third register contains the first $k$ ancilla qubits. If the outcome of $Q_n$ is $1$, then output $L$; while if the outcome of $Q_n$ is $0$, then output $0$.
\item
 If the result of the first measurement is $0$, then carry out the procedure $Q'_n$ on input $(x, \vert \tilde \psi \rangle_{2,...,m'+1}\otimes \vert 0^{k'} \rangle)$ where the second register contains qubits $2,...,m'+1$ of  $\vert \tilde \psi \rangle$ and the third register contains the first $k'$ ancilla qubits. If the outcome of $Q'_n$ is $1$, then output $\overline L$; while if the outcome of $Q_n$ is $0$, then output $0$.
 \end{enumerate}
One easily checks that the family of circuits $Q^3=\{Q^3_{n} : n \in \N\}$ thus defined satisfy Eqs. \eqref{QMAcoQMA3_1} and \eqref{QMAcoQMA3_2} when $x\in L$, and satisfy Eqs. \eqref{QMAcoQMA3_3} and \eqref{QMAcoQMA3_4} when $x\notin L$, with $a_3=a, b_3=b, a'_3=a', b'_3=b'$.

{\bf Step 2: $\LTHREE (a,b;a',b')  \subseteq \QMA\cap\coQMA (a,b;a',b')$.} 

Let $a_3,b_3$ and $a'_3,b'_3$  be pairs of functions as in Definition$~\ref{def:qvpab}$. Let  $L\in \LTHREE(a_3,b_3; a'_3,b'_3)$.  
Then there
 exists   a
a three outcome quantum verification procedure $Q^3=\{Q^3_{n} : n \in \N\}$ that satisfies Eqs. \eqref{QMAcoQMA3_1} and \eqref{QMAcoQMA3_2} when $x\in L$, and satisfies Eqs. \eqref{QMAcoQMA3_3} and \eqref{QMAcoQMA3_4} when $x\notin L$. We now show that $L\in \QMA\cap\coQMA(a_3,b_3; a'_3,b'_3)$.

Note that $Q^3_n$ takes as input $(x, \vert \psi \rangle\otimes \vert 0^{k_3} \rangle )$ 
where $\vert x \vert = n$, $\vert \psi \rangle$ is a state of
$m_3$ qubits, and both $m_3= m_3(n)$ and $k_3= k_3(n)$ belong to $\poly$.

Define  two quantum verification procedures $Q=\{Q_{n} : n \in \N\}$  and 
$Q'=\{Q'_{n} : n \in \N\}$ as follows:

$Q_n$ and $Q'_n$ take as input $(x, \vert \psi \rangle\otimes \vert 0^{k_3} \rangle )$, with $\vert \psi \rangle$ a state of
$m_3$ qubits.

On input $(x, \vert \psi \rangle\otimes \vert 0^{k_3} \rangle )$, run $Q^3_n(x, \vert \psi \rangle\otimes \vert 0^{k_3} \rangle )$.

In the case of $Q_n$:\\
if $Q^3_n(x, \vert \psi \rangle\otimes \vert 0^{k_3} \rangle ) = L$ output $1$,\\
if $Q^3_n(x, \vert \psi \rangle\otimes \vert 0^{k_3} \rangle ) = 0$ output $0$,\\
if $Q^3_n(x, \vert \psi \rangle\otimes \vert 0^{k_3} \rangle ) = \overline L$ output $0$.

In the case of $Q'_n$:\\
if $Q^3_n(x, \vert \psi \rangle\otimes \vert 0^{k_3} \rangle ) = L$ output $0$,\\
if $Q^3_n(x, \vert \psi \rangle\otimes \vert 0^{k_3} \rangle ) = 0$ output $0$,\\
if $Q^3_n(x, \vert \psi \rangle\otimes \vert 0^{k_3} \rangle ) = \overline L$ output $1$.

One easily checks that the family of circuits $Q=\{Q_{n} : n \in \N\}$ and $Q'=\{Q'_{n} : n \in \N\}$  thus defined satisfy
satisfy Eqs. \eqref{QMAcoQMA1} and \eqref{QMAcoQMA2} when $x\in L$, and satisfy Eqs. \eqref{QMAcoQMA3} and \eqref{QMAcoQMA4} when $x\notin L$,
 with $a=a_3, b=b'_3$ and $a'=a'_3, b'=b_3$.
 
  {\bf Summary of steps 1 and 2.} 

Since $\QMA \cap \coQMA(a,b; a',b')$ is independent of the bounds $a,b;a',b'$, see Definition \ref{Deff:QMACAPCOQMA}, we have proven the first equality in Eq. \eqref{Eq:L3L2QMAcapcoQMA}, including the fact that $\LTHREE(a,b; a',b')$ is independent of the bounds $a,b;a',b'$.

 {\bf Step 3: $\LTHREE  \subseteq \LTWO$.}

Let $L\in \LTHREE(3/4,1/4; 3/4,1/4)$, and let $Q^3=\{Q^3_{n} : n \in \N\}$, $n$, $m_3$, $k_3$ be defined as in the first two paragraphs of Step 2. We  show that $L\in \LTWO(1/2,1/4; 1/2,1/4)$.

Define  the quantum verification procedures $Q^2=\{Q^2_{n} : n \in \N\}$  as follows:

Run $Q^3_n(x, \vert \psi \rangle\otimes \vert 0^{k_3} \rangle )$ and
\begin{itemize}
\item if $Q^3_n(x, \vert \psi \rangle) = L$ then output $1$;
\item
if $Q^3_n(x, \vert \psi \rangle) = 0$, then  output a random bit drawn uniformly at random from the set $\{0,1\}$;
\item
if $Q^3_n(x, \vert \psi \rangle) = \overline L$  then output $0$.
\end{itemize}

Let us consider the case $x\in  L$.  For brevity in what follows we omit the arguments of $Q^3$. For all input states $\vert \psi \rangle$ we have
\begin{eqnarray}
\Pr[Q^2=1]&=& \Pr[Q^3=L]  + \frac{1}{2} \Pr[Q^3=0]\nonumber\\
&=&\Pr[Q^3=L] \nonumber\\
& &  + \frac{1}{2} \left( 1- \Pr[Q^3=L] - \Pr[Q^3=\overline L] \right)\nonumber\\
&=& \frac{1}{2} +  \frac{1}{2}\Pr[Q^3=L]  -  \frac{1}{2} \Pr[Q^3=\overline L]\ . \nonumber\\
\label{Eq:PQ2=1}
\end{eqnarray}

 Since $x\in L$, we have that for all $\vert \psi \rangle$,
$\Pr[Q^3=\overline L] \leq 1/4$. Furthermore we trivially have  $\Pr[Q^3=L]\geq 0$. Therefore Eq. \eqref{Eq:PQ2=1} implies that for all $\vert \psi \rangle$ the following holds:
\begin{equation}
\Pr[Q^2=1] \geq \frac{1}{2}  -  \frac{1}{2} \frac{1}{4} =  \frac{3}{8} \ .
\end{equation}

Since $x\in  L$, there exists a $\vert \psi \rangle$ such that 
 $\Pr[Q^3=L]\geq 3/4$. 
Therefore Eq. \eqref{Eq:PQ2=1} implies that for this $\vert \psi \rangle$ we have 
\begin{equation}
\Pr[Q^2=1] \geq \frac{1}{2}  +  \frac{1}{2} \frac{3}{4}  -  \frac{1}{2} \frac{1}{4} =  \frac{3}{4} \ .
\end{equation}

Therefore  the family of circuits $Q^2=\{Q^2_{n} : n \in \N\}$   
satisfy Eqs. \eqref{QMAcoQMA4_1} and \eqref{QMAcoQMA4_2} when $x\in L$
 with $a_2=1/2$ and $b'_2=1/4$.  
 
 Similar reasoning shows that when  $x\notin L$, the family of circuits 
 $Q^2=\{Q^2_{n} : n \in \N\}$   
satisfy Eqs. \eqref{QMAcoQMA4_3} and \eqref{QMAcoQMA4_4} 
 with  $a'_2=1/2$ and $b_2=1/4$.

As a consequence
\begin{equation}
 \LTHREE(3/4,1/4; 3/4,1/4) \subseteq  \LTWO(1/2,1/4; 1/2,1/4)\ .
 \end{equation}

 {\bf Step 4: $\LTWO\subseteq \LTHREE$.}  

Let $a_2, b_2$ and $a'_2, b'_2$ be  functions as in Definition$~\ref{def:qvpab}$.
 Let  $L\in \LTWO(a_2,b_2; a'_2,b'_2)$. 
Then there exists a $(a_2,b_2; a'_2,b'_2)$--procedure $Q^2=\{Q^2_{n} : n \in \N\}$ that satisfies Eqs. \eqref{QMAcoQMA4_1} and \eqref{QMAcoQMA4_2} when $x\in L$, and satisfies Eqs. \eqref{QMAcoQMA4_3} and \eqref{QMAcoQMA4_4} when $x\notin L$. 
We now show that 
\begin{equation}
L\in \LTHREE\left((\frac{1}{2}+\frac{a_2}{2})^2, (\frac{1}{2}+\frac{b_2}{2})^2, (\frac{1}{2}+\frac{a'_2}{2})^2, (\frac{1}{2}+\frac{b'_2}{2})^2\right)\ .
\end{equation}

Note that $Q^2_n$ takes as input $(x, \vert \psi \rangle\otimes \vert 0^{k_2} \rangle )$ 
where $\vert x \vert = n$, $\vert \psi \rangle$ is a state of
$m_2$ qubits, and both $m_2= m_2(n)$ and $k_2= k_2(n)$ are in $\poly$. 

Note that since $Q^2$ is a quantum verification procedure, there exists an
eigenbasis of $Q^2$ for $x$, denoted
 $B_{Q^2}(x)=\{\vert \psi^2_i\rangle : 1 \leq i \leq 2^{m_2}\}$. We denote by 
 $p_i= \Pr [Q^2_{n}(x,\vert \psi^2_i \rangle)
=1]$ the acceptance probability of $\vert \psi^2_i\rangle$.

Define  the polynomial time uniform family of quantum circuits $Q^3=\{Q^3_{n} : n \in \N\}$  as follows:

$Q^3_n$ takes as input $(x, \vert \psi \rangle\otimes \vert 0^{k_3} \rangle )$ where $\vert x \vert = n$, $\vert \psi \rangle$ is a state of
$m_2$ qubits, and  $k_3= k_3(n)$ is a polynomial function of $n$.

On input $(x, \vert \psi \rangle)$, carry out a slight modification of the 
Iterative Procedure  of Definition \ref{Deff:GAmpProc} (see note below  Definition \ref{Deff:GAmpProc}) applied to 
$Q^2$ as follows.

The parameter $N$ takes the value $N=2$. 

The functions $g_{\vert x \vert}=g$ are taken as probability distributions over the alphabet $\{L,0,\overline L\}$. They are independent of $\vert x \vert$, and take the deterministic form:
\begin{eqnarray}
&g :\{0,1,2\}\rightarrow \{L,0,\overline L\} &\nonumber\\
&g(2)= L\ ,\ 
g(1)= 0\ ,\ 
g(0)= \overline L&
\end{eqnarray}

Let us bound the probabilities of the outcomes $L$, $0$, and $\overline L$. We can restrict our discussion to the eigenstates of $Q^2$ (because of the absence of interferences, see Equation \eqref{Eq:eigenbasis}).

One finds 
\begin{eqnarray}
\Pr [Q^3_{n}(x,\vert \psi^2_i\rangle )=L]&=& p_i^2\ ,\nonumber\\
\Pr [Q^3_{n}(x,\vert \psi^2_i\rangle )= 0]&=& 2 p_i (1-p_i)\ ,\nonumber\\
\Pr [Q^3_{n}(x,\vert \psi^2_i\rangle )=\overline L]&=& (1-p_i)^2\ .
\end{eqnarray}
 
 One then easily checks that the family of circuits $Q^3=\{Q^3_{n} : n \in \N\}$   thus defined 
satisfy Eqs. \eqref{QMAcoQMA3_1} and \eqref{QMAcoQMA3_2} when $x\in L$, and satisfy Eqs. \eqref{QMAcoQMA3_3} and \eqref{QMAcoQMA3_4} when $x\notin L$,
 with $a_3=(\frac{1}{2}+\frac{a_2}{2})^2$, $b_3=(\frac{1}{2}+\frac{b_2}{2})^2$  and $a'_3=(\frac{1}{2}+\frac{a'_2}{2})^2$, $b'_3=(\frac{1}{2}+\frac{b'_2}{2})^2$.

 {\bf Summary of Steps 3 and 4.}
 
 Since $\LTWO(a,b;a',b')$ and $\LTHREE(a,b;a',b')$   are independent of the bounds $a,b;a',b'$  
 \rev{
 (see Theorem \ref{Thm:L2indepbounds} and remark in paragraph ``Summary of steps 1 and 2''),} 
 steps 3 and 4 show that 
$\LTWO=\LTHREE$. Combining with  steps 1 and 2, we have the second equality in Eq. \eqref{Eq:L3L2QMAcapcoQMA}.

 \end{proof}

 \section{Functional $\QMA \cap \coQMA$.}
 \label{Sec:FQMAcoQMA}
 
\subsection{Definitions} 
 
 We first generalise Definition \ref{Deff:Accepting_and_rejecting_subspaces} as follows:
 
\begin{deff}
\label{Deff:AccRejStates2}
{\bf Accepting and rejecting subspaces.} 
Let $Q=\{Q_n\}$ be a quantum verification procedure and fix $a,a'\in[0,1]$.

We define the following binary relations 
over binary strings and quantum states:
\begin{multline*}
\HH_{Q}^{[0,\frac{1-a'}{2}]\cup[\frac{1+a}{2}, 1]}
(x, \vert \psi \rangle) = 1 \\ \text{ if }
\vert \psi \rangle \in \SPAN ( \HH_{Q}^{\geq  \frac{1+a}{2}}(x) \cup
\HH_{Q}^{\leq  \frac{1-a'}{2}}(x)
 )
 \end{multline*}
 and
\begin{multline*}
\HH_{Q}^{[ \frac{1-a'}{2} , \frac{1+a}{2}]}(x, \vert \psi \rangle) = 1 \\ \text{ if } \vert \psi \rangle \in \SPAN ( \{ \HH_{Q}(x,p) :  \frac{1-a'}{2} \leq p \leq  \frac{1+a}{2} \} )\ .
 \end{multline*}
and to simplify notation, we  denote
\begin{eqnarray}
&&\HH_{Q}^{[0,\frac{1-a'}{2}]\cup[\frac{1+a}{2}, 1]}
(x)=\nonumber\\&&
\quad\quad \left\{ \vert \psi\rangle  \ :\    \HH_{Q}^{[0,\frac{1-a'}{2}]\cup[\frac{1+a}{2}, 1]}
(x, \vert \psi \rangle) = 1 \right\}\ ,\nonumber\\
&&\HH_{Q}^{[ \frac{1-a'}{2} , \frac{1+a}{2}]}(x) =
\nonumber\\&&
\quad\quad
\left\{ \vert \psi\rangle  \ :\    \HH_{Q}^{[ \frac{1-a'}{2} , \frac{1+a}{2}]}(x, \vert \psi \rangle) = 1\right\}\ .
\end{eqnarray}
\end{deff}

 We base our definition of functional $\QMA \cap \coQMA$ on $(a_2,b_2; a'_2,b'_2)$--Quantum Verification Procedures  because these have an eigenbasis which allows the definition of accepting and rejecting subspaces.

\begin{deff}
\label{Deff:FQMAcapCOQMA}
{\bf Functional $\QMA \cap \coQMA$  (${\rm F}(\QMA \cap \coQMA)$).}
Let $a_2,b_2$ and $a'_2,b'_2$ be  functions as in Definition$~\ref{def:qvpab}$. 
The {\em class} ${\rm F}(\QMA \cap \coQMA)(a_2,b_2; a'_2,b'_2)$ is the set 
$\{(
\HH_{Q^2}^{[0,\frac{1-a'_2}{2}]\cup[\frac{1+a_2}{2}, 1]}(x) 
, 
\HH_{Q^2}^{[ \frac{1-b'_2}{2} , \frac{1+b_2}{2}]}(x)
)\}$
of pairs of relations,
where $Q^2$ is an $(a_2,b_2; a'_2,b'_2)$--quantum verification
procedure. 
\end{deff}

We already showed in Theorem \ref{Thm:L2indepbounds} that the language $\LTWO$ is independent of the completeness and soundness bounds used to define it. We now show, using the same argument, that ${\rm F}(\QMA \cap \coQMA)$ is also independent of  the completeness and soundness bounds.

\begin{thm}
{\bf ${\rm F}(\QMA \cap \coQMA)$ is independent of the bounds $(a_2,b_2; a'_2,b'_2)$.} 
\label{Thm:FQMAcoQMASR}
Let $Q^2$ be an $(a_2,b_2; a'_2,b'_2)$--quantum verification
procedure, with $a_2,b_2$ and $a'_2,b'_2$ functions as in Definition$~\ref{def:qvpab}$. 

Let $\tilde a_2,\tilde b_2$ and $\tilde a'_2,\tilde b'_2$ be functions as in Definition$~\ref{def:qvpab}$  
with $\tilde a_2, \tilde a'_2 < 1-2^{-r}$ and $\tilde b_2, \tilde b'_2> 2^{-r}$, 
for some $r\in \poly$.

Then there exists an $(\tilde a_2,\tilde b_2; \tilde a'_2,\tilde b'_2)$--quantum verification
procedure  $\tilde Q^2$, such that there exists an e-map from $Q^2$ to $\tilde Q^2$ and 
the strictly increasing functions $\{f_n\}$ that define the e--map  (see Definition \ref{Deff:StrongReduction}) 
satisfy
$f_n(a_2)=\tilde a_2$, $f_n(b_2)=\tilde b_2$,
$f_n(a'_2)=\tilde a'_2$, $f_n(b'_2)=\tilde b'_2$.
Consequently for all $x$
\begin{equation}
\HH_{\tilde Q^2}^{[0,\frac{1-\tilde a'_2}{2}]\cup[\frac{1+\tilde a_2}{2}, 1]}(x) 
=
\HH_{Q^2}^{[0,\frac{1-a'_2}{2}]\cup[\frac{1+a_2}{2}, 1]}(x) 
\end{equation}
and
\begin{equation}
\HH_{\tilde Q^2}^{[ \frac{1-\tilde  b'_2}{2} , \frac{1+\tilde b_2}{2}]}(x)
=
\HH_{Q^2}^{[ \frac{1-b'_2}{2} , \frac{1+b_2}{2}]}(x)\ .
\end{equation}
\end{thm}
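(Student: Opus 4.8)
The plan is to mirror the proof of Theorem \ref{Thm:L2indepbounds} almost line for line, the only genuinely new step being that the conclusion there about \emph{languages} is upgraded to a conclusion about \emph{subspaces} by invoking Proposition \ref{Prop:eMaps}. I would begin by rewriting the four bounds defining $Q^2$ in terms of the single acceptance probability $\Pr[Q^2_n(x,\vert\psi\rangle)=1]$, producing the ordered thresholds
\begin{equation*}
0<\frac{1-a'_2}{2}<\frac{1-b'_2}{2}<\frac{1+b_2}{2}<\frac{1+a_2}{2}<1
\end{equation*}
of Eq. \eqref{Eq:gapsL2}, together with the analogous list for the tilde bounds. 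These four inner values are exactly the thresholds entering the two relations of Definition \ref{Deff:FQMAcapCOQMA}: the pair $\tfrac{1-a'_2}{2},\tfrac{1+a_2}{2}$ delimits the certifying subspace, while $\tfrac{1-b'_2}{2},\tfrac{1+b_2}{2}$ delimits the middle subspace. The whole theorem reduces to producing an e--map $f_n$ from $Q^2$ to a procedure $\tilde Q^2$ whose (strictly increasing) defining function sends each of these four thresholds onto its tilde counterpart.

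To build $f_n$ and $\tilde Q^2$ I would apply Theorem \ref{Thm:SuperStrongAmpl} to $Q^2$ with the source list $S_n=(0,\tfrac{1-a'_2}{2},\tfrac{1-b'_2}{2},\tfrac{1+b_2}{2},\tfrac{1+a_2}{2},1)$ and the target list $T_n=(0,\tfrac{1-\tilde a'_2}{2},\tfrac{1-\tilde b'_2}{2},\tfrac{1+\tilde b_2}{2},\tfrac{1+\tilde a_2}{2},1)$. By Theorem \ref{Thm:GAmpStrongReduc} the procedure it returns is reached from $Q^2$ by an e--map, and the identities $f_n(s_i)=t_i$ deliver the four threshold equalities. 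The conclusion is then immediate from Proposition \ref{Prop:eMaps}: conservation of eigenspaces converts each threshold identity into an equality of accepting / rejecting subspaces, so $\HH_{Q^2}^{\geq (1+a_2)/2}=\HH_{\tilde Q^2}^{\geq (1+\tilde a_2)/2}$ and $\HH_{Q^2}^{\leq (1-a'_2)/2}=\HH_{\tilde Q^2}^{\leq (1-\tilde a'_2)/2}$; taking spans yields the first displayed equality of the theorem, and taking the complementary middle band yields the second. As in Step~2 of Theorem \ref{Thm:L2indepbounds}, I would also record that strict monotonicity of $f_n$ automatically forces $\tilde Q^2$ to obey the $(\tilde a_2,\tilde b_2;\tilde a'_2,\tilde b'_2)$--promise, so that $\tilde Q^2$ is a legitimate procedure for the target class.

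The step I expect to be the main obstacle is guaranteeing that Theorem \ref{Thm:SuperStrongAmpl} is actually applicable, since it demands that consecutive source points be separated by an inverse-polynomial gap. The two inner gaps $\tfrac{a'_2-b'_2}{2}$ and $\tfrac{a_2-b_2}{2}$ are inverse-polynomial by Definition \ref{def:qvpab}, but the gaps to the endpoints, $\tfrac{1-a'_2}{2}$ and $\tfrac{1-a_2}{2}$, and the central gap $\tfrac{b_2+b'_2}{2}$, can be exponentially small. For bounds bounded away from $0$ and $1$ with $b_2,b'_2$ not both tiny, in particular for the canonical values on which the class is ultimately defined, this never occurs and the argument above runs verbatim; so the theorem holds in exactly the form needed to fix the thresholds. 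For fully general source bounds one must restore inverse-polynomial gaps first, borrowing the contraction device of Theorem \ref{Thm:L2indepbounds}. The point demanding more care here than at the language level is that contracting the bounds must not move the two subspaces being compared: using the promise, which confines the spectrum of $Q^2$ to $[\tfrac{1-b'_2}{2},1]$ when $x\in L$ and to $[0,\tfrac{1+b_2}{2}]$ when $x\notin L$, one has to verify that no eigenvalue of $Q^2$ lies in the thin band opened up by each contraction, and this is the one place where the functional statement is genuinely harder than its language counterpart.
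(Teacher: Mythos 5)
Your proposal coincides with the paper's: the published proof of this theorem is the single remark that the proof of Theorem~\ref{Thm:L2indepbounds} also proves it, and that proof is exactly your recipe --- order the five thresholds as in Eq.~\eqref{Eq:gapsL2}, contract the bounds by $1/(3q)$ to restore inverse-polynomial gaps, apply Theorem~\ref{Thm:SuperStrongAmpl} to the resulting source and target lists, and convert the threshold identities $f_n(s_i)=t_i$ into subspace equalities via Proposition~\ref{Prop:eMaps}. The difficulty you isolate at the end is genuine, and the paper does not address it either: once the bounds are contracted, the e--map is anchored at $\frac{1+\bar a_2}{2}=\frac{1+a_2}{2}-\frac{1}{6q}$ rather than at $\frac{1+a_2}{2}$, so what the construction actually delivers is $\HH_{\tilde Q^2}^{[0,\frac{1-\tilde a'_2}{2}]\cup[\frac{1+\tilde a_2}{2},1]}(x)=\HH_{Q^2}^{[0,\frac{1-\bar a'_2}{2}]\cup[\frac{1+\bar a_2}{2},1]}(x)$, and nothing in the $(a_2,b_2;a'_2,b'_2)$--promise prevents an eigenvalue from lying in the opened band $[\frac{1+\bar a_2}{2},\frac{1+a_2}{2})$, in which case the right-hand side strictly contains the subspace claimed in the theorem. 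Note also that this condition cannot be ``verified'' as you suggest; for adversarial bounds exponentially close to $0$ or $1$ it may simply fail, so in that regime both your argument and the paper's leave the statement as written unproven. Whenever all five source gaps are already inverse-polynomial --- in particular for the standard values $(2/3,1/3;2/3,1/3)$ on which ${\rm F}(\QMA\cap\coQMA)$ is ultimately defined in Definition~\ref{Deff:QMACAPcoQMA2/3}, and more generally whenever $a_2,a'_2$ are polynomially bounded away from $1$ and $b_2+b'_2$ from $0$ --- no contraction is needed, Theorem~\ref{Thm:SuperStrongAmpl} applies directly to the uncontracted list, and your argument, like the paper's, is complete.
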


  \begin{proof}
\rev{  
The proof of Theorem \ref {Thm:L2indepbounds} also proves Theorem \ref{Thm:FQMAcoQMASR}.
} 
  \end{proof}

As a consequence the precise values of the bounds $(a_2,b_2; a'_2,b'_2)$  are irrelevant. \rev{ We therefore define ${\rm F}(\QMA \cap \coQMA)$ using standard values for $a_2, a'_2$ and $b_2; b'_2$.}

\begin{deff}
\label{Deff:QMACAPcoQMA2/3}
We define the class ${\rm F}(\QMA \cap \coQMA)$ as ${\rm F}(\QMA \cap \coQMA)(2/3, 1/3; 2/3, 1/3).$
\end{deff}

Since the definition of Functional $\QMA\cap\coQMA$ is in terms of Definition \ref{def:QMAcoQMA4}, what about Functional $\QMA\cap\coQMA$ if one uses the original Definition \ref{Deff:QMACAPcoQMA}?
The proof of 
Theorem \ref{Thm:L3L2QMAcapcoQMA} provides a natural mapping from the pair of quantum verification procedures $Q$ and $Q'$ used in Definition \ref{Deff:QMACAPcoQMA} to the single quantum verification procedure $Q^2$ used in Definition \ref{def:QMAcoQMA4}. Using this mapping  leads to the following definition.

\begin{deff}
\label{Deff:QMACanonical}
{\bf Canonical Definition of Functional $\QMA\cap\coQMA$.}
 Let $a,b$ and $a',b'$ be pairs of functions as in Definition$~\ref{def:qvpab}$. Let $Q$ and $Q'$ be two quantum verification procedures as in Definition \ref{Deff:QMACAPcoQMA}. These procedures define a language $L\in \QMA\cap\coQMA$.
 
 Let $\vert \psi_i\rangle$ be an eigenbasis of $Q$ for $x$, and $p_i$ the corresponding acceptance probability;
 and let $\vert \psi'_i\rangle$ be an eigenbasis of $Q'$ for $x$, and $p'_i$ the corresponding acceptance probability.
 
For all $x$, denote by
 \begin{eqnarray}
  \HH^+(x)
  &=&
  \{ \vert 0 \rangle\otimes \vert \psi_i\rangle : p_i >a\}\nonumber\\ & &
  \cup
   \{ \vert 1 \rangle\otimes \vert \psi'_i\rangle : p'_i > a'\}
 \end{eqnarray} 
 and by
 \begin{eqnarray}
  \HH^-(x)
  &=&
  \{ \vert 0 \rangle\otimes \vert \psi_i\rangle : p_i <b\}\nonumber\\ & &
  \cup
   \{ \vert 1 \rangle\otimes \vert \psi'_i\rangle : p'_i <b'\}
 \end{eqnarray} 

Then the pair of relations 
\begin{equation}
( \HH^+(x),   \HH^-(x)) \in {\rm F}(\QMA \cap \coQMA)(a,b; a',b')
\end{equation}
are the canonical relations associated to the pair of procedures $Q$ and $Q'$ with completeness and soundness probabilities $a,b$ and $a',b'$.
 \end{deff}

From this definition, we see that one of the original problems we were confronted with, namely that $Q$ and $Q'$ do not commute has been circumvented by appending to the witnesses a single qubit whose state, $\vert 0\rangle$ or $\vert 1\rangle$, indicates whether this is a witness for $Q$ or for $Q'$.

\section{Total Functional $\QMA$ equals Functional $\QMA \cap \coQMA $}
\label{Sec:TFQMAequal}

We now prove that Total Functional $\QMA$ equals Functional $\QMA \cap \coQMA $. 

\begin{thm}
{ \bf  ${\rm F}(\QMA \cap \coQMA) = \TFQMA $ }
\label{ThmFQMAcoQMA=TFQMA}
\end{thm}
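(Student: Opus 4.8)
The plan is to prove the two inclusions separately, after first using the bound-independence results (Theorems \ref{Thm:FQMASR} and \ref{Thm:FQMAcoQMASR}) to work throughout with the \emph{symmetric} standard values $a_2=a'_2=2/3$, $b_2=b'_2=1/3$ for ${\rm F}(\QMA\cap\coQMA)$, and $a=2/3$, $b=1/3$ for $\TFQMA$. The guiding idea is the correspondence: the accepting subspace $\HH_Q^{\geq a}(x)$ of a total procedure should play the role of the \emph{definitive} subspace $\HH_{Q^2}^{[0,\frac{1-a'_2}{2}]\cup[\frac{1+a_2}{2},1]}(x)$ (the states certifying $x\in L$ or $x\notin L$), while the rejecting subspace $\HH_Q^{\leq b}(x)$ should play the role of the \emph{ambiguous} subspace $\HH_{Q^2}^{[\frac{1-b'_2}{2},\frac{1+b_2}{2}]}(x)$. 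Since a total procedure always has a witness, the definitive subspace is always non-empty, matching the totality built into the $(a_2,b_2;a'_2,b'_2)$-dichotomy.

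For ${\rm F}(\QMA\cap\coQMA)\subseteq\TFQMA$, given a procedure $Q^2$ I would fold its spectrum around $1/2$: apply the iterative procedure of Definition \ref{Deff:GAmpProc} with $N=2$ to $Q^2$, accepting exactly when the two registered outcomes agree. By Theorem \ref{Thm:PropertiesGenAmpl} this yields a procedure $Q$ sharing the eigenbasis of $Q^2$, with acceptance probability transformed as $p\mapsto p^2+(1-p)^2=1-2p(1-p)$. This map is not monotone, so it is \emph{not} an e-map, but Theorem \ref{Thm:PropertiesGenAmpl} still guarantees eigenspaces are preserved. With symmetric bounds, definitive states ($p\geq 5/6$ or $p\leq 1/6$) map to acceptance probability $\geq 13/18$, while ambiguous states ($p\in[1/3,2/3]$) map to acceptance probability $\leq 5/9$; using $|p-\tfrac12|\geq\tfrac13 \iff 1-2p(1-p)\geq 13/18$ these inclusions become exact equalities of subspaces. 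Thus $Q$ is a $(13/18,5/9)$-total procedure with $\HH_Q^{\geq 13/18}(x)$ equal to the definitive subspace and $\HH_Q^{\leq 5/9}(x)$ equal to the ambiguous subspace, and amplifying back to standard bounds (Theorem \ref{Thm:FQMASR}, which preserves both subspaces and totality) places the pair of relations in $\TFQMA$.

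For $\TFQMA\subseteq{\rm F}(\QMA\cap\coQMA)$, given an $a$-total procedure $Q$ I would instead shift its spectrum by the affine e-map $p\mapsto \tfrac12+\tfrac p2$, realized by the procedure $Q^2$ that runs $Q$ and then, with probability $1/2$, outputs $Q$'s bit and, with probability $1/2$, outputs $1$. Its accepting POVM is $\tfrac12 Q_x+\tfrac12 I$, so it shares the eigenbasis of $Q$ and, by Proposition \ref{Prop:eMaps}, conserves eigenspaces. Every eigenvalue now lies in $[\tfrac12,1]$, so case~1 of Definition \ref{def:QMAcoQMA4} holds for all $x$ (making $Q^2$ a valid $(2/3,1/3;2/3,1/3)$-procedure), and there are no definitive-low states. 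With $a_2=2/3$, $b_2=1/3$ the witnesses ($p\geq 2/3$) map exactly to the definitive subspace ($\tfrac12+\tfrac p2\geq 5/6$) and the rejecting states ($p\leq 1/3$) map exactly to the ambiguous subspace ($\tfrac12\leq \tfrac12+\tfrac p2\leq 2/3$), yielding the required subspace identities and hence membership in ${\rm F}(\QMA\cap\coQMA)$.

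The main obstacle, and the reason the quantum proof is more delicate than its classical counterpart, is the asymmetry between the two directions forced by the rigidity of the spectral endpoints $0$ and $1$: eigenstates accepted with probability exactly $0$ (or $1$) cannot be moved by any iterative procedure. Hence the ``fold'' used in the first inclusion and the ``shift'' used in the second are genuinely different mechanisms — the former requires the non-monotone eigenspace-preserving transformation $1-2p(1-p)$ supplied by Theorem \ref{Thm:PropertiesGenAmpl} (not an e-map), while the latter requires the monotone shift $\tfrac12+\tfrac p2$ that deliberately pushes the bottom of the spectrum up to $1/2$. Making the subspace identities hold \emph{exactly} rather than only up to the promise gap is precisely what dictates the symmetric choice $a_2=a'_2$, $b_2=b'_2$, which the bound-independence theorems supply without loss of generality; verifying these exact identities and the preservation of totality is then the only real content, and both reduce to the elementary equivalence $|p-\tfrac12|\geq\tfrac13\iff 1-2p(1-p)\geq \tfrac{13}{18}$.
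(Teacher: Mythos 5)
Your proposal is correct and follows essentially the same route as the paper: for ${\rm F}(\QMA\cap\coQMA)\subseteq\TFQMA$ you fold the spectrum with the $N=2$ iterative procedure that accepts when the two outcomes agree ($p\mapsto p^2+(1-p)^2$, giving a $13/18$-total procedure with soundness $10/18$), and for the reverse inclusion you shift with the coin-flip procedure realizing $p\mapsto\tfrac12+\tfrac p2$ — exactly the two constructions in the paper's Steps 1 and 2, with the same exact subspace identities and the same appeal to the bound-independence theorems at the end.
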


\begin{proof}
{\bf Step 1: ${\rm F}(\QMA \cap \coQMA)\subseteq \TFQMA $}

Let $Q^2=\{Q^2_n : n \in \N \}$ be an $(2/3,1/3;2/3,1/3)$--quantum verification procedure. For brevity we denote in what follows $a=2/3$ and $b=1/3$.

The corresponding accepting and rejecting subspaces are
$
\HH_{Q^2}^{[0,\frac{1-a}{2}]\cup[\frac{1+a}{2}, 1]}(x)$ and 
$\HH_{Q^2}^{[ \frac{1-b}{2} , \frac{1+b}{2}]}(x)
$. 
\rev{
Note that 
$
\HH_{Q^2}^{[0,\frac{1-a}{2}]\cup[\frac{1+a}{2}, 1]}(x)$ is nonempty for all $x$.
}
Denote by
 $B_{Q^2}(x)=\{\vert \psi^2_i\rangle$ the eigenbasis of $Q^2$ for $x$ and by 
 $p_i= \Pr [Q^2_{n}(x,\vert \psi^2_i \rangle)
=1]$ the acceptance probability of $\vert \psi^2_i\rangle$.

We will  show that there exists a $\frac{1+a^2}{2}$-total quantum verification procedure $Q^T=\{ Q^T_n :  n \in \N \}$
 such that 
\begin{equation}
\HH_{Q^T}^{\geq \frac{1+a^2}{2}}(x) = \HH_{Q^2}^{[0,\frac{1-a}{2}]\cup[\frac{1+a}{2}, 1]}(x)
\label{eq:HHHH1-B}
\end{equation}
and 
\begin{equation}
\HH_{Q^T}^{\leq \frac{1+ b^2}{2}}(x) = \HH_{Q^2}^{[ \frac{1-b}{2} , \frac{1+b}{2}]}(x)
 \ .\label{eq:HHHH2-B}
\end{equation}

We define $Q^T$ as follows:

On input $(x, \vert \psi \rangle)$, run the iterative procedure of Definition \ref{Deff:GAmpProc} on  $Q^2_n$ on input $(x, \vert \psi \rangle)$, for parameter $N=2$, thereby producing a 2 bit output $z_1,z_2$ and functions $G=\{g_{\vert x\vert}\}$ independent of $\vert x\vert$ given by:
\begin{eqnarray}
&g(0)=g(2)=1&\nonumber\\
&g(1)=0&\ .\label{Eq:77}
\end{eqnarray}

One easily checks that 
\begin{eqnarray}
\Pr [Q^T_{n}(x,\vert \psi^2_i\rangle )= 1]&=& p_i^2+ (1-p_i)^2
\label{Eq:78}
\\
\Pr [Q^T_{n}(x,\vert \psi^2_i\rangle )= 0]&=& 2 p_i (1-p_i)\ .
\label{Eq:79}
\end{eqnarray}
\rev{
Equations \eqref{eq:HHHH1-B} and  \eqref{eq:HHHH2-B} then follow from Eq. \eqref{Eq:78}.}

The completeness and soundness thresholds of procedure $Q^T$ are
$\frac{1+a^2}{2}=\frac{13}{18}$ and $\frac{1+b^2}{2}=\frac{10}{18}$.
They can be changed to $2/3$ and $1/3$, see
 Theorem \ref{Thm:FQMASR}, which proves the result.

{ \bf Step 2: $  \TFQMA  \subseteq
{\rm F}(\QMA \cap \coQMA)
$ }

Let $Q=\{Q_n : n \in \N \}$ be an $a$-total quantum verification procedure.
Choose any $b$ such that $a,b$ is a pair of functions as in Definition$~\ref{def:qvpab}$.
 The corresponding accepting and rejecting subspaces are $\HH_{Q}^{\geq a}(x)  $ and $\HH_{Q}^{\leq b}(x) $. 

Denote by
 $B_{Q}(x)=\{\vert \psi_i\rangle\}$ the eigenbasis of $Q$ for $x$ and by 
 $p_i= \Pr [Q_{n}(x,\vert \psi_i \rangle)
=1]$ the acceptance probability of $\vert \psi_i\rangle$.

We show that there exists  an $(a,b;a,b)$--quantum verification procedure
$Q^2=\{Q^2_n : n \in \N \}$
 such that 
\begin{equation}
\HH_{Q^2}^{[0,\frac{1-a}{2}]\cup[\frac{1+a}{2}, 1]}(x) =\HH_{Q}^{\geq a}(x) 
\label{eq:HHHHA}
\end{equation}
and such that
\begin{equation}
\HH_{Q^2}^{[ \frac{1-b}{2} , \frac{1+b}{2}]}(x)=\HH_{Q}^{\leq b}(x) 
 \ .\label{eq:HHHHB}
\end{equation}

We define $Q^2(x,\vert \psi\rangle)$  as follows:

Let $c$ be a bit drawn uniformly at random from the distribution $\{0,1\}$.

If $c=0$, accept

If $c=1$, carry out the procedure $Q$ on input $(x,\vert \psi\rangle)$. Output $Q_n(x,\vert \psi\rangle)$.

One easily checks that 
\begin{eqnarray}
\Pr [Q^2_{n}(x,\vert \psi_i\rangle )= 1]&=& \frac{1+p_i}{2}
\end{eqnarray}
from which Equations \eqref{eq:HHHHA} and  \eqref{eq:HHHHB} follow. (Note that we in fact have that $\HH_{Q^2}^{< 1/2}(x)=\emptyset$,
and consequently 
$\HH_{Q^2}^{[0,\frac{1-a}{2}]\cup[\frac{1+a}{2}, 1]}(x) 
=\HH_{Q^2}^{\geq \frac{1+a}{2}}(x)$
and 
$\HH_{Q^2}^{[ \frac{1-b}{2} , \frac{1+b}{2}]}(x)
=\HH_{Q^2}^{[ \frac{1}{2} , \frac{1+b}{2}]}(x)$).

\end{proof}

\section{If $\TFQMA_W \subseteq \FBQP_W $ then $\QMA \cap \coQMA = \BQP$.}
\label{Sec:InclFBQP}

We show that if the functional $\TFQMA$ class is trivial, in the sense that one can always efficiently produce a witness, then $\QMA \cap \coQMA = \BQP$.

\begin{thm}
{ \bf  
If $\TFQMA_W$ is included in $\FBQP_W$ then $\QMA \cap \coQMA$ equals $\BQP$. }
\end{thm}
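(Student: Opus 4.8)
The plan is to prove the nontrivial inclusion $\QMA\cap\coQMA\subseteq\BQP$; the reverse inclusion $\BQP\subseteq\QMA\cap\coQMA$ is already recorded in Eq.~\eqref{Eq:SubseteqQMAcoQMA}. So I fix $L\in\QMA\cap\coQMA$ and aim to build a $\BQP$ algorithm for it. First I would pass to the single-procedure description: by Theorem~\ref{Thm:L3L2QMAcapcoQMA} we may take $L\in\LTWO(2/3,1/3;2/3,1/3)$, witnessed by a procedure $Q^2$ with eigenbasis $\{\vert\psi_i\rangle\}$ and acceptance probabilities $p_i$. The $\LTWO$ promise gives the crucial one-sidedness: if $x\in L$ then every $p_i\ge 1/3$ (so $\HH_{Q^2}^{\le 1/6}(x)=\emptyset$), whereas if $x\notin L$ then every $p_i\le 2/3$ (so $\HH_{Q^2}^{\ge 5/6}(x)=\emptyset$).

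Next I reuse the construction from Step~1 of Theorem~\ref{ThmFQMAcoQMA=TFQMA}: running the $N=2$ iterative procedure on $Q^2$ with $g(0)=g(2)=1$, $g(1)=0$ produces a total procedure $Q^T$ which, by Theorem~\ref{Thm:PropertiesGenAmpl}, shares the eigenbasis of $Q^2$ and accepts $\vert\psi_i\rangle$ with probability $p_i^2+(1-p_i)^2$. Using Theorem~\ref{Thm-QMA-Amplification} I then amplify $Q^T$, viewed as a $(13/18,10/18)$-procedure, to a $(1-\epsilon,\epsilon)$-procedure $\tilde Q^T$ with $\epsilon$ a small fixed constant (say $\epsilon=1/24$); amplification is an e-map, so $\tilde Q^T$ keeps the eigenbasis, remains total, and accepts $\vert\psi_i\rangle$ with probability $\phi\!\left(p_i^2+(1-p_i)^2\right)$ for an increasing $\phi$ with $\phi(13/18)\ge 1-\epsilon$ and $\phi(10/18)\le\epsilon$. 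Since $p_i^2+(1-p_i)^2\ge 13/18$ exactly when $p_i\le 1/6$ or $p_i\ge 5/6$, and $\le 10/18$ exactly when $p_i\in[1/3,2/3]$, the eigenstates with $p_i\in[1/3,2/3]$ are accepted by $\tilde Q^T$ with probability at most $\epsilon$.

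The point of totality is that $R^{\ge 1-\epsilon}_{\tilde Q^T}\in\TFQMA_W$ (Definition~\ref{Deff:TFQMA-A}), so the hypothesis $\TFQMA_W\subseteq\FBQP_W$ (Definition~\ref{Deff:FBQP}) supplies an efficiently preparable family $\{\rho(x)\}$ with $\Pr[\tilde Q^T(x,\rho(x))=1]\ge 1-\epsilon$. The $\BQP$ algorithm is then immediate: prepare $\rho(x)$ and run $Q^2$ on it, accepting iff $Q^2$ accepts. To analyse it, expand $\rho(x)$ in the common eigenbasis with weights $q_i$. Because eigenstates with $p_i\in[1/3,2/3]$ contribute at most $\epsilon$ each to $\Pr[\tilde Q^T=1]$, the constraint $\sum_i q_i\,\phi(p_i^2+(1-p_i)^2)\ge 1-\epsilon$ forces the total weight on those eigenstates to be $O(\epsilon)$. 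Combined with the one-sidedness above, for $x\in L$ almost all weight sits on $p_i>2/3$, giving $\Pr[Q^2(x,\rho(x))=1]=\sum_i q_i p_i\ge (2/3)(1-2\epsilon)>1/2$, while for $x\notin L$ almost all weight sits on $p_i<1/3$, giving $\sum_i q_i p_i\le 1/3+2\epsilon/3<1/2$. This constant gap around $1/2$ can be amplified to $(2/3,1/3)$ by the standard repetition argument (re-preparing $\rho(x)$ each round), so $L\in\BQP$, and with Eq.~\eqref{Eq:SubseteqQMAcoQMA} we conclude $\QMA\cap\coQMA=\BQP$.

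I expect the main obstacle to be the quantitative step in the last paragraph. The acceptance probability $p\mapsto p^2+(1-p)^2$ of $Q^T$ is \emph{not} monotone---it is large both for $p$ near $0$ and for $p$ near $1$---so on its own an efficiently preparable witness for the total procedure could be concentrated either on high or on low eigenstates of $Q^2$, and running $Q^2$ would not distinguish the two cases. What rescues the argument is precisely the $\LTWO$ promise, which removes the low eigenspace when $x\in L$ and the high eigenspace when $x\notin L$; this breaks the ambiguity and is what lets the single procedure $Q^2$ read off the answer. A secondary point to handle carefully is the threshold at which the hypothesis is invoked: one must check that amplifying $Q^T$ to completeness $1-\epsilon$ keeps the witness relation in $\TFQMA_W$ at that threshold (it does, via conservation of accepting subspaces, Proposition~\ref{Prop:eMaps}), so that $\FBQP_W$ membership yields a witness accepted with probability $\ge 1-\epsilon$ rather than merely $\ge 13/18$, which would give too weak a concentration bound.
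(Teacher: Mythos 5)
Your proposal is correct, and the first half (pass to the single procedure $Q^2$ via Theorem~\ref{Thm:L3L2QMAcapcoQMA}, build the total procedure $Q^T$ with acceptance function $p\mapsto p^2+(1-p)^2$, invoke the hypothesis to get an efficiently preparable $\rho(x)$) coincides with the paper's. Where you genuinely diverge is in how the two proofs extract a $\BQP$ algorithm from $\rho(x)$. The paper applies the hypothesis at the threshold $13/18$ that the construction of $Q^T$ naturally produces, which only yields a weak overlap bound ($\geq 7/27$) of $\rho(x)$ with $\HH_{Q^2}^{[3/4,1]}(x)$; it then compensates by building a second, strongly amplified procedure $Q^P$ from $Q^2$ via Theorem~\ref{Thm:SuperStrongAmpl} (with $f_n(3/4)=6/7$, $f_n(2/3)=1/7$) and concludes through the characterization $\BQP'=\BQP$ of Theorem~\ref{Thm:LPOLY=BQP}. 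You instead push the amplification onto the \emph{total} procedure, boosting $Q^T$ to completeness $1-\epsilon$ before invoking the hypothesis, so that $\rho(x)$ is forced to carry all but $O(\epsilon)$ of its weight on the extreme eigenspaces; the one-sidedness of the $\LTWO$ promise then lets the \emph{unamplified} $Q^2$, run directly on $\rho(x)$, separate the two cases by a constant gap around $1/2$, finished by classical repetition. Your route avoids the second e-map applied to $Q^2$ and the $\BQP'$ detour, at the cost of invoking the hypothesis at threshold $1-\epsilon$ rather than $13/18$ (both are nonstandard thresholds, so neither proof is cleaner on that score). One small repair: Theorem~\ref{Thm-QMA-Amplification} as stated is a statement about language classes and does not by itself assert that amplification preserves the eigenbasis; to get an increasing $\phi$ with $\phi(13/18)\geq 1-\epsilon$ and $\phi(10/18)\leq\epsilon$ you should cite Theorem~\ref{Thm:SuperStrongAmpl} applied to $Q^T$ with $S_n=(0,10/18,13/18,1)$ and $T_n=(0,\epsilon,1-\epsilon,1)$ (the gaps in $S_n$ are constants, so the hypothesis \eqref{Eq:pbound} holds), or Theorem~\ref{Thm:GAmpStrongReduc} with a threshold function. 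With that citation fixed, your quantitative claims ($w\leq\epsilon/(1-\epsilon)$, acceptance $\geq(2/3)(1-2\epsilon)$ versus $\leq 1/3+2\epsilon/3$) all check out.
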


\begin{proof}
{\bf Step 1: Trivial direction.}
It follows from Definition$~\ref{def:BQP}$ that $\BQP = \coBQP$. Therefore $\BQP 
\subseteq \QMA \cap \coQMA $.

{\bf Step 2 : Procedure $Q^2$.}
We will show that under the condition $\TFQMA_W(a) \subseteq \FBQP_W(a)$, it holds that $\QMA \cap \coQMA \subseteq \BQP$.

Let $L\in \QMA \cap \coQMA$. Then there exists an 
$(2/3,1/3; 2/3,1/3)$--Quantum Verification Procedure  $Q^2=\{Q^2_{n} : n \in \N\}$  such that
 for every $x$ of length $n$,  either both 
Eqs. \eqref{QMAcoQMA4_1} and  \eqref{QMAcoQMA4_2}, or 
 both 
Eqs. \eqref{QMAcoQMA4_3} and  \eqref{QMAcoQMA4_4} hold.

Denote by
 $B_{Q^2}(x)=\{\vert \psi^2_i\rangle\}$ the eigenbasis of $Q^2$ for $x$ and by 
 $p_i= \Pr [Q^2_{n}(x,\vert \psi^2_i \rangle)
=1]$ the acceptance probability of $\vert \psi^2_i\rangle$.

Note that the definition of $(2/3,1/3; 2/3,1/3)$--procedures implies that
\rev{
\begin{eqnarray}
\mbox{if } x\in L&\mbox{ then }&\frac{1}{3} \leq p_i \leq 1\ \forall i \ ,\label{Eq:xinL}\\
& \mbox{ and }&\exists  \ p_i >  \frac{5}{6}\ ; \label{Eq:5/6} \\
\mbox{if }  x\notin L &\mbox{ then }&  \ 0 \leq p_i \leq \frac{2}{3} \ \forall i \ ,\label{Eq:xnotinL}\\
&\mbox{ and }&\exists  \ p_i <  \frac{1}{6}\ . \label{Eq:1/6}
\end{eqnarray}
}

{\bf Step 3: $\frac{13}{18}$-total procedure $Q^T$.}
The next step of the proof is to construct from $Q^2$ a total procedure $Q^T$ which has the   same eigenbasis $B_{Q^2}(x)$ as $Q^2$. This will allow us to use the hypothesis that $\TFQMA_W(a) \subseteq \FBQP_W(a)$.

\rev{
For simplicity we construct $Q^T$ exactly as in Step 1 of the proof of Theorem \ref{ThmFQMAcoQMA=TFQMA} (using the construction in the paragraph surrounding Eqs. \eqref{Eq:77}, \eqref{Eq:78}, \eqref{Eq:79}). This yields a $\frac{13}{18}$-total procedure $Q^T=\{Q^T_{n} : n \in \N\}$ 
which has the same eigenbasis $B_{Q^2}(x)$ as $Q^2$.
We denote by 
$p_i^T= \Pr [Q^T_{n}(x,\vert \psi^2_i \rangle)
=1]$ the acceptance probability of $\vert \psi^2_i\rangle$ by procedure $Q^T$.
We have that $p_i^T= \beta(p_i)$ with
\begin{equation}
\beta(p_i)=p_i^2 + (1-p_i)^2\ ,
\end{equation}
see Eq. \eqref{Eq:78}.
}

The hypothesis $\TFQMA_W(a) \subseteq \FBQP_W(a)$ implies that there exists an efficiently preparable family of density  matrices $\{\rho(x)\}$
such that for all $x$
\begin{eqnarray}
\Pr [Q^T_{n}(x,\rho(x))=1]\geq \beta\left(\frac{5}{6}\right) =\frac{13}{18}\ .\ \label{Eq:extraTFQMA}
\end{eqnarray}

\rev{
However Eq. \eqref{Eq:extraTFQMA}
does not tell us about the overlap of $\rho$ with the accepting space
$\HH_{Q^T}^{\geq 13/18}(x)=\HH_{Q^2}^{[0,1/6]\cup [5/6,1]}(x)$. In fact this overlap could be exponentially small (see Example \ref{Example:No1} for an explanation). But if we consider a slightly large space, for instance $\HH_{Q^2}^{[0,1/4]\cup [3/4,1]}(x)$, then the overlap of $\rho$ with this larger space can be lower bounded. We proceed to compute such a lower bound.

In what follows we take $x\in L$. In this case $\HH_{Q^T}^{\geq 13/18}(x)=\HH_{Q^2}^{[5/6,1]}(x)$ (the accepting probabilities are all larger than $1/3$, see Eq. \eqref{Eq:xinL}), and we will lower bound the overalp of $\rho$ with $\HH_{Q^2}^{ [3/4,1]}(x)$.
} 

We denote $\rho_{ij}=\langle \psi^2_i\vert \rho \vert \psi^2_j \rangle$ the matrix elements of $\rho$ in the eigenbasis of $Q^2$. Because there is no interferences between eigenbasis states, we have
\begin{eqnarray}
\Pr [Q^2_{n}(x,\rho(x))&=&
\sum_i p_i \rho_{ii}\ ,\nonumber\\
\Pr [Q^T_{n}(x,\rho(x))&=&
\sum_i \beta (p_i)\rho_{ii}\ .
\end{eqnarray}

Hence, for $x\in L$, we have
\begin{eqnarray}
\beta\left(\frac{5}{6}\right)&\leq & 
\Pr [Q^T_{n}(x,\rho(x)) =1 ]\nonumber\\
&=&
 \sum_{i : \frac{1}{3} \leq p_i < \frac{3}{4}} \beta (p_i)\rho_{ii}
 +
 \sum_{i : \frac{3}{4}\leq p_i \leq 1} \beta (p_i)\rho_{ii}
\nonumber\\
&\leq & 
 \sum_{i : \frac{1}{3} \leq p_i < \frac{3}{4}} \beta \left(\frac{3}{4}\right)\rho_{ii}
 +
 \sum_{i : \frac{3}{4}\leq p_i \leq 1} \beta (1)\rho_{ii}
\nonumber\\
&=&  \beta\left(\frac{3}{4}\right)
+ \left( 1 - \beta\left(\frac{3}{4}\right) \right) \sum_{i : \frac{3}{4}\leq p_i \leq 1} \rho_{ii}\nonumber
\end{eqnarray}
where we have used Eq. \eqref{Eq:xinL}, $\beta(1)=1$,   and the fact that over the interval $ [1/3,3/4]$, $\beta(p)$ has its maximum at $3/4$: $\max_{p\in [1/3,3/4]} \beta(p)=\beta(3/4)$.

It then follows that 
\begin{equation}
\sum_{i : \frac{3}{4}\leq p_i \leq 1} \rho_{ii} \geq 
\frac{\beta\left(\frac{5}{6}\right) - \beta\left(\frac{3}{4}\right)}{1 - \beta\left(\frac{3}{4}\right)} =\frac{7}{27}\ .
\label{Eq:7/27}
\end{equation}

\rev{
This is the desired lower bound on the overlap of $\rho$ with $\HH_{Q^2}^{ [3/4,1]}(x)$.
} 

{\bf Step 4: Procedure $Q^P$.}

\rev{
Unfortunately Eq. \eqref{Eq:7/27} cannot be used to directly prove that procedure $Q^2$ belongs to \BQP. We therefore introduce an amplified version of $Q^2$ for which Eq. \eqref{Eq:7/27}  will be sufficient.
} 

We use Theorem \ref{Thm:SuperStrongAmpl} which implies that there exists a
quantum verification procedure $Q^P$ such that $Q^2$
e--maps to $Q^P$ and such that the polynomial time  computable strictly increasing functions $\{f_n\}$ that define the reduction satisfy
\begin{eqnarray}
f_n(3/4)&=& \frac{6}{7}\ ,\label{Eq:6/7}\\
f_n(2/3)&=& \frac{1}{7}\ .\label{Eq:1/7}
\end{eqnarray}

\rev{
Note that $Q^P$ has the same eigenbasis $B_{Q^2}(x)$ as $Q^2$ (and thus as $Q^T$).
We denote by 
$p_i^P= \Pr [Q^P_{n}(x,\vert \psi^2_i \rangle)
=1]$ the acceptance probability of $\vert \psi^2_i\rangle$ by procedure $Q^P$. 
We have that $p_i^P = f_n(p_i)$.
} 

We bound the success probability of $Q^P_n (x,\rho(x))$ when $x\in L$ and $\rho(x)$ is the efficiently preparable state that satisfies Eq. \eqref{Eq:extraTFQMA}:
\begin{eqnarray}
\Pr [Q^P_{n}(x,\rho(x))=1] &=& \sum_i  f_n(p_i)  \rho_{ii} \nonumber\\
&\geq & \sum_{i : \frac{3}{4}\leq p_i \leq 1}  f_n(p_i)  \rho_{ii} \nonumber\\
&\geq & \sum_{i : \frac{3}{4}\leq p_i \leq 1} \frac{6}{7} \rho_{ii} \nonumber\\
&\geq & \frac{7}{27} \frac{6}{7} = \frac{2}{9}
\end{eqnarray}
where we have used Eq. \eqref{Eq:7/27}, Eq. \eqref{Eq:6/7}, and the fact that $f_n$ is strictly increasing. 

Let us bound the success probability of $Q^P_n(x,\rho)$ when $x\notin L$ and $\rho$ is an arbitrary state:
\begin{eqnarray}
\Pr [Q^P_{n}(x,\rho)=1] &=& \sum_i  f_n(p_i)  \rho_{ii} \nonumber\\
& =  & \sum_{i : 0 \leq p_i \leq \frac{2}{3}}  f_n(p_i)  \rho_{ii} \nonumber\\
& \leq   & \sum_{i : 0 \leq p_i \leq \frac{2}{3}}  f_n(\frac{2}{3})  \rho_{ii} \nonumber\\
& =  & \frac{1}{7}
\end{eqnarray}
where we have used Eq. \eqref{Eq:xnotinL}, Eq. 
 \eqref{Eq:1/7}, and the fact that $f_n$ is strictly increasing.

Therefore $Q^P$ is a $(2/9, 1/7)$--procedure that defines the same language $L$ as $Q^2$. Furthermore, when $x\in L$ there exists an efficiently preparable witness. Thus $Q^P$
satisfies the conditions of Definition$~\ref{Deff:Lpoly}$, and $L\in \BQP'(\frac{2}{9},\frac{1}{7})=\BQP$.
\end{proof}

\section{If there exists a $\QMA$ complete problem that robustly reduces to a problem in $\TFQMA$, then  $\QMA = \QMA \cap \coQMA $.}\label{Sec:RobReducQMATFQMA}

\begin{thm}
\label{ThmQMA=QMAcoQMA}
{\bf } 
Let $a' : \N \rightarrow [0,1]$ be a polynomially time computable function.
Let $a,b$ be functions as in Definition$~\ref{def:qvpab}$.
Suppose that there exists 1) an $(a,b)$-quantum verification procedure
$Q=\{Q_{n} : n \in \N\}$ for  a $\QMA$-complete language $L$,
and 2) an $a'$--total quantum verification procedure
$Q^{(T)}=\{Q^{(T)}_{n} : n \in \N\}$, and
3)  $\epsilon \in 1 /\poly$ such that there exists a robust reduction from
 $Q$  to $Q^{(T)}$ with parameter $\epsilon\in 1/\poly$, then $\QMA = \QMA \cap \coQMA$.
\end{thm}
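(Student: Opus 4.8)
The plan is to reduce the statement to showing that the $\QMA$-complete language $L$ lies in $\coQMA$. Since $L$ is already the language of the $(a,b)$-procedure $Q$, we have $L\in\QMA$, so this would give $L\in\QMA\cap\coQMA$. To pass from here to $\QMA\subseteq\QMA\cap\coQMA$ I would use $\QMA$-completeness together with the observation that $\coQMA$ is closed under the many-one part of the reductions of Definition~\ref{def:ReductionQ}: for any $L'\in\QMA$ with procedure $Q'$, completeness (Definition~\ref{def:Completness}) supplies a reduction $(f',\Phi')$ from $Q'$ to $Q$, and conditions~\ref{Def:ReducQ_1} and~\ref{Def:ReducQ_2} together give $x\in L'\Leftrightarrow f'(x)\in L$, hence $\overline{L'}=f'^{-1}(\overline L)$. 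Composing a $\QMA$-verifier for $\overline L$ with $f'$ then certifies $\overline{L'}\in\QMA$, i.e.\ $L'\in\coQMA$. The reverse inclusion $\QMA\cap\coQMA\subseteq\QMA$ is immediate.

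The core task is to build a $\QMA$ verification procedure $R$ for $\overline L$. Morally, a witness that $x\notin L$ should be a state $\vert\psi'\rangle$ that is a good witness for the total procedure $Q^{(T)}$ on $f(x)$ and whose image $\Phi(x,\vert\psi'\rangle)$ is \emph{rejected} by $Q$: totality guarantees such a state exists when $x\notin L$, while the robust reduction forbids it when $x\in L$. The difficulty is precisely the non-commutativity emphasised in the introduction --- these are two incompatible measurements on one copy of $\vert\psi'\rangle$ --- and I would circumvent it with the nondestructive machinery of Section~\ref{Sec:NonDest}. Using Theorem~\ref{Thm:ExistNonDest} (which invokes Theorem~\ref{Thm:SuperStrongAmpl}), replace $Q^{(T)}$ by a nondestructive procedure $\tilde Q^{(T)}$ e-mapped from it through a strictly increasing $h_n$ shaped as an exponentially steep step, with $h_n(a'-\epsilon/2)=\delta$ and $h_n(a')=1-\delta$ for an exponentially small $\delta$ (the gap $\epsilon/2\in 1/\poly$ makes this admissible). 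The procedure $R$ on input $(x,\vert\psi'\rangle)$ runs $\tilde Q^{(T)}$ on $(f(x),\vert\psi'\rangle)$ nondestructively and rejects if it rejects; otherwise it applies $\Phi(x,\cdot)$ to the post-measurement state, runs $Q$, and accepts iff $Q$ rejects. By Theorem~\ref{Thm:PropNonDest} the acceptance probability of $R$ on $\vert\psi'\rangle$ is $\tilde\alpha\,(1-\gamma)$, where $\tilde\alpha$ is the acceptance probability of $\tilde Q^{(T)}$ on $\vert\psi'\rangle$ and $\gamma$ is the $Q$-acceptance probability of $\Phi(x,\hat\phi)$ for the normalized post-accept state $\hat\phi$.

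For completeness I would take $x\notin L$ and use totality: there is an eigenstate $\vert\chi_*\rangle$ of $Q^{(T)}$ with acceptance $\ge a'$, so $\tilde\alpha\ge h_n(a')=1-\delta$; nondestructiveness leaves $\vert\chi_*\rangle$ intact on acceptance, and since $x\notin L$ every state, in particular $\Phi(x,\vert\chi_*\rangle)$, is accepted by $Q$ with probability $\le b$. Hence $R$ accepts with probability at least $(1-\delta)(1-b)$. For soundness I would take $x\in L$ and any $\vert\psi'\rangle$. If $\tilde\alpha<\theta$, with $\theta\in 1/\poly$ fixed (say $\theta=\epsilon$), then $R$ accepts with probability $\le\tilde\alpha<\theta$. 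If $\tilde\alpha\ge\theta$, the steepness of $h_n$ (with $h_n\le\delta$ below $a'-\epsilon/2$) forces $\hat\phi$ to concentrate on eigenstates of acceptance $>a'-\epsilon/2$; a short calculation with the eigenbasis weights gives that the $Q^{(T)}$-acceptance of $\hat\phi$ is at least $(a'-\epsilon/2)(1-\delta/\theta)>a'-\epsilon$ once $\delta/\theta$ is exponentially small. The robust reduction --- condition 2' of Definition~\ref{def:Eps-Reduction} applied to the state $\hat\phi$ --- then yields $\gamma\ge a$, so $R$ accepts with probability $\le 1-a$. Thus $R$ is a verification procedure for $\overline L$ with completeness $(1-\delta)(1-b)$ and soundness $\max\{\theta,1-a\}$, whose gap is $(a-b)-O(\delta)\in 1/\poly$ by the defining gap of $Q$; $\QMA$ amplification (Theorem~\ref{Thm-QMA-Amplification}) places $\overline L$ in $\QMA$, so $L\in\coQMA$.

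The main obstacle is the soundness step: I must ensure that whenever the first test accepts with non-negligible probability, the \emph{disturbed} post-measurement state $\hat\phi$ is still a good enough witness for $Q^{(T)}$ that the robust reduction applies to it and certifies $\Phi(x,\hat\phi)$ as a genuine $Q$-witness. This is exactly where all three ingredients are needed at once: the steep amplification of $Q^{(T)}$ concentrates the accepting operator onto high-acceptance eigenstates, the nondestructive property pins the post-accept state down to $\hat\phi$ so its acceptance can be lower bounded, and the inverse-polynomial slack $\epsilon$ of the robust reduction absorbs the exponentially small errors introduced by amplification. A plain (non-robust) reduction would fail here, since it controls only exact witnesses of $Q^{(T)}$ and says nothing about the nearby states produced by the measurement.
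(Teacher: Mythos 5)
Your proposal is correct and builds essentially the same co-verifier as the paper: a nondestructive, e-mapped amplification $\tilde Q^{(T)}$ of the total procedure is run first on $(f(x),\vert\psi'\rangle)$, then $\Phi$ and (an amplification of) $Q$ are run on the post-measurement state, accepting iff $Q$ rejects; completeness uses totality plus nondestructiveness on an eigenstate, and soundness on high-acceptance eigenstates uses condition 2' of the robust reduction exactly as in the paper. The one place where you genuinely depart from the paper's proof is the soundness bound on \emph{arbitrary} input states, which is the delicate step because the eigenbasis of $Q^{(T)}$ is not an eigenbasis of the composite procedure. The paper (Step 5) first bounds the acceptance of $Q^{CO}$ on each eigenstate by $\eta=2^{-m-2}$, then uses positivity of the accepting POVM element $M_1$ to bound its off-diagonal entries by $\eta$, and concludes via $\Tr M_1^2\leq \eta^2 2^{2m}$ and Cauchy--Schwarz; this is why the paper needs $\eta$ exponentially small in the witness size $m$. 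You instead condition on the first test accepting, invoke Theorem \ref{Thm:PropNonDest} to write the post-measurement pure state $\hat\phi$ explicitly, show that whenever the first-stage acceptance $\tilde\alpha$ exceeds an inverse-polynomial threshold $\theta$ the state $\hat\phi$ has $Q^{(T)}$-acceptance at least $a'-\epsilon$ (your weight calculation $(a'-\epsilon/2)(1-\delta/\theta)\geq a'-\epsilon$ is right for exponentially small $\delta$), and apply condition 2' directly to $\hat\phi$. This buys a more direct argument that leans on the explicit form of the post-measurement state rather than on an operator-norm bound, and it does not require the first-stage error to be exponentially small in $m$, only in $1/\theta$ and $1/\epsilon$. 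The only loose end is parameter bookkeeping at the very end: with the unamplified $Q$ in the second stage, the soundness $\max\{\theta,1-a\}$ versus completeness $(1-\delta)(1-b)$ need not have an inverse-polynomial gap for your stated choice $\theta=\epsilon$ (e.g.\ if $a$ is close to $1$ or $\epsilon$ is large); this is fixed cosmetically by also replacing $Q$ with an e-mapped $(1-\eta,\eta)$-amplification $\tilde Q$ in the second stage, as the paper does, after which any constant $\theta$ works and Theorem \ref{Thm-QMA-Amplification} finishes the argument. Your reduction of the full statement to ``$\bar L\in\QMA$'' via $\QMA$-completeness and the observation that reductions satisfy $x\in L'\Leftrightarrow f'(x)\in L$ is also correct and matches what the paper leaves implicit.
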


\begin{proof}
 {\bf Step 0: idea of the proof.}

We denote by $(f,\Phi)$ the pair of a polynomial time  computable function and an efficiently implementable channel that define the robust reduction of $Q$ to $Q^{(T)}$, and denote by $\epsilon\in 1/\poly$ the parameter of the robust reduction, see  Definition $~\ref{def:Eps-Reduction}$.

We denote by $L$ the language associated to $Q$. 
We denote by $\bar L$ the complement of $L$. We will show that under the hypothesis of Theorem \ref{ThmQMA=QMAcoQMA},  $\bar L \in \QMA$. To this end we will construct a procedure $Q^{CO}$  which rejects on  all states when $x\in L$, but which accepts on some states when $x\in \bar L$.

The basic idea behind the construction of $Q^{CO}$ is similar to the  proof in the classical case given in \cite{MP91}. Namely, given $x$ and given an input state $\vert \psi \rangle$, we first check whether $\vert \psi \rangle$ is a witness for $Q^{(T)}(f(x))$. If this check is successful, we check if $Q(x, \Phi (\vert \psi \rangle))$ rejects in which case we accept, while if $Q(x, \Phi (\vert \psi \rangle))$ accepts we reject. 

That $Q^{CO}$ thus constructed should have the desired properties follows from the following reasoning. If $x \in L$ then either $\vert \psi \rangle$ is not a witness for $Q^{(T)}(f(x))$ and consequently $Q^{CO}$ rejects, or $\vert \psi \rangle$ is  a witness for $Q^{(T)}(f(x))$, in which case $Q(x, \Phi (\vert \psi \rangle))$ will accept, and therefore $Q^{CO}$ again rejects. On the other hand, if $x \in \bar L$ then the second step (testing whether one has a witness for $Q$) will always fail, and in this case $Q^{CO}$ accepts. Because $Q^{(T)}$ is a total procedure, one can always find a state on which  the first step accepts. Therefore, for $x \in \bar L$ there exists at least one a state on which $Q^{CO}$ accepts.

However there are several complications in the quantum case. We have prepared for these complications  by our earlier theorems and definitions. First it is not obvious that we can use the input state twice. We solve this by replacing $Q^{(T)}$ by a nondestructive version of $Q^{(T)}$ (Theorem \ref{Thm:ExistNonDest}).

Second, we need to be able to amplify the completeness and soundness probabilities without distorting the structure of the witness states. This is achieved through e-maps (Theorems \ref{Thm:SuperStrongAmpl} and \ref{Thm:ExistNonDest}). \rev{The amplified nondestructive version of  $Q^{(T)}$ is denoted $\tilde Q^{(T)}$. We also introduce an amplified version of $Q$, which we denote $\tilde Q$.} 

Thirdly, we need to deal with the eigenstates of $Q^{(T)}$ which have acceptance probability equal to $a'-\delta$, with $\delta$ exponentially small (see Example \ref{Example:No1}). This is addressed by requiring in the statement of the Theorem that the reduction from $Q$ to $Q^{(T)}$ is a robust reduction (Definition \ref{def:Eps-Reduction}). This ensures that eigenstates with acceptance probability equal to $a'-\delta$ (if they exist) can be treated in the same way as the real witnesses (i.e. the eigenstates that have acceptance probability greater or equal to $a'$).

 {\bf Step 1: preliminary definitions.}

Procedure $Q_n$ has as input a classical bit string of length $n$, a witness state of $m(n)$ qubits, and ancilla state of $k(n)$ qubits. 
We denote
\begin{equation}
\eta = 2^{-m-2}\ .
\end{equation}

Using Theorem$~\ref{Thm:ExistNonDest}$ we construct a 
$(1-\eta)$--total nondestructive procedure $\tilde Q^{(T)}$ 
such that $Q^{(T)}$ e--maps to $\tilde Q^{(T)}$, and such that
 the polynomial time  computable strictly increasing functions $\tilde f^{(T)}_n$ that define the e--map satisfy $\tilde f^{(T)}_n(a')=1-\eta$, $\tilde f^{(T)}_n(a'-\epsilon)=\eta$.

 Using Theorem$~\ref{Thm:SuperStrongAmpl}$ we construct a  $(1-\eta, \eta)$--procedure 
 $\tilde Q$ such that $Q$ e--maps to $\tilde Q$, and such that
 the polynomial time  computable strictly increasing functions $\tilde f_n$ that define the e--map satisfy $\tilde f_n(a)=1-\eta$, $\tilde f_n(b)=\eta$.
 
  {\bf Step 2: procedure $Q^{CO}$.}
  
  We define procedure  $Q^{CO}$ as follows:
  \begin{enumerate}
  \item Denote by $(x,\vert \psi\rangle)$ the input to $Q^{CO}$.
  \item Run
$\tilde Q^{(T)}$ on input $(f(x),  \vert \psi\rangle)$.
\item
If $\tilde Q^{(T)}(f(x),  \vert \psi\rangle) = 0$ (i.e. $\tilde Q^{(T)}$ rejects), output $0$ (i.e. $Q^{CO}$ rejects).
\item 
If $\tilde Q^{(T)}(f(x),  \vert \psi\rangle) = 1$,
then apply $\Phi$ to the quantum output of $\tilde Q^{(T)}$. Denote by $\vert \tilde \psi\rangle$ the state so obtained. 
\item
Run $\tilde Q$ on input $(x, \vert \tilde \psi\rangle)$. 
\item
If $\tilde Q_n(x,  \vert \tilde \psi\rangle) = 1$ (i.e. $\tilde Q$ accepts), output $0$ (i.e. $Q^{CO}$ rejects).
\item
If $\tilde Q_n(x,  \vert \tilde \psi\rangle) = 0$ (i.e. $\tilde Q$ rejects), output $1$ (i.e. $Q^{CO}$ accepts).
 \end{enumerate}
 
 We now show that 
 $Q^{CO}$ is a 
 $((1-\eta)^2, 1/4)$--procedure, and that
  the language associated to $Q^{CO}$ is $\bar L$, which proves the result.

  {\bf Step 3: Existence of a witness when $x\in \bar L$.}
  
  Take any $x\in \bar L$. 
  Let $\vert \psi_i\rangle$ be an eigenstate of $Q^{(T)}$ with acceptance probability $p_i\geq a'$. Such an eigenstate exists, since $Q^{(T)}$ is an $a'$--total procedure. 
  We will show that $Q^{CO}$ accepts on input $(x,\vert \psi_i\rangle)$ with probability $\geq (1-\eta)^2$.
  
  To this end note that on this input,
  step 4 of procedure $Q^{CO}$ succeeds with probability $\geq 1 -\eta$, and that the quantum output at step 4 is $\vert \psi_i\rangle$, i.e. it is not affected running $\tilde Q^{(T)}$ in step 2.
  This is where the nondestructiveness of $\tilde Q^{(T)}$ is used.

However $x\in \bar L$. Therefore in step 7, $\tilde Q$ outputs $0$ with probability at least $1-\eta$.
Hence the overall probability that $Q^{CO}$  accepts on input $(x, \vert \psi_i\rangle)$ is at least $(1-\eta)^2$.

  {\bf Step 4: soundness probability of  $Q^{CO}$ on eigenstates of  $Q^{(T)}$.}
  
Take any $x\in L$. 
Let $\vert \psi_i\rangle$ be an eigenstate of $Q^{(T)}$ with acceptance probability $p_i$.
Consider the acceptance probability of 
$Q^{CO}$  on input $(x,\vert \psi_i\rangle)$.

If $p_i < a' -\epsilon$, then $Q^{CO}$ rejects with probability at least $1-\eta$ at step 3.

If $p_i \geq  a' -\epsilon$, then $Q^{CO}$ either rejects at step 3, or passes to step 5. In the latter case, the input of $\tilde Q$ is $(x, \Phi(x,\vert \psi_i\rangle))$, where
$\Phi(x,\vert \psi_i\rangle)$ is a witness for $\tilde Q$ for $x$ (see definition of  robust reductions). 
That is $\tilde Q$ will accept on input $(x, \Phi(x,\vert \psi_i\rangle))$ with probability at least $1-\eta$. Hence the 
probability that $Q^{CO}$ rejects at step 6 is $\geq 1 -  \eta$.

Thus on all  eigenstates $\vert \psi_i\rangle$ of  $Q^{(T)}$, $Q^{CO}$ rejects with probability at least $1- \eta$.

Note that this is the step where we use that the reduction from $Q$ to $Q^{(T)}$ is robust with parameter $\epsilon=1/\poly$. If we set $\epsilon$ to zero, then in step 4 we have no control over how the reduction acts on 
eigenstates of $Q^{(T)}$ which have acceptance probability equal to $a'-\delta$ with $\delta$ exponentially small. 

 {\bf Step 5: soundness probability of  $Q^{CO}$ on arbitrary states.}

\rev{ Take any $x\in L$.} 
 Let us now consider the probability that $Q^{CO}$ rejects on an arbitrary input state $\rho$.
The acceptance probability of $Q^{CO}$ can be written as
\begin{eqnarray}
 \Pr [Q^{CO}_{n}(x,\rho )=1]
 &=&
 \Tr (M_1 \rho ) 
\end{eqnarray}
where $M_1$ is the POVM element corresponding to $Q^{CO}$ accepting.
Note that $M_1$ is a positive  matrix of size $2^m \times 2^m$.

 We have established at Step 4 that for all $i$
\begin{equation}
0\leq \langle \psi_i \vert  M_1 \vert \psi_i \rangle \leq \eta\ .
 \end{equation}
 It then follows 
that for all $i,j$
  \begin{equation}
\vert \langle \psi_i \vert M_1  \vert \psi_j \rangle \vert \leq \eta\ .
\label{Eq:M_1ij}
 \end{equation}
(To see this, consider the restriction of $M_1$ to the 2-dimensional space spanned by $\{ \vert \psi_i\rangle , \vert \psi_j \rangle\}$. Restricted to this space, $M_1$ is a $2\times 2$ positive matrix with diagonal elements bounded by $\eta$. The positivity of this $2\times 2$  matrix implies that its determinant is positive, which implies Eq. \eqref{Eq:M_1ij}).
 
  As a consequence $\Tr M_1^2 \leq \eta^2 2^{2m}\leq 2^{-4}$.
 
 Using the Cauchy-Schwarz inequality
  we have
 \begin{eqnarray}
  \Pr [Q^{CO}_{n}(x,\rho )=1]^2 &=&
\vert  \Tr [ M_1 \rho] \vert^2\nonumber\\
&\leq& \Tr [ M_1^2] \Tr [  \rho^2]\nonumber\\
&\leq&  \Tr [ M_1^2] \leq 2^{-4} \ .
 \end{eqnarray}
 Consequently 
 $ \Pr [Q^{CO}_{n}(x,\rho )=1] \leq 1/4$.

\end{proof}

\section{Conclusion}

In the present work we have shown that 
that $\TFQMA = {\rm F}(\QMA \cap \coQMA)$; that
 if $\FBQP = \TFQMA$, then $\BQP = \QMA \cap \coQMA$; and 
 that if there is a $\QMA$ complete problem that strongly reduces to a problem in
$\TFQMA$, then $\QMA \cap \coQMA = \QMA$. 

These results are not very surprising, as they are immediate generalisations of the analog classical results. However they are significantly more complicated to prove than the classical results. 

 In the appendix we sketch how similar results can be obtained for the classical probabilistic classes $\MA$ and $\coMA$. However  since one can always take $\MA$ to have one sided error  \cite{FGM89}, slightly stronger results should hold in this case. We leave the detailed results to further work.
 
 It would be interesting to better understand the inclusions
 \begin{equation}
\BQP \subseteq \QMA \cap \coQMA \subseteq \QMA\ .
\label{Eq:SubseteqQMAcoQMA_B}
\end{equation}
In one direction, exhibiting problems in $\QMA \cap \coQMA$ that  do not seem to be in $\BQP$, and problems in $\QMA$ that do not seem to be in $\QMA \cap \coQMA$, would provide evidence that the inclusions are strict. 

In the statement of Theorem \ref{ThmQMA=QMAcoQMA} we use the notion of robust reduction, see Definition \ref{def:Eps-Reduction}. It may be that the theorem can be proven using the weaker notion of reduction given in 
definition \ref{def:ReductionQ}, but this probably requires new proof techniques. Alternatively, it is possible that the notion of robust reduction is the natural notion. Some evidence for this is given at the end of subsection \ref{subsec:robreduc}.

\appendix{\large\bf{Appendix: $\MA$ and $\coMA$.}}

The questions we ask and answer in the present work concerning $\TFQMA$ and $\QMA \cap \coQMA$ can also be asked about the analog probabilistic classical classes
 $\TFMA$ and $\MA \cap \coMA$.
 
 However in \cite{FGM89} it was shown that one can always take $\MA$ to have one sided error only. As a consequence the results we obtain for $\TFQMA$ and $\QMA \cap \coQMA$ can certainly be strengthened in the case of $\TFMA$ and $\MA \cap \coMA$. Investigating this would require further work, and is not reported in the present paper. 
 
 Nevertheless most of our proof techniques are in fact classical, and can be easily be adapted to the probabilistic classical classes $\TFMA$ and $\MA \cap \coMA$. In the following paragraphs we provide a dictionary allowing to pass from the classical to the quantum case. We also point out what parts of our arguments are classical, and what parts are intrinsically quantum. 
\revv{The aim of this qualitative appendix is to help to the reader with the main text, as the arguments concerning the quantum classes will be easier to follow when comparing with the corresponding probabilistic classical classes.}
 
 \begin{deff}{\bf Probabilistic Verification Procedure.} 
\label{def:pvp}
 A probabilistic verification procedure is a polynomial time uniform family of classical circuits $C=\{C_{n} : n \in \N\}$ 
with
$C_n$ taking as input $(x, y,z )$, 
where 
 $x \in \{0,1\}^n$ is a binary string of length $n$,
$y$ is a binary string of length $m(n)$ with  $m= m(n)$, $z$ is a string of random bits  (identically independently distributed with bias $1/2$) of length $l(n)$ with $l=l(n) \in \poly$.
For fixed value of $z$, the output of the circuit of $C_n$ is a  bit  which we denote by $C_{n}(x,y,z)$. 
We denote by $p_y=\Pr_z [C_{n}(x, y,z)=1]$.
\end{deff}

Using this definition of probabilistic verification procedures, classes $\MA$, $\coMA$, $\MA \cap \coMA$ are readily defined. Thus for instance languages $L \in \MA$ are those for which there exists $C$ such that if $x \in L$ then 
$\exists y$ such that  $p_y \geq 2/3$ and if  $x \not\in L$ then 
$\forall y$ we have  $p_y \leq 1/3$.

For fixed $x$, the quantum analog of the couples $(y,p_y)$ are the eigenstates and their acceptance probabilities $(\vert \psi_i\rangle, p_i)$. Inputing an arbitrary pure state $\vert \psi \rangle$ or mixed state to a quantum verification procedure is analogous in the classical case to inputing a probabilistic distribution over $y$, see Theorem \ref{Thm:BlockStructure}.

The analog of the accepting and rejecting subspaces, see Definition \ref{Deff:Accepting_and_rejecting_subspaces}, are the following sets
\begin{eqnarray}
S_{C}^{\geq a}(x)&=&
\left\{ y  \ :\   \Pr_z [C_{n}(x, y,z)=1] \geq a  \right\}\ ,\ \\
S_{C}^{\leq b}(x) &=& 
\left\{ y  \ :\   \Pr_z [C_{n}(x, y,z)=1] \leq b  \right\}\ .\ 
\end{eqnarray}

The class Functional $\MA$ ($\FMA$) can be defined in terms of these sets, as well as the class Total Functional $\MA$ ($\TFMA$). (Note that functional classes based on witnesses are not relevant in the classical case, as they  would correspond to probabilistic distribution over $y$'s.).

In the classical case, for fixed $x$ and $y$, one can sample repeatedly from the distribution $p_y$. Remarkably in the quantum case this is also possible, even though cloning of quantum states is impossible\cite{MW05}.

For fixed $x$ and $y$, if one wants to modify the acceptance probability $p_y$, a simple and natural procedure is  to sample $N$ times from the distribution $p_y$, yielding a Binomial distribution $B(N,p_y)$. If one obtains $k$ heads, one tosses a new coin with bias $g(k)$. This yields a new acceptance probability $p'_y = P_g( p_y)$ (where the function $P_g$ depends on $N$ and the choice of function $g(k)$).
This classical procedure can be implemented quantumly,  in which case we call it an iterative procedure, see Definition \ref{Deff:GAmpProc}.

 We show in Theorem \ref{Thm:GAmpStrongReduc} that if $g$ is increasing and non constant, the function relating  the new and old acceptance probabilities $p'_y(p_y)$ is strictly increasing. We then show in Theorem \ref{Thm:SuperStrongAmpl} that by taking $N$ sufficiently large, and appropriate $g$, the strictly increasing function $p'_y(p_y)$ can be made to pass through a finite number of points. These two results are purely classical, and apply both the classical and quantum iterative procedures.
 
 The notion of non destructive procedure, Definition \ref{def:NonDest-qvp}, is non trivial in the quantum case. In the classical case it is trivial, as one can copy at will the input.

In section \ref{Sec:EquivQMAcapcoQMA} we give three definitions of $\QMA \cap \coQMA$ and show that they are equivalent. Exactly the same definitions can be given for $\MA \cap \coMA$, and the proofs of equivalence also holds in the classical case. In particular $\MA \cap \coMA$ can be defined in terms of a single probabilistic verification procedure. The class Functional $\MA \cap \coMA$ can then be defined in terms of the sets
\begin{eqnarray}
&& S_{C}^{[0,\frac{1-a'}{2}]\cup[\frac{1+a}{2}, 1]}
(x)=\nonumber\\&&
\quad\quad \left\{ y  \ :\    \Pr_z [C_{n}(x, y,z)=1] \in \frac{1-a'}{2}]\cup[\frac{1+a}{2}, 1]
\right\}\ ,\nonumber\\
&& S_{C}^{[ \frac{1-a'}{2} , \frac{1+a}{2}]}(x) =
\nonumber\\&&
\quad\quad
\left\{ y  \ :\    \Pr_z [C_{n}(x, y,z)=1] \in [ \frac{1-a'}{2} , \frac{1+a}{2}] \right\}\ .
\end{eqnarray}
as in Definition \ref{Deff:FQMAcapCOQMA}.

One can then prove  analogs of the key results of sections  \ref{Sec:TFQMAequal}, \ref{Sec:InclFBQP}, and \ref{Sec:RobReducQMATFQMA}, namely  
 that $\TFMA$ equals Functional $\MA \cap \coMA$, that if $\TFMA$ is included in $\FBPP$ then $\MA \cap \coMA$ equals $\BPP$, and that if there exists a $\MA$ complete problem that robustly reduces to a problem in $\TFMA$, then  $\MA = \MA \cap \coMA $. Indeed the proofs of all these results are in fact classical.

\bibliographystyle{plain}

\end{document}